\newcommand{\bN}{\mathbb{N}}
\newcommand{\bZ}{\mathbb{Z}}
\newcommand{\bQ}{\mathbb{Q}}
\newcommand{\bR}{\mathbb{R}}
\newcommand{\bC}{\mathbb{C}}
\newcommand{\bF}{\mathbb{F}}
\newcommand{\bP}{\mathbb{P}}
\newcommand{\bA}{\mathbb{A}}
\newcommand{\mB}{\mathcal{B}}
\newcommand{\mV}{\mathcal{V}}
\newcommand{\mX}{\mathcal{X}}
\newcommand{\mM}{\mathcal{M}}
\newcommand{\mO}{\mathcal{O}}
\newcommand{\p}{\mathfrak{p}}
\numberwithin{equation}{section}
\newcommand\sett[2]{\left\{ #1 \left| \; \vphantom{#1 #2} \right. #2  \right\}}
\newcommand{\set}[1]{\{#1\}}
\newcommand{\edge}[1]{[#1]}
\newcommand{\redge}[1]{[#1)}
\newcommand{\ledge}[1]{(#1]}
\newcommand{\Card}[1]{\left\lvert#1\right\rvert}
\def\mpar{M^{||}}
\def\npar{n^{\parallel}}
\def\na{n_1}
\def\m{n_2}
\newtheorem{thm}{Theorem}[section]
\newtheorem{pro}[thm]{Proposition}
\newtheorem{lem}[thm]{Lemma}
\newtheorem{cor}[thm]{Corollary}
\newtheorem{que}[thm]{Question}
\theoremstyle{Definition}
\newtheorem{dfn}[thm]{Definition}
\newtheorem{alg}[thm]{Algorithm}
\newtheorem{rem}[thm]{Remark}
\theoremstyle{plain}
\def\dist{{\rm dist}}
\def\gp{{G^\parallel}}
\begin{document}

\title{Good Locally Testable Codes}
\author[1]{Irit Dinur \thanks{irit.dinur@weizmann.ac.il}}
\author[2]{Shai Evra \thanks{shai.evra@mail.huji.ac.il}}
\author[2]{Ron Livne \thanks{ron.livne@mail.huji.ac.il}}
\author[1,2]{Alexander Lubotzky \thanks{alex.lubotzky@mail.huji.ac.il}}
\author[2]{Shahar Mozes \thanks{mozes@math.huji.ac.il}}
\affil[1]{Weizmann Institute, Rehovot, Israel}
\affil[2]{Hebrew University, Jerusalem, Israel}
\maketitle

\begin{abstract}
An explicit construction of locally testable codes of constant rate, constant distance and constant number of queries is given.
Hence answering affirmatively the $c^3$-problem.
\end{abstract}

\section{Introduction}

An $[n,k,d]$ binary linear error correcting code is a subspace $C$ of $\bF_2^n$, of dimension $k = \dim C$, and distance $d = \min\{ \mbox{wt}(c) \,:\, 0 \ne c \in C \}$, where $\mbox{wt}(c) = |\{ 1 \leq i \leq n \,|\, c_i \ne 0\}|$ is the Hamming weight. 
Call $\rho = \rho(C) = \frac{k}{n}$, the rate of $C$, and $\delta = \delta(C) = \frac{d}{n}$, the normalized distance of $C$.
A code (or more precisely, a family of codes where $n\rightarrow \infty$) is called good if there exists $\epsilon > 0$ such that $\rho$ and $\delta$ are bounded from below by $\epsilon$.

\begin{dfn}
The code $C$ is called an LTC (locally testable code) if it has a $(q,\kappa)$-tester for $q \in \bN$, $\kappa >0$ independent of $n$, where a  $(q,\kappa)$-tester $T$ is a probabilistic algorithm which given a word $f \in \bF_2^n$, queries $q$ bits from $f$ and outputs Accept or Reject, such that
\begin{itemize}
\item If $f \in C$, then $\mathbb{P}[T \mbox{ Accepts } f] = 1$.
\item If $f \not\in C$, then $\mathbb{P}[T \mbox{ Rejects } f] \geq \kappa \cdot \dist(f,C)$, where $\dist(f,C) = \frac{1}{n}\min\{ \mbox{wt}(f - c) \,:\, c \in C\}$.
\end{itemize}
\end{dfn}

A code is called LDPC (low density parity check) if it is defined by a set of constraints (namely, equations) involving a bounded number of coordinates. Locally testable codes are always LDPC but not vice versa.

In the late 1940's Hamming \cite{Ham} defined the notion of an error correcting code and proved that random linear subspaces are good codes with high probability. In the 1960's Gallager \cite{Gal} defined LDPC codes and showed that a random LDPC code is good with high probability. 
It then took approximately thirty years until an explicit (non-random) construction of good LDPC codes was given by the breakthrough work of Sipser and Spielman \cite{SS} in the 90's, building on the previous work of Tanner \cite{Tan}.

Locally testable codes were defined in the 90's (\cite{BFLS, Aro, RS, GS}) with motivation from the theory of probabilistically checkable proofs, yet the existence of good LTCs remained an outstanding problem, called the $c^3$-problem (constant rate, constant distance and constant number of queries).
Experts have gone back and forth on the question whether such codes can at all exist (see \cite[Conjecture~3.4]{Gol1} and \cite[Section3.3.2]{Gol2}, and see \cite{DELLM} for a detailed history).  
Part of the difficulty comes from the fact that random codes are {\em not} LTC  (see \cite{BHR}).

The goal of this paper is to show:
\begin{thm} \label{thm:main-intro}
There exist explicit good locally testable codes (for every rate $ \rho <1$).
\end{thm}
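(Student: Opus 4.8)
The plan is to realize the codes as Tanner-type codes living on the $2$-cells of a two-dimensional ``square Cayley complex.'' Start from a finite group $G$ together with two symmetric generating multisets $A,B$ of equal constant size $\Delta$, and form the complex $X=X(G,A,B)$ whose vertex set is $G$, whose edges are the pairs $\{g,ag\}$ ($a\in A$, the ``$A$-edges'') and $\{g,gb\}$ ($b\in B$, the ``$B$-edges''), and whose $2$-cells are the squares $\{g,ag,gb,agb\}$ with $a\in A$, $b\in B$. For this to behave like an honest $2$-complex I would impose on $(A,B)$ a ``total no-conjugacy''-type condition ensuring that every square has four distinct corners and that the link of each vertex is the complete bipartite graph $K_{\Delta,\Delta}$ (one side labelled by $A$, the other by $B$), that $A$ and $B$ have no involutions, and that both Cayley graphs $\mathrm{Cay}(G,A)$ and $\mathrm{Cay}(G,B)$ are good spectral expanders (ideally Ramanujan), with a mild extra ``mixing'' between the two directions. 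The first step is to isolate precisely which expansion/non-degeneracy axioms are needed, and then to exhibit an explicit family of triples $(G,A,B)$ satisfying them — for instance from groups of Lie type such as $\mathrm{PSL}_2(q)$ with generators coming from arithmetic/quaternionic data.

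Next I would define the code. Fix two ``base'' codes $C_A\subseteq\mathbb{F}_2^A$ and $C_B\subseteq\mathbb{F}_2^B$ of rates $\rho_A,\rho_B$, each invariant under the coordinate involution $a\mapsto a^{-1}$ (resp.\ $b\mapsto b^{-1}$), and chosen so that $C_A,C_B,C_A^\perp,C_B^\perp$ all have linear distance $\geq\delta_0\Delta$; a random choice works, and since $\Delta=O(1)$ this is still an explicit ingredient. Put one bit on each $2$-cell of $X$, and let $C=C(X,C_A,C_B)$ be the set of assignments $f$ such that for every vertex $v$ the restriction of $f$ to the $\Delta\times\Delta$ grid of squares at $v$ lies in the tensor code $C_A\otimes C_B$; the symmetry of $C_A,C_B$ is exactly what makes this constraint well-defined, since each square is a corner of four different vertices with four different local coordinates. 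Counting gives $|G|\Delta^2/4$ cells and $|G|$ constraints each of codimension $\Delta^2-(\rho_A\Delta)(\rho_B\Delta)$, hence $\rho(C)\geq 1-4(1-\rho_A\rho_B)$; letting $\rho_A\rho_B\to 1$ this tends to $1$, so any target rate $\rho<1$ is attainable, \emph{provided} distance and testability survive. The tester is the obvious one: sample a uniformly random vertex $v$ and accept iff $f$ restricted to the grid at $v$ lies in $C_A\otimes C_B$; this reads $q=\Delta^2=O(1)$ bits and accepts every codeword with probability $1$.

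It remains to establish the two quantitative properties, which is the heart of the matter. For the distance, suppose $0\ne c\in C$ has support $F$ on the cells. At every vertex whose grid meets $F$, the local constraint forces $c$ to restrict to a \emph{nonzero} codeword of $C_A\otimes C_B$, hence to occupy at least $\delta_0^2\Delta^2$ of that vertex's squares, and moreover at least $\delta_0\Delta$ of its $A$-edges and at least $\delta_0\Delta$ of its $B$-edges become ``active'' (incident to a cell of $F$, carrying a nonzero $C_A$- resp.\ $C_B$-codeword along the corresponding line). I would then combine this local rigidity with the expansion of $\mathrm{Cay}(G,A)$ and $\mathrm{Cay}(G,B)$: a pattern of active $A$-edges and active $B$-edges that is ``closed'' under the local row/column structure cannot be simultaneously nonempty and small, because a small-set expansion argument forces it to spread to a constant fraction of the complex; therefore $|F|=\Omega(|G|\Delta^2)$. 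Making this local-to-global step precise — correctly juggling the interaction between the two directions so that neither the $A$- nor the $B$-expansion alone is asked to do the impossible — is exactly where I expect the genuine difficulty, and it is essentially the step that resolves the $c^3$ question; I would package it as a dedicated ``small-set co-expansion of the square complex'' lemma.

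Finally, for local testability one must show $\Pr[\,T\text{ rejects }f\,]\geq\kappa\cdot\dist(f,C)$. The strategy is again local-to-global, but for agreement rather than weight. If the rejection probability $\eta$ is small, then at almost every vertex $f$ restricted to the grid is close to some codeword of $C_A\otimes C_B$; here I would invoke robust testability of the tensor code $C_A\otimes C_B$ — which holds precisely because $C_A,C_B$ and their duals have linear distance — to extract at each good vertex consistent ``row corrections'' along its $A$-edges and ``column corrections'' along its $B$-edges. These corrections live on the edges of $X$, and adjacent vertices propose corrections on the cells they share; using the expansion of $X$ one argues that these proposals globally agree up to an $O(\eta)$ fraction of disagreements, which can then be patched into a genuine $c\in C$ with $\dist(f,c)=O(\eta)$. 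The crux is this gluing/agreement theorem — that locally consistent data on an expanding $2$-complex comes from a global object with only a constant-factor loss — and, as with distance, controlling both directions at once is the delicate point. Combining the rate bound with the constant distance and soundness parameters obtained this way yields explicit good LTCs of every rate $\rho<1$, proving Theorem~\ref{thm:main-intro}.
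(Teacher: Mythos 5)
Your construction is the paper's construction (the left/right Cayley complex, one bit per square, a tensor local code at each vertex, and a tester that reads a random vertex's $r\times r$ grid), and your rate computation and the completeness of the tester are correct. However, the two steps you yourself identify as the heart of the matter are left as hoped-for lemmas, and one of them rests on a claim that is false as stated. You assert that robust/agreement testability of $C_A\otimes C_B$ ``holds precisely because $C_A,C_B$ and their duals have linear distance.'' This is not true: there are codes with linear distance and linear dual distance whose tensor square is not robustly testable. This is exactly why the paper does not take a generic base code but requires it to be \emph{uniformly smooth} in the sense of Dinur--Sudan--Wigderson (Definition \ref{dfn:smooth}), proves that smoothness implies agreement testability of the tensor code (Lemma \ref{lem:smooth-agreement}), and exhibits an explicit smooth base code as a Sipser--Spielman code on a constant-size expander (Lemma \ref{lem:expcode-smooth} and Proposition \ref{pro:base-code}). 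Without some such extra hypothesis on the base code, your testability argument has no valid starting point.

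The second gap is the local-to-global mechanism itself. For the distance, the paper's route is more direct than the ``small-set co-expansion'' lemma you anticipate: for a nonzero codeword one fixes $b\in B$, observes that $f^b(\ledge{a,g})=f([a,g,b])$ is a well-defined codeword of the ordinary Sipser--Spielman code on $\mbox{Cay}(A;G)$, and applies the one-dimensional bound to the at least $\delta(C_1)r$ values of $b$ coming from a nonzero edge view (Proposition \ref{pro:SS-LRCC}). For testability, the property you wave at as ``a mild extra mixing between the two directions'' is the crucial structural point, not a mild axiom: the graph on $A$-edges joining $\ledge{a,g}$ to $\ledge{a,gb}$ for $b\in B$ is itself isomorphic to $\mbox{Cay}(G;B)$ (or a Schreier quotient), hence a $\lambda$-expander (Lemma \ref{lem:M-II-c}); this ``parallel walk,'' convexly combined with the vertex-mediated walk $M=D^t T D$ and fed into a self-correction algorithm with a decreasing potential, is what forces the set of disputed edges to be either empty or a constant fraction of all edges (Propositions \ref{pro:M} and \ref{prop:algorithm-fails}). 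You would need to isolate and prove this parallel-expansion property --- it holds here only because of the left/right group structure and would fail for a generic square complex with expanding links --- and then supply the patching/decoding argument; as written, your proposal names the right objects but does not yet contain the argument that resolves either the distance or the soundness bound.
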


Remarkably, combining the theorem with earlier works \cite{KMRS,GKORS}, we get as a corollary that the result in the main theorem holds for all $\rho ,\delta$ which satisfy the Gilbert-Varshamov bound \cite{G,V}, namely such that $\rho + h(\delta) < 1$, where $h(x) = -x\log_2 x + -(1-x)\log_2(1-x)$, is the binary entropy function. 
In other words, for every $\rho$ and $\delta$ for which binary error-correcting codes are known to exist, there also exist locally testable codes. 

\begin{cor} \label{cor:main-intro}
For every $0<\rho,\delta<1$ such that $\rho + h(\delta)<1$, there exist good locally testable codes of rate at least $\rho$ and normalized distance at least $\delta$.
\end{cor}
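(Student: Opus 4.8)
The plan is to combine Theorem \ref{thm:main-intro} with the distance-amplification / rate-amplification machinery from \cite{KMRS,GKORS}. The point is that Theorem \ref{thm:main-intro} already delivers, for every $\rho<1$, an explicit family of good LTCs of rate at least $\rho$; what remains is to push the pair $(\rho,\delta)$ up to the Gilbert--Varshamov curve $\rho+h(\delta)<1$ while preserving local testability, constant query complexity, and explicitness. The two ingredients I would invoke are: (i) the fact, due to \cite{KMRS,GKORS}, that local testability is preserved under composition with a good (fixed, constant-size) inner code and under the distance-amplification step based on expander or sampler graphs; and (ii) the existence of explicit (non-LTC) binary codes meeting the Gilbert--Varshamov bound arbitrarily closely as inner codes, or alternatively the use of the Thommesen-type / Alon--Luby style amplification that trades a vanishing loss in rate for an improvement in distance.

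Concretely, the steps I would carry out are as follows. First, fix the target parameters $\rho,\delta$ with $\rho+h(\delta)<1$, and pick an auxiliary slack $\epsilon>0$ so that $\rho+\epsilon + h(\delta+\epsilon) < 1$ still holds. Second, start from the explicit good LTC family $\{C_n\}$ of Theorem \ref{thm:main-intro} with rate at least $\rho/(\rho+\epsilon)$ (this is legitimate since that quantity is $<1$), constant query complexity $q_0$, and some constant normalized distance $\delta_0>0$; call its soundness constant $\kappa_0$. Third, apply the distance-amplification theorem of \cite{KMRS,GKORS}: tensoring / concatenating $C_n$ with a short explicit code $D$ of length $O_\epsilon(1)$ that itself sits close to the GV bound (rate $\ge \rho+\epsilon$, normalized distance $\ge \delta+\epsilon$, such $D$ exists by a brute-force search over the constantly-many codes of that length, using Varshamov's bound \cite{V}), spreading the coordinates of $C_n$ across the blocks via a constant-degree expander. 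The resulting code $C'_n$ has rate at least $\rho$ and normalized distance at least $\delta$, and crucially the cited works show its local testability is inherited: a tester queries a random block of $D$ (constantly many bits, since $|D|=O_\epsilon(1)$) together with the $q_0$ bits needed to run the tester for $C_n$ on the aggregated symbols, giving query complexity $q = O_\epsilon(q_0)$ and a soundness constant $\kappa = \Omega_\epsilon(\kappa_0)$. Fourth, check that every object used — the LTC family, the expander family, and the constant-size inner code — is explicit, so the composite family is explicit.

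The main obstacle, and the only place genuine care is needed, is verifying that the amplification step of \cite{KMRS,GKORS} applies to the specific LTC family produced by Theorem \ref{thm:main-intro} with the claimed quantitative preservation of the tester: one must confirm that the hypotheses of those theorems (e.g., that the base code is itself locally testable with a "robust" or "smooth" tester, or that it has the required two-sided structure) are met by our construction, and that the distance-amplification graph can be taken constant-degree and explicit simultaneously. I expect this to be routine given the form of the codes in Theorem \ref{thm:main-intro}, but it is where the bookkeeping lives: one has to track that stacking the constant-size inner code does not blow up the query count beyond a constant and does not destroy the perfect completeness or the linear-in-distance rejection probability. Everything else — the choice of $\epsilon$, the existence of the near-optimal constant-size inner code, and the final arithmetic $\rho+h(\delta)<1 \Rightarrow (\rho,\delta)$ achievable — is immediate.
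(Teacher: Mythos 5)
Your high-level plan (invoke Theorem \ref{thm:main-intro} plus the machinery of \cite{KMRS,GKORS}) matches the paper's, but the concrete amplification step you describe does not work, and the gap is exactly at the place you flag as ``routine bookkeeping.'' Concatenating an outer code with a \emph{single fixed} inner code $D$ cannot reach the Gilbert--Varshamov bound: if the (AEL-amplified) outer code has rate $R$ and normalized distance $\Delta$ and the inner code has rate $r$ and distance $d$, the concatenated code has rate $Rr$ and distance only $\geq \Delta d$. Even taking the outer code to approach the Singleton bound ($\Delta \approx 1-R$) and $D$ on the GV curve, the resulting pairs $(Rr,\Delta d)$ trace out the Zyablov bound, which is strictly below GV. In your own arithmetic this shows up immediately: with outer rate $\rho/(\rho+\epsilon)$ the outer distance is at most $\epsilon/(\rho+\epsilon)$, so the concatenated distance is at most $\frac{\epsilon}{\rho+\epsilon}(\delta+\epsilon)\ll\delta$. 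The brute-force-found explicit $D$ therefore does not give you ``rate at least $\rho$ and distance at least $\delta$'' for all $(\rho,\delta)$ with $\rho+h(\delta)<1$.

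What the paper actually does (following \cite{GKORS}) is apply Thommesen's method \cite{Tho}: after the AEL/\cite{KMRS} step produces an LTC over a large constant alphabet approaching the Singleton bound, one concatenates with \emph{independently random invertible linear maps at each coordinate}, and Thommesen's theorem guarantees that with high probability the resulting binary code lies on the GV curve. This randomness is essential and is precisely why the paper remarks that the codes of Corollary \ref{cor:main-intro} are randomized rather than explicit, and leaves derandomization as an open problem. Your claim of a fully explicit construction contradicts this; to repair the proof you must replace the fixed inner code by Thommesen's randomized concatenation and drop the explicitness claim.
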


The reader is referred to Theorem \ref{thm:main} and Corollary \ref{cor:main} in Section \ref{sec:proofs} for a definitive formulations of Theorem \ref{thm:main-intro} and Corollary \ref{cor:main-intro}, respectively.

We note that, unlike Theorem \ref{thm:main-intro}, the codes from Corollary \ref{cor:main-intro} are randomized and not completely explicit.
We leave it as an open problem to find a deterministic construction of such binary locally testable codes approaching the Gilbert-Varshamov bound.

Let us now briefly give the construction of our locally testable codes.
Recall that a finite $r$-regular graph $X = (V,E)$ is called a $\lambda$-expander, for $0 < \lambda < 1$, if its normalized second largest eigenvalue $\lambda(X)$ satisfies $\lambda(X) \leq \lambda$.
The expander code of \cite{SS} was constructed as a subspace of the space of functions $\bF_2^E = \{f\,:\,E\rightarrow \bF_2\}$.
The local view of $f$ at $v \in V$, denoted $f_v$, is a vector in $\bF_2^r$, after fixing some bijection from the edges touching $v$ to $\{1,\ldots,r\}$.
The code $C$ is the space of all those functions whose local view $f_v$, from every vertex $v \in V$, is inside a well chosen fixed ``small code" $C_0$ in $\bF_2^r$.
By a simple constraint counting argument, if $\rho(C_0) > \frac{1}{2}$ then $\rho(C) \geq 2\rho(C_0) - 1$, and by using the $\lambda$-expansion of $X$, if $\delta(C_0) > \lambda$ then the local distance of $C_0$ is propagated to give a global distance for $C$.

Our codes are built as a second floor above the expander codes.
More specifically, let $G$ be a finite group with two symmetric sets of generators $A$ and $B$ of size $|A| = |B| = r$.
We look now at the left/right Cayley complex $\mbox{Cay}^2(A;G;B) = (V,E,S)$ which is defined as follows.
Its set of vertices is $V= G$, its set of edges is $E= E_A \cup E_B$, where $E_A = \{ \{g,ag\} \,:\, g\in G,\, a\in A\}$ and  $E_B = \{ \{g,gb\} \,:\, g\in G,\, b\in B\}$, and its set of squares is $S = \{ [a,g,b] \,:\, g\in G,\, a\in A,\, b \in B \}$, where $[a,g,b]$ is the square consisting of edges $\{g,ag\}$, $\{ag,agb\}$, $\{agb,gb\}$, $\{gb,g\}$.
Further details on the left/right Cayley complexes are given in Section \ref{sec:expanders}.

We choose a fixed small code $C_1 \leq \bF_2^r \cong \bF_2^A \cong \bF_2^B$.
This code defines an intermediate code $C_0 = C_1 \otimes C_1 \leq \bF_2^A \otimes \bF_2^B \cong \bF_2^{A \times B}$, the tensor code.
Now look at the space of functions on the squares $\bF_2^S = \{f\,:\,S\rightarrow \bF_2\}$.
Our global code $C$ is defined as the subspace of those functions whose local view $f_e$ from every edge $e$ is in $C_1$.
Note that $C$ can be defined also as the subspace of those functions whose local view $f_v$ from every vertex $v$ is in $C_0 = C_1 \otimes C_1$.
Further details on the left/right Cayley expander codes are given in Section \ref{sec:codes}.

Now, we prove that for a well chosen $C_1$; having sufficiently large rate and normalized distance as in \cite{SS}, but also being smooth in the sense of \cite{DSW}, the tensor code $C_0$ is agreement testable (see Definition \ref{dfn:agreement}).
We then propagate the ``local" local testability of $C_0$ to a ``global" local testability of $C$.
Further details of the above argument are given in Section \ref{sec:proofs}.

This work evolved from the insight provided by Garland's work \cite{Gar}, which shows that HDX (high-dimensional expanders, such as the Ramanujan complexes \cite{LSV1, LSV2}) display  a ``local to global" phenomenon which does not hold for one dimensional graphs.
While our work started from this observation, in the end we used a simpler type of HDX, so this work is quite elementary (with the exception of Theorem \ref{thm:LPS}).
Still the journey through the $p$-adic world left us with some interesting problems. 
This aspect, which is not needed for the proof of the main theorem, will be described in Section \ref{sec:problems}, where some open problems will be suggested.

The current paper is a journal version of the announcement in \cite{DELLM}.
In spite of this it is actually shorter than \cite{DELLM} and contains several changes and improvements.
We eliminated the use of condition (TNC) (see Remark \ref{rem:TNC}) and the use of robust testability (see \cite{DSW}), making the proof more streamlined.
Moreover, rather than using a random construction (for a fixed size) of the base code $C_1$, we replaced it by a fully explicit construction using expander codes (see Proposition \ref{pro:base-code}).
For the sake of simplicity of notations we assume here that $|A| = |B|$, but all arguments still hold without this assumption as in \cite{DELLM}.

Finally, after this work has been completed and announced, we have learnt that Panteleev and Kalachev \cite{PK} proved independently the main result of this paper.
They proved it as a by product to solving the problem of constructing good quantum LDPC codes (see \cite{LZ} for a distinction and comparison between \cite{PK} and our work).
In particular, it is shown in \cite{LZ} how one can reverse the order and use our work to deduce the original result of \cite{PK}, i.e. constructing good quantum LDPC codes.
We also learnt that our notion of left/right Cayley complex can be considered as a special case of the ``balanced product" of $G$-graphs which was defined recently in \cite{BE}.

\subsection{Acknowledgements}
We wish to thank Prahladh Harsha and Avi Wigderson for many interesting discussions along the way of this project. 
We thank Tali Kaufman for her in influential role in connecting LTCs and high dimensional expansion.
We thank Yakov Varshavsky who gave upon our request a semester-long course describing the theory of $p$-adic uniformization (see Section \ref{sec:problems}).

This work started by the first and fourth authors during a year long program at the Israeli Institute of Advanced Studies (IIAS) on high dimensional expanders in 2017.
It was presented by the first author on October 6, 2021, at the Simons Institute for the Theory of Computing [25] as part of the lecture series on breakthroughs in computer science, and at the Institute for Advanced Study in Princeton on October 25-26, 2021 [26]. 
It was also presented by the fourth author on October 27, 2021 at the Simon’s HDX21 workshop [53]. 
The authors are very grateful to these institutions and for the remarks of the audience which improved the exposition of the paper.

Irit Dinur acknowledges support by ERC grant 772839 and ISF grant 2073/21. 
Shai Evra is grateful to the Azrieli Foundation for the award of an Azrieli Fellowship. 
Alexander Lubotzky’s research is supported by the European Research Council (ERC) under the European Union’s Horizon 2020 research and innovation programme (grant agreement No 882751).
Shahar Mozes acknowledges support by ISF-Moked grant 2019/19. 
The first, second and forth authors are indebted to support from the IAS, Princeton.

\section{Expanders} \label{sec:expanders}

Recall the definition of Cayley graphs.
Let $G$ be a finite group and let $A \subset G$ be a symmetric subset (i.e. $a \in A \; \Leftrightarrow \; a^{-1} \in A$) satisfying $1 \not\in A$.
The left Cayley graph of $G$ w.r.t. $A$, denoted by $\mbox{Cay}(A;G)$, is defined to be the finite simple graph whose set of vertices is $V = G$ and whose edges   connect $g$ to $ag$ for all $g\in G$ and $a\in A$. 
We denote an edge connecting $g$ and $ag$ by $\ledge{a,g}$, and this edge is also denoted by $\ledge{a^{-1},ag}$. It is easy to check that the total number of edges is thus $|G|\cdot|A|/2$.
   
A similar definition can be given to the right Cayley graph, denoted $\mbox{Cay}(G;A)$, for which $V = G$ and edges connect $g$ to $ga$ for all $g\in G$ and $a\in A$. 
Here we will similarly denote the edge connecting $g$ to $ga$ by $\redge{g,a}$ or by $\redge{ga,a^{-1}}$.

Next we define the notion of a left/right Cayley complex.

\begin{dfn} \label{dfn:LRCC}
Let $G$ be a group and $A,B \subset G$ be two symmetric subsets.
The left/right Cayley complex of $G$ w.r.t. $A$ and $B$, denoted by $\mbox{Cay}^2(A;G;B)$, is defined to be the following $2$-dimensional square complex:
\begin{itemize}
\item Its set of vertices is $V= G$.
\item The set $E_A$ of left edges and the set $E_B$ of right edges are given by
\begin{equation}
E_A = \{ \ledge{a,g} \,:\, g\in G,\, a\in A\} \qquad \mbox{and} \qquad E_B = \{ \redge{g,b} \,:\, g\in G,\, b\in B\}.
\end{equation} 
The set of edges is the disjoint union $E= E_A \sqcup E_B$. Observe that even if $\ledge{a,g}$ and $\redge{g,b}$ contain the same pair of vertices $\set{g,ag=gb}$, they refer to two distinct edges (one is a ``left'' edge and one is a ``right'' edge).
\item Its set of squares is $S = A \times G \times B / \sim$, where for any $g\in G,\, a\in A,\, b \in B$, 
\begin{equation}
(a,g,b) \sim (a^{-1},ag,b) \sim (a^{-1},agb,b^{-1}) \sim (a,gb,b^{-1}),
\end{equation}
and denote the equivalence class of $(a,g,b)$ by $[a,g,b]$, so
\begin{equation}
[a,g,b] = \left\lbrace (a,g,b) ,\, (a^{-1},ag,b) ,\, (a^{-1},agb,b^{-1}) ,\, (a,gb,b^{-1}) \right\rbrace .
\end{equation}
Given a square $[a,g,b]$, define its set of vertices to be $\{ g, ag, agb, gb \}$, and its set of edges to be $\{ \ledge{a,g},\redge{ag,b},\ledge{a^{-1},agb},\redge{gb,b^{-1}} \}$.
For any square $s \in S$, and any vertex $g\in V$ (resp. edge $e \in E$), we write $g\in s$ (resp. $e \in s$), to mean that $g$ (resp. $e$) is a vertex (resp. an edge) of $s$.
\end{itemize}
\end{dfn}

Note that the graph $(V,E_A)$ is precisely the left Cayley graph $\mbox{Cay}(A;G)$.
Similarly, $(V,E_B)$ is the right Cayley graph $\mbox{Cay}(G;B)$.
The fact that $A$ acts from the left and $B$ acts from the right, gives local commutativity which generates many four-cycles, namely, squares.
\begin{rem} \label{rem:TNC}
Note that all edges of $\mbox{Cay}^2(A;G;B)$ have two distinct vertices since we are assuming $1 \not\in A , B$.
The analogous statement for the squares, i.e. that $[a,g,b]$ has four distinct vertices and four distinct edges, holds if and only if $A$ and $B$ satisfy condition (TNC) 
\begin{equation} \label{eq:TNC} \tag{TNC}
g a \ne b g, \qquad \forall g \in G,\; a\in A,\; b\in B.
\end{equation}
See \cite{DELLM} for explicit constructions of triplets $(G,A,B)$ which satisfy condition (TNC).
\end{rem}

\begin{rem}\label{rem:count}
The number of edges in $Cay^2(A;G;B)$ is exactly $|G|(|A|+|B|)/2$.  Let us count the number of squares.
Each edge $e\in E_A$ participates in $|B|$ squares, and each $e\in E_B$ participates in $|A|$ squares. 
So the total number of pairs $(e,s)$ such that $e\in s$ is $|E_A||B|+|E_B||A|=|G||A||B|$. 
To figure out how many squares there are, we need to know how many edges participate in each square. 
In a typical square $[a,g,b]$ this number is four, but there might be unexpected collisions, as follows:
\begin{itemize}
\item If $g,ag,gb,agb$ are four distinct vertices then the square clearly has four edges.
\item If $ag=gb$, but $g\neq agb$ then the square has three (and not four) vertices, $g,ag=gb,agb$. There are four distinct edges: $\redge{g,b},\ledge{a,g},\redge{agb,b^{-1}},\ledge{a^{-1},agb}$.
\item If $g=agb$, but $ag\neq gb$ then the square has three vertices and four edges, just like in the previous case.
\item If both $ag=gb$ and $agb=g$ then the square has only two vertices: $g=agb$ and $ag=gb$, and only two edges: $\ledge{a,g}=\ledge{a,gb}$ and $\redge{g,b}=\redge{ag,b}$. Observe that in this case necessarily $a^2=1=b^2$.
\end{itemize}
The last case is excluded whenever the following ``no order-$2$ conjugates'' condition, which is weaker than \eqref{eq:TNC}, holds:
\begin{equation}\label{eq:TNC2}\tag{N2C}
\forall a\in A,g\in G,\qquad a^2=1\;\Rightarrow\; g^{-1}ag\not\in B.
\end{equation}
If \eqref{eq:TNC2} holds then the number of squares is  $|G|\cdot|A|\cdot|B|/4$. 
Otherwise the number of squares is at least $|G|\cdot|A|\cdot|B|/4$ but it might be (slightly) larger. 
The reason is that every vertex participates in $|A|\cdot|B|$ distinct squares but every square contains either $2$ or $4$ edges.
\end{rem}

The left/right Cayley complexes are examples of two-dimensional cubical complexes.
Cubical complexes are well-studied, and in particular there are constructions of Ramanujan cubical complexes of bounded degree in any dimension (see \cite{JL}), whose walk dynamics was studied in \cite{Moz}. 
The left/right Cayley complexes have an additional matching labels feature that other complexes are not known to have.

\begin{dfn} \label{dfn:label}
Let $\mbox{Cay}^2(A;G;B) = (V,E,S)$ be a left/right Cayley complex.

For any vertex $g \in V$, define its squares to be $S(g) = \{[a,g,b] \,:\, a\in A,\, b\in B \}$, and define the labelling map $\iota_g \,:\, A \times B \rightarrow S(g)$, $\iota_g(a,b) = [a,g,b]$.

For every edge $\ledge{a,g} \in E_A$ (resp. $\redge{g,b} \in E_B$), define its squares to be $S(\ledge{a,g} ) = \{ [a,g,b] \,:\, b\in B \}$ (resp. $S(\redge{g,b} ) = \{ [a,g,b] \,:\, a\in A \}$), and define the labelling map $\iota_{\ledge{a,g} } \,:\, B \rightarrow S(\ledge{a,g} )$, $ \iota_{\ledge{a,g}}(b) = [a,g,b]$ (resp. $\iota_{\redge{g,b} } \,:\, A \rightarrow S(\redge{g,b} )$, $\iota_{\redge{g,b} }(a) = [a,g,b]$).
\end{dfn} 

\begin{rem}\label{rem:ML}
For every vertex $g\in G$, the $b$-th square of the $a$-th neighbor equals the $a$-th square of the $b$-th neighbor (and both are equal to $[a,g,b]$). 
In symbols,
\begin{equation}
\iota_{\redge{a,g}}(b) = \iota_{\ledge{g,b}}(a).
\end{equation}
Let us call this the {\em matching labels} property. 

The cubical complexes mentioned above  \cite{JL} exist in all dimensions, but they lack a matching-labels property. 
It is an interesting open question whether there are (expanding, constant degree) cubical complexes of dimension above two with a similar type of matching-labels property.
\end{rem}

\begin{lem} \label{lem:label}
The labelling maps in Definition \ref{dfn:label} are well defined and surjective.
The labelling maps are injective if condition \eqref{eq:TNC} holds.
\end{lem}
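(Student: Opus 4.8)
The plan is to read all three assertions off the equivalence relation $\sim$ that defines the set of squares $S=A\times G\times B/\!\sim$.

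First I would settle \emph{well-definedness}. Each labelling map sends its argument into $S$, a set of $\sim$-classes, so the only thing to check is that the value attached to an edge does not depend on which of the edge's two names is used (for a vertex $g$ there is nothing to check). For the left edge $\ledge{a,g}=\ledge{a^{-1},ag}$, the relation $(a,g,b)\sim(a^{-1},ag,b)$ gives $\iota_{\ledge{a,g}}(b)=[a,g,b]=[a^{-1},ag,b]=\iota_{\ledge{a^{-1},ag}}(b)$ for every $b\in B$, hence also $S(\ledge{a,g})=S(\ledge{a^{-1},ag})$; for the right edge $\redge{g,b}=\redge{gb,b^{-1}}$, the relation $(a,g,b)\sim(a,gb,b^{-1})$ does the same. \emph{Surjectivity} is then immediate, since $S(g)$, $S(\ledge{a,g})$ and $S(\redge{g,b})$ are by definition exactly the images of $\iota_g$, $\iota_{\ledge{a,g}}$ and $\iota_{\redge{g,b}}$. (If one also wants $S(g)$ to coincide with $\{s\in S:g\in s\}$, so that calling it ``its squares'' is literally accurate, one writes a square through $g$ as $[a',g',b']$ with $g\in\{g',a'g',a'g'b',g'b'\}$ and applies one or two of the relations $\sim$ to put it in the form $[a,g,b]$ with $a\in A$, $b\in B$ — four short cases — and likewise for edges.)

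For \emph{injectivity} I would assume \eqref{eq:TNC}. By Remark \ref{rem:TNC} the square $[a,g,b]$ then has four distinct vertices $g,ag,agb,gb$; these are precisely the middle coordinates of the four triples $(a,g,b),(a^{-1},ag,b),(a^{-1},agb,b^{-1}),(a,gb,b^{-1})$, which are therefore pairwise distinct, so the class $[a,g,b]$ consists of exactly these four triples and $g$ is the middle coordinate of only the first of them. Hence, if $\iota_g(a,b)=\iota_g(a',b')$, the triple $(a',g,b')$ — which lies in $[a,g,b]$ and has middle coordinate $g$ — must be $(a,g,b)$, so $(a',b')=(a,b)$; applying the same observation to $(a,g,b')\in[a,g,b]$ yields injectivity of $\iota_{\ledge{a,g}}$, and to $(a',g,b)\in[a,g,b]$ injectivity of $\iota_{\redge{g,b}}$.

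The argument is routine bookkeeping with $\sim$; the one load-bearing point is that injectivity genuinely needs \eqref{eq:TNC} (through Remark \ref{rem:TNC}): without it the class $[a,g,b]$ can collapse to three or two triples, as in Remark \ref{rem:count}, and the labelling maps then fail to be injective.
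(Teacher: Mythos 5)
Your proof is correct and follows essentially the same route as the paper's: well-definedness and surjectivity read directly off the relation $\sim$, and injectivity comes from using \eqref{eq:TNC} to rule out coincidences among the four triples of $[a,g,b]$ (the paper phrases this as deducing $ag=a'g$ and $gb=gb'$, you phrase it as $g$ being the middle coordinate of only one triple in the class — the same observation). The only caveat is your closing claim that the maps ``fail to be injective'' without \eqref{eq:TNC}; the lemma asserts no such converse, and a single collision $ag=gb$ need not by itself break injectivity, so that remark should be read as heuristic rather than proved.
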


\begin{proof}
The map $\iota_g$ is clearly well defined and the map $\iota_{\ledge{a,g}} = \iota_{\ledge{a^{-1},ag}}$ (resp. $\iota_{\redge{g,b}}= \iota_{\redge{gb,b^{-1}}}$) is well defined since $[a,g,b] = [a^{-1},ag,b]$ (resp. $[a,g,b] = [a,gb,b^{-1}]$), i.e. the choice of the root vertex $g$ or $ag$ (resp. $g$ or $gb$) does not change the image of the map.
The maps are surjective since for a fixed $g \in V$, the squares $[a,g,b]$ running over all $(a,b) \in A \times B$, give all possible squares which contain $g$. 
Similarly for a fixed $\ledge{a,g} \in E_A$, (resp. $\redge{g,b} \in E_B$), the squares $[a,g,b]$, running over all $b\in B$, (resp. $a \in A$), give all possible squares which contain $\ledge{a,g}$, (resp. $\redge{g,b}$).
Finally the maps are injective if condition \eqref{eq:TNC} holds, since if $\iota_g(a,b) = \iota_g(a',b')$ (resp. $\iota_{\ledge{a,g}}(b) = \iota_{\ledge{a,g}}(b')$, resp. $\iota_{\redge{g,b}}(a) = \iota_{\redge{g,b}}(a')$), then by condition (TNC), $ag=a'g$ and $gb=gb'$ (resp. $gb=gb'$, resp. $ag=a'g$), hence $a=a'$ and $b=b'$ (resp. $b=b'$, resp. $a=a'$), which proves the injectivity.
\end{proof}

The rest of this section will focus on expansion properties of Cayley graphs and complexes.

Let $X = (V,E)$ be an $r$-regular finite simple graph. 
Let $\bR^V = \{f\,:\, V \rightarrow \bR \}$ be the real vector space of functions on the vertices of the graph, and let $\langle , \rangle \,:\, \bR^V \times \bR^V \rightarrow \bR$ be the inner product $\langle f , g \rangle = \sum_{v\in V} f(v)g(v)$.
Let $T_X \,:\, \bR^V \rightarrow \bR^V$ be the normalized adjacency operator of $X$, defined by $T_X f(v) = \frac{1}{r} \sum_{\{u,v\}\in E} f(u)$.
Denote by $\lambda(X) $ the second largest eigenvalue of $T_X$.

\begin{dfn} \label{dfn:exp}
Call $X$ (a one sided) $\lambda$-expander if $\lambda(X) \leq \lambda$, for $0 < \lambda < 1$.
\end{dfn}

Each left/right Cayley complex is comprised of two Cayley graphs, and we call the complex an expander if both of theses graphs are expanders.

\begin{dfn} \label{dfn:LRCC-exp}
$\mbox{Cay}^2(A;G;B)$ is a $\lambda$-expander if both $\mbox{Cay}(A;G)$ and $\mbox{Cay}(G;B)$ are $\lambda$-expanders.
\end{dfn}

Note that if $\mbox{Cay}^2(A;G;B)$ is a $\lambda$-expander, then its underlying graph is also a $\lambda$-expander.
Indeed the normalized adjacency operator of the underlying graph of $\mbox{Cay}^2(A;G;B)$ is $T = \frac{|A|T_A + |B|T_B}{|A|+|B|}$, where $T_A$ and $T_B$ are the normalized adjacency operators of $\mbox{Cay}(A;G)$ and $\mbox{Cay}(G;B)$, respectively, which implies $\lambda(T) \leq \max\{\lambda(T_A) , \lambda(T_B)\}$.

\begin{dfn} \label{dfn:Markov}
Let $\mV$ be a finite set, $\bR^\mV = \{ f\,:\, \mV \rightarrow \bR\}$ the vector space of real functions on $\mV$ and  $\langle , \rangle \,:\, \bR^\mV \times \bR^\mV \rightarrow \bR$, $\langle f,g \rangle = \sum_{v\in \mV} f(v)g(v)$ the standard inner product.
Let $\mM \,:\, \bR^\mV \rightarrow \bR^\mV$ be a linear operator and let $0 < \lambda < 1$.
\begin{itemize}
\item Say that $\mM$ is symmetric if $\langle \mM f , g \rangle = \langle f , \mM g \rangle$, for any $f,g \in \bR^\mV$.
\item Say that $\mM$ is Markov if it is non-negative (namely $\mM f\geq 0$ whenever $f\geq 0$), and if $\mM 1_\mV = 1_\mV$, where $1_\mV \in \bR^\mV$ is the constant one function.
\item Say that $\mM$ is $\lambda$-expanding if $\langle \mM f , f \rangle \leq \lambda \langle f , f \rangle$, for any $f \in \bR^\mV$ such that $\langle f , 1_\mV \rangle = 0$.
\end{itemize}
\end{dfn}

The following Lemma is a generalization of the Alon-Chung Lemma (\cite{AC}) for any symmetric, Markov $\lambda$-expanding operator.

\begin{lem} \label{lem:AC}
Let $\mM \,:\, \bR^\mV \rightarrow \bR^\mV$ be a symmetric, Markov, $\lambda$-expanding linear operator.
Then for any $0 < \delta \leq 1$ and any $R\subset \mV$, letting $1_R$ be the indicator function of $R$,
\begin{equation}
\langle \mM 1_R , 1_R \rangle \geq \delta |R| \qquad \Rightarrow \qquad |R| \geq (\delta - \lambda) |\mV|.
\end{equation}
\end{lem}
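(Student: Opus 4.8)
The plan is to decompose the quadratic form $\langle \mathcal{M} 1_R, 1_R \rangle$ by projecting $1_R$ onto the constant function and its orthogonal complement, and then applying the $\lambda$-expansion hypothesis to the orthogonal part. Write $1_R = \frac{|R|}{|\mathcal{V}|} 1_\mathcal{V} + f_0$, where $f_0 = 1_R - \frac{|R|}{|\mathcal{V}|}1_\mathcal{V}$ satisfies $\langle f_0, 1_\mathcal{V}\rangle = |R| - \frac{|R|}{|\mathcal{V}|}|\mathcal{V}| = 0$. Since $\mathcal{M}$ is Markov, $\mathcal{M} 1_\mathcal{V} = 1_\mathcal{V}$, and since it is symmetric and $f_0 \perp 1_\mathcal{V}$, the cross terms behave well; expanding bilinearly I would get
\begin{equation}
\langle \mathcal{M} 1_R, 1_R\rangle = \frac{|R|^2}{|\mathcal{V}|} + \langle \mathcal{M} f_0, f_0\rangle.
\end{equation}

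Next I would bound $\langle \mathcal{M} f_0, f_0\rangle \leq \lambda \langle f_0, f_0\rangle$ using the $\lambda$-expanding property (legitimate since $f_0 \perp 1_\mathcal{V}$), and compute $\langle f_0, f_0 \rangle = \langle 1_R, 1_R\rangle - \frac{|R|^2}{|\mathcal{V}|} = |R| - \frac{|R|^2}{|\mathcal{V}|}$ by Pythagoras. Combining with the hypothesis $\langle \mathcal{M} 1_R, 1_R\rangle \geq \delta|R|$ yields
\begin{equation}
\delta |R| \leq \frac{|R|^2}{|\mathcal{V}|} + \lambda\left(|R| - \frac{|R|^2}{|\mathcal{V}|}\right) = \lambda |R| + (1-\lambda)\frac{|R|^2}{|\mathcal{V}|}.
\end{equation}

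Finally I would rearrange: $(\delta - \lambda)|R| \leq (1-\lambda)\frac{|R|^2}{|\mathcal{V}|} \leq \frac{|R|^2}{|\mathcal{V}|}$ since $1 - \lambda \leq 1$, and dividing by $|R|$ (which is positive, else the conclusion is trivial) gives $\delta - \lambda \leq \frac{|R|}{|\mathcal{V}|}$, i.e. $|R| \geq (\delta - \lambda)|\mathcal{V}|$. If $\delta \leq \lambda$ the conclusion is vacuous, so there is nothing delicate there.

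I do not anticipate a serious obstacle: the only point requiring a little care is justifying that the cross terms vanish, which uses both symmetry of $\mathcal{M}$ and the eigenvalue relation $\mathcal{M}1_\mathcal{V} = 1_\mathcal{V}$ together with $\langle f_0, 1_\mathcal{V}\rangle = 0$. One should also note that non-negativity of $\mathcal{M}$ is not actually needed for this particular inequality — it is part of the Markov hypothesis but the argument only uses $\mathcal{M}1_\mathcal{V} = 1_\mathcal{V}$, symmetry, and $\lambda$-expansion. The estimate $1 - \lambda \le 1$ is what makes the bound clean; keeping the factor $(1-\lambda)$ would give the marginally stronger conclusion $|R| \ge \frac{\delta - \lambda}{1-\lambda}|\mathcal{V}|$, but the stated form suffices.
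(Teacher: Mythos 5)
Your proof is correct and follows essentially the same route as the paper: decompose $1_R$ into its projection onto $1_\mV$ plus an orthogonal part, kill the cross terms using $\mM 1_\mV = 1_\mV$ and symmetry, and apply the $\lambda$-expansion bound to the orthogonal part. The only difference is cosmetic — you compute $\langle f_0,f_0\rangle$ exactly and then discard the factor $(1-\lambda)$, whereas the paper simply bounds $\langle f_0,f_0\rangle \le \langle 1_R,1_R\rangle = |R|$; both yield the same conclusion.
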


\begin{proof}
Write $1_R = \frac{|R|}{|\mV|} 1_\mV + f_R$ and note that $\langle f_R , 1_\mV \rangle =0$.
Then 
\begin{multline}
\delta |R| \leq \langle \mM 1_R , 1_R \rangle = \langle \mM (\frac{|R|}{|\mV|} 1_\mV + f_R) , \frac{|R|}{|\mV|} 1_\mV + f_R \rangle = \frac{|R|^2}{|\mV|} + \langle \mM f_R , f_R \rangle \\
\leq \frac{|R|^2}{|\mV|} + \lambda \langle f_R , f_R \rangle \leq \frac{|R|^2}{|\mV|} + \lambda \langle 1_R , 1_R \rangle =  \frac{|R|^2}{|\mV|} + \lambda |R|.
\end{multline}
Hence $\delta - \lambda \leq \frac{|R|}{|\mV|}$, which completes the proof.
\end{proof}

Next we shall define several adjacency operators on the edges (Definitions \ref{dfn:M-I}, \ref{def:gpar} and \eqref{eq:mpl}) of a $\lambda$-expanding left/right Cayley complex $\mbox{Cay}^2(A;G;B) = (V,E,S)$, prove that several of them are symmetric, Markov, and $\lambda$-expanding (Lemmas \ref{lem:M-I} and \ref{lem:M-II-c}), and finally prove the main claim of this section (Proposition \ref{pro:M}), which will be used in Section \ref{sec:proofs}. 

First let us introduce a bit of notation. 
We let $L= (A\times \set{0}) \sqcup (B\times \set{1})$ so $|L|=2r$ and corresponds bijectively to the edges leaving a vertex $g\in G$ (without us having to worry about potential collisions of names). 
For $\ell\in L$ and $g\in G$, denote 
\begin{equation}
g^\ell = \begin{cases} a g  & \;\ell = (a,0) \\ g b & \;\ell = (b,1) \end{cases}.
\end{equation}
We write $type(\ell)$ to denote the second component in $\ell$ which indicates if we are in $E_A$ or $E_B$, and we denote by $\edge{g;\ell}$ the edge $\ledge{a,g}$ if $\ell=(a,0)$ and the edge $\redge{g,b}$ if $\ell=(b,1)$.
Also, for $\ell=(\ell_1,\ell_2)$, denote $\ell^{-1}=(\ell_1^{-1},\ell_2)$.

\begin{dfn} \label{dfn:M-I}
Define the following normalized adjacency operators,
\begin{equation}
T \,:\, \bR^V \rightarrow \bR^V, \qquad Tf(g) = \frac{1}{2r} \sum_{\ell\in L} f(g^\ell) ,
\end{equation}
the normalized adjacency operator of the underlying graph $(V,E)$,
\begin{equation}
D \,:\, \bR^E \rightarrow \bR^V, \qquad Df(g) = \frac{1}{2r} \sum_{\ell \in L} f(\edge{g;\ell}),
\end{equation}
the normalized unsigned boundary operator from the edges to vertices, and let $D^t$ be the transpose of $D$, which is the normalized unsigned coboundary operator from vertices to edges, i.e.
\begin{equation}
D^t \,:\, \bR^V \rightarrow \bR^E, \qquad D^t f (\edge{g;\ell}) = \frac{1}{2}(f(g) + f(g^\ell)).
\end{equation}
Finally, let 
\begin{equation}\label{eq:M}
M \,:\, \bR^E \rightarrow \bR^E, \qquad M = D^t \circ T \circ D.
\end{equation} 
\end{dfn}

\begin{lem} \label{lem:M-I}
The operator $M$ from Definition \ref{dfn:M-I} is symmetric, Markov and $\lambda$-expanding.
\end{lem}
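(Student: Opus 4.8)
The plan is to verify the three properties of $M = D^t \circ T \circ D$ one at a time, using that $T$ is the normalized adjacency operator of an underlying $\lambda$-expander graph (by the remark after Definition \ref{dfn:LRCC-exp}), and that $D$, $D^t$, $T$ are all built from averaging, hence non-negative with natural normalizations.

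First, \emph{symmetry}. Since $M = D^t T D$, and $T$ is symmetric (the underlying graph is undirected), we get for all $f, g \in \bR^E$ that $\langle M f, g \rangle = \langle D^t T D f, g \rangle = \langle T D f, D g \rangle = \langle D f, T D g \rangle = \langle f, D^t T D g \rangle = \langle f, M g \rangle$, provided $D^t$ really is the transpose of $D$ with respect to the stated inner products. So the one computation to carry out here is to check that the formula given for $D^t$ in Definition \ref{dfn:M-I} is indeed adjoint to $D$: expand $\langle D f, h \rangle_{\bR^V} = \sum_{g} \frac{1}{2r} \sum_\ell f(\edge{g;\ell}) h(g)$, regroup the sum over edges (each edge $\edge{g;\ell}$ has two endpoint-representations, $g$ with label $\ell$ and $g^\ell$ with label $\ell^{-1}$), and match it against $\langle f, D^t h \rangle_{\bR^E} = \sum_{e} f(e) \cdot \tfrac12(h(g) + h(g^\ell))$. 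The factor $\tfrac12$ in $D^t$ precisely accounts for the double-counting, so this works out.

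Second, \emph{Markov}. Non-negativity is immediate: $D$, $T$, $D^t$ each send non-negative functions to non-negative functions (they are averages), so their composition does too. For $M 1_E = 1_V$... rather $M 1_E = 1_E$: compute $D 1_E(g) = \frac{1}{2r}\sum_{\ell \in L} 1 = 1$, so $D 1_E = 1_V$; then $T 1_V = 1_V$ since $T$ is Markov; then $D^t 1_V (\edge{g;\ell}) = \tfrac12(1 + 1) = 1$, so $D^t 1_V = 1_E$. Hence $M 1_E = 1_E$.

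Third, \emph{$\lambda$-expansion}, which I expect to be the only step requiring genuine care. We must show $\langle M f, f \rangle \le \lambda \langle f, f \rangle$ whenever $\langle f, 1_E \rangle = 0$. Write $\langle M f, f \rangle = \langle T D f, D f \rangle$. The key point is that $\langle f, 1_E \rangle = 0$ does \emph{not} directly give $\langle D f, 1_V \rangle = 0$ — but in fact it does, since $\langle D f, 1_V \rangle = \langle f, D^t 1_V \rangle = \langle f, 1_E \rangle = 0$ by the Markov computation above. So $h := D f$ is orthogonal to $1_V$, and since $T$ is $\lambda$-expanding on such functions, $\langle T h, h \rangle \le \lambda \langle h, h \rangle = \lambda \langle D f, D f \rangle = \lambda \langle D^t D f, f \rangle$. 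It then remains to check the contraction $\langle D^t D f, f \rangle \le \langle f, f \rangle$, i.e. that $D^t D$ has operator norm at most $1$; equivalently $\|D f\|^2 \le \|f\|^2$, i.e. $D$ is a contraction. Since $D$ is an average of $2r$ evaluation maps, $|Df(g)|^2 = |\tfrac{1}{2r}\sum_\ell f(\edge{g;\ell})|^2 \le \tfrac{1}{2r}\sum_\ell f(\edge{g;\ell})^2$ by Cauchy–Schwarz (or convexity), and summing over $g \in V$, each edge $e$ is counted exactly twice (once from each endpoint) with weight $\tfrac{1}{2r}$, giving $\|D f\|^2 \le \tfrac{2}{2r} \cdot \text{(number of endpoint-incidences weighted)} = \tfrac{1}{r}\cdot r \|f\|^2$... more carefully $\sum_g \tfrac{1}{2r}\sum_\ell f(\edge{g;\ell})^2 = \tfrac{1}{2r}\sum_{e} 2 f(e)^2 = \tfrac1r \|f\|^2 \le \|f\|^2$, which suffices (and is in fact $\le \|f\|^2$ with room to spare, though we only need $\le 1$). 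Chaining the two bounds, $\langle M f, f\rangle \le \lambda \langle D^t D f, f\rangle \le \lambda \langle f, f \rangle$. The only subtlety to watch is the edge double-counting bookkeeping in the Cauchy–Schwarz step and in the adjointness check; once the normalizations $\tfrac{1}{2r}$ (in $D$) and $\tfrac12$ (in $D^t$) are tracked correctly, everything closes.
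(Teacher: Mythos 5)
Your proposal is correct and follows essentially the same route as the paper's proof: symmetry via $M=D^tTD$ with $T$ symmetric, Markovity via $D1_E=1_V$, $T1_V=1_V$, $D^t1_V=1_E$, and $\lambda$-expansion via $\langle Df,1_V\rangle=\langle f,D^t1_V\rangle=0$ combined with the expansion of $T$ and the contractivity of $D$ (which the paper simply cites as $\|D\|\leq 1$ and you verify by Cauchy--Schwarz, in fact obtaining the stronger $\|Df\|^2\leq \tfrac1r\|f\|^2$). The only extra care worth noting is that with the counting inner products the exact adjoint of $D$ is $\tfrac1r$ times the stated $D^t$, so a factor of $r$ floats through the adjointness identities; it is harmlessly absorbed by your sharper bound $\|Df\|^2\leq\tfrac1r\|f\|^2$, and the conclusion is unaffected.
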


\begin{proof}
The operator $M = D^t T D$ is symmetric since $(AB)^t = B^t A^t$ and $T$ is symmetric. It is Markov since it is clearly non-negative and since $D 1_E = 1_V$, $T 1_V = 1_V$ and $D^t 1_V = 1_E$. It is $\lambda$-expanding since for any $f \perp 1_E$,
\begin{equation}
\langle M f , f \rangle = \langle D^t T D f , f \rangle = \langle T Df , Df \rangle \leq \lambda \langle Df , Df \rangle \leq  \lambda \langle f , f \rangle,
\end{equation}
where we have used the fact that $\langle Df , 1_V \rangle = \langle f , D^t 1_V \rangle = \langle f , 1_E \rangle =0$, the fact that $T$ is $\lambda$-expanding and the fact that $\|D\| \leq 1$. 
\end{proof}

\begin{dfn}\label{def:gpar}
Define an auxiliary graph $\gp$ whose set of {\em vertices} is $E$, such that each $\edge{g;\ell}\in E$ is connected to  $\edge{g^{\ell'};\ell}$ for all $\ell'\in L$ such that $type(\ell')\neq type(\ell)$. More explicitly, an edge $\ledge{a,g}$ is connected to $\ledge{a,gb}$ for all $b\in B$, and an edge $\redge{g,b}$ is connected to $\redge{ag,b}$ for all $a\in A$.
Let $\mpar\,:\,\bR^E\to \bR^E$ be the normalized adjacency operator of this graph.
\end{dfn}

For $\ell\in L$ define the subset of $\ell$ labelled edges to be
\begin{equation}
E_\ell = \{ \edge{g;\ell} \in E \,:\, g\in G \},
\end{equation}
and observe that $E_\ell = E_{\ell^{-1}}$. 
Moreover,

\begin{lem} \label{lem:M-II-c}
For each $\ell\in L$ the graph $\gp$ has a connected component $\gp(E_\ell)$ with vertex set $E_\ell$. Let $M_\ell^\parallel$ be the normalized adjacency operator of $\gp(E_\ell)$. This operator is symmetric, Markov and $\lambda$-expanding, for any $\ell\in L$. It is explicitly given by
\begin{equation}\label{eq:mpl}
M_\ell^\parallel  f(\edge{g;\ell}) = \frac 1 r \sum_{\stackrel{\ell'\in L}{type(\ell')\neq type(\ell)}} f(\edge{g^{\ell'};{\ell}}).
\end{equation}
Moreover, whenever $\ell\neq \ell^{-1}$ the graph $\gp(E_\ell)$ is isomorphic to the Cayley graph $\mbox{Cay}(A;G)$ or $\mbox{Cay}(G;B)$ depending on $type(\ell)$; and whenever $\ell=\ell^{-1}$ and writing $\ell=(\ell_1,\ell_2)$ the graph $\gp(E_\ell)$ is isomorphic to the Schreier graph $\mbox{Sch}(A;G/\langle \ell_1 \rangle)$ or $\mbox{Sch}(G/\langle \ell_1 \rangle ;B)$ depending on  $type(\ell)$.  
\end{lem}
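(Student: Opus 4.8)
The plan is to analyze the auxiliary graph $\gp$ from Definition \ref{def:gpar} one label-class at a time. First I would fix $\ell \in L$ and observe directly from the definition that every edge of $\gp$ incident to a vertex $\edge{g;\ell} \in E_\ell$ leads to another vertex $\edge{g^{\ell'};\ell}$, which also lies in $E_\ell$ (the label is preserved, only the root vertex changes). Hence $E_\ell$ is a union of connected components of $\gp$; the content of the first sentence is that it is in fact a single connected component, which follows because $\mbox{Cay}(A;G)$ and $\mbox{Cay}(G;B)$ are connected (they are expanders), so one can move from any root $g$ to any other root by a path using only generators of the type opposite to $type(\ell)$. Writing down the formula \eqref{eq:mpl} for the normalized adjacency operator of $\gp(E_\ell)$ is then just unwinding Definition \ref{def:gpar}: there are exactly $r$ neighbours, indexed by the $\ell' \in L$ with $type(\ell') \ne type(\ell)$.

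Next I would establish that $M_\ell^\parallel$ is symmetric, Markov and $\lambda$-expanding. Symmetry and the Markov property are formal: $\gp(E_\ell)$ is an undirected $r$-regular graph, so its normalized adjacency operator is self-adjoint for the counting inner product and fixes the constant function $1_{E_\ell}$ while being non-negative. For the $\lambda$-expansion I would exploit the isomorphisms claimed in the ``Moreover'' clause: when $\ell \ne \ell^{-1}$, the map $\edge{g;\ell} \mapsto g$ is a bijection $E_\ell \to G$ (since $\ell=(a,0)$ with $a \ne a^{-1}$ means the edge $\ledge{a,g}$ has a well-defined ``tail'' $g$; similarly for type $B$), and under this bijection the adjacency of $\gp(E_\ell)$ becomes exactly the adjacency of $\mbox{Cay}(A;G)$ (if $type(\ell)=0$) or $\mbox{Cay}(G;B)$ (if $type(\ell)=1$). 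Both are $\lambda$-expanders by Definition \ref{dfn:LRCC-exp}, so $M_\ell^\parallel$ is $\lambda$-expanding.

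The one genuinely delicate case — the main obstacle — is when $\ell = \ell^{-1}$, i.e. $\ell_1^2 = 1$. Then $E_\ell$ is no longer in bijection with $G$: the edges $\ledge{\ell_1,g}$ and $\ledge{\ell_1^{-1},\ell_1 g} = \ledge{\ell_1, \ell_1 g}$ coincide, so $E_\ell$ is in bijection with the cosets $\langle \ell_1 \rangle \backslash G$ (for $type(\ell)=0$) or $G / \langle \ell_1 \rangle$ (for $type(\ell)=1$), a set of size $|G|/2$. Here I would identify $\gp(E_\ell)$ with the Schreier graph of the $G$-action on these cosets with respect to the opposite generating set: a coset is joined to its translates under the $r$ opposite-type generators, which is precisely $\mbox{Sch}(A; \langle \ell_1\rangle\backslash G)$ or $\mbox{Sch}(G/\langle\ell_1\rangle; B)$. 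To get $\lambda$-expansion I would note that the coset space carries the natural quotient/averaging map $\pi : \bR^{G} \to \bR^{\langle \ell_1\rangle \backslash G}$, that $\pi$ intertwines the Cayley adjacency operator on $G$ with the Schreier adjacency operator on the quotient, and that $\pi$ is surjective with $\pi(1_G) = 1$; since the Cayley operator is $\lambda$-expanding on $1_G^\perp$ and $\pi$ maps $1$-orthogonal functions to $1$-orthogonal functions and does not increase the Rayleigh quotient (it is, up to a positive scalar, the adjoint of an isometric lift), the Schreier operator inherits the $\lambda$-expansion bound. Equivalently and more cleanly: pulling a function $f$ on the coset space back to a $\langle\ell_1\rangle$-invariant function $\tilde f$ on $G$ preserves both $\langle \cdot,\cdot\rangle$ up to the factor $|\langle\ell_1\rangle|$ and the Rayleigh quotient of the adjacency operator, and $\tilde f \perp 1_G$ iff $f \perp 1$, so the eigenvalue bound transfers verbatim. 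Assembling these pieces gives all the asserted properties of $M_\ell^\parallel$.
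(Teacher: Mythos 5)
Your proposal is correct and follows essentially the same route as the paper: decompose $\gp$ by edge labels, identify $\gp(E_\ell)$ with the opposite-type Cayley graph when $\ell\neq\ell^{-1}$ and with the Schreier graph on cosets of $\langle\ell_1\rangle$ when $\ell=\ell^{-1}$, and transfer the spectral bound from the Cayley graph to its quotient. The only difference is one of detail: the paper simply asserts that quotients of a $\lambda$-expanding Cayley graph are $\lambda$-expanding, whereas you spell out the pullback argument (lifting a function on the coset space to a $\langle\ell_1\rangle$-invariant function on $G$), which is a correct and slightly more self-contained justification of that step.
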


\begin{proof}
For an edge $e=\edge{g;\ell}$, we call $\set{\ell,\ell^{-1}}$ the label of the edge. 
It is clear from the definition that in $\gp$ an edge is only connected to edges with the same label. So clearly there are no edges connecting elements in $E_{\ell}$ with $E_{\ell'}$ as long as $\ell'\not \in\set{ \ell,\ell^{-1}}$. 

We first prove the ``moreover'' part.
For $\ell\neq \ell^{-1}$ the graph isomorphism is given by the bijection $\edge{g;\ell}\leftrightarrow g$. For $\ell= \ell^{-1}$ the graph isomorphism is given by the map $\edge{g;\ell}\leftrightarrow \set{g,g^\ell}$ which is a bijection between $E_\ell$ and $G/<\ell_1>$. Indeed, one can check that whenever there is an edge $\set{\edge{g;\ell},\edge{g^{\ell'};\ell}}\in E(\gp(E_\ell))$ there is also an edge between $g$ and $g^{\ell'}$ in the corresponding Cayley (or Schreier) graph.

Since the operator $M_\ell^\parallel$ is defined as a normalized adjacency operator of an undirected regular graph, it is clearly symmetric and Markov.
By our assumption the largest second eigenvalue of the Cayley graphs (and therefore any quotient, including the Schreier graph under consideration) is at most $\lambda$. We deduce that $M_\ell^\parallel$ is $\lambda$-expanding.
The explicit formula in \eqref{eq:mpl} is immediate.
\end{proof}

\begin{cor} \label{cor:M-II}
For any $\ell \in L$, the operator $M^\parallel$ from Definition \ref{def:gpar}, preserves the subspace of functions supported on $E_\ell$, which we identify with $\bR^{E_\ell}$.
Moreover, the restriction of $M^\parallel$ to $\bR^{E_\ell}$ coincides with the operator $M_\ell^\parallel$.\qed
\end{cor}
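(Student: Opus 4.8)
The plan is to derive this corollary formally from the connected-component structure of $\gp$ already established in Lemma \ref{lem:M-II-c}, together with the elementary fact that the (normalized) adjacency operator of a graph is block-diagonal with respect to the partition of its vertex set into connected components.

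First I would record, exactly as in the proof of Lemma \ref{lem:M-II-c}, that in $\gp$ an edge $\edge{g;\ell}$ is joined only to edges carrying the same label $\set{\ell,\ell^{-1}}$; hence $E$ is the disjoint union of the sets $E_\ell$ over the distinct labels $\set{\ell,\ell^{-1}}$, and this is precisely the decomposition of the vertex set of $\gp$ into connected components (each $E_\ell$ being a single component $\gp(E_\ell)$). In particular no edge of $\gp$ has exactly one endpoint in $E_\ell$. Now the first assertion is immediate: if $f\in\bR^E$ is supported on $E_\ell$ and $e\notin E_\ell$, then all neighbours of $e$ in $\gp$ lie outside $E_\ell$, so $M^\parallel f(e)=0$; thus $M^\parallel f$ is again supported on $E_\ell$, i.e. $M^\parallel$ preserves the subspace $\bR^{E_\ell}$. (Alternatively: $M^\parallel$ is symmetric and the subspaces $\bR^{E_\ell}$ are the pairwise-orthogonal summands of the component decomposition, so $M^\parallel$ is block diagonal.)

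For the ``moreover'' part I would simply compare formulas. For $e=\edge{g;\ell}\in E_\ell$, the value $M^\parallel f(e)$ is the normalized sum of $f$ over the neighbours of $e$ in $\gp$, namely over $\edge{g^{\ell'};\ell}$ with $type(\ell')\neq type(\ell)$, with normalization $\tfrac1r$ (the number of $\ell'\in L$ of the opposite type); since all these neighbours already lie in $E_\ell$, this is verbatim the formula \eqref{eq:mpl} defining $M_\ell^\parallel$. Hence $M^\parallel$ restricted to $\bR^{E_\ell}$ equals $M_\ell^\parallel$. There is essentially no obstacle here: the only point needing (trivial) care is that the degree and the normalization constant used for $\gp$ and for its component $\gp(E_\ell)$ agree, which holds because the neighbourhood of a vertex of $E_\ell$ is contained in $E_\ell$, and the count $r$ of those neighbours (with multiplicity, in the case $\ell=\ell^{-1}$) is exactly what Lemma \ref{lem:M-II-c} records.
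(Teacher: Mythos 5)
Your proof is correct and follows exactly the route the paper intends: the corollary is stated with an immediate \qed because it is regarded as a direct consequence of the connected-component decomposition established in Lemma \ref{lem:M-II-c}, and your write-up simply makes explicit the block-diagonality of the normalized adjacency operator with respect to that decomposition and the agreement of normalizations in formula \eqref{eq:mpl}. Nothing is missing.
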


We are now in a position to state the main result of this section.

\begin{pro} \label{pro:M}
Let $\mbox{Cay}^2(A;G;B) = (V,E,S)$ be a $\lambda$-expanding left/right Cayley complex.
Let $0 < \gamma < 1$ and define the operator $M_\gamma = \gamma M + (1 - \gamma) M^\parallel \,:\, \bR^E \rightarrow \bR^E$.
Then for any $0 < \delta < 1$ and any $R \subset E$,
\begin{equation}
\langle M_\gamma 1_R , 1_R \rangle  \geq \delta |R| \qquad \Rightarrow \qquad |R| \geq \frac{\delta - \lambda}{2r}|E|.
\end{equation}
\end{pro}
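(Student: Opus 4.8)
The plan is to decompose the quadratic form $\langle M_\gamma 1_R, 1_R\rangle$ according to the label partition of the edge set $E = \bigsqcup_{[\ell]} E_\ell$, where the union is over the $r$ distinct labels $\{\ell,\ell^{-1}\}$ (note $|E| = 2r|V|$ in the notation, since there are $2r$ edges leaving each vertex, but edges are shared, so $|E_\ell| = |V|$ or $|V|/2$ per label and there are $\sim r$ of them; in any case $\sum_\ell |E_\ell| = |E|$). Write $R_\ell = R \cap E_\ell$ and $1_{R_\ell}$ for its indicator. The key structural input is Corollary \ref{cor:M-II}: the operator $M^\parallel$ block-diagonalizes across the $E_\ell$, with the $E_\ell$-block equal to $M_\ell^\parallel$, which by Lemma \ref{lem:M-II-c} is symmetric, Markov and $\lambda$-expanding. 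Hence $\langle M^\parallel 1_R, 1_R\rangle = \sum_\ell \langle M_\ell^\parallel 1_{R_\ell}, 1_{R_\ell}\rangle$, a clean sum of ``local'' contributions. For the $M$ term we simply use that $M$ is symmetric, Markov and $\lambda$-expanding by Lemma \ref{lem:M-I}, but we only need a crude bound: $\langle M 1_R, 1_R\rangle \le \langle 1_R, 1_R\rangle = |R|$ since $M$ is a Markov operator (so $\|M\|\le 1$ on the relevant cone, in fact $\langle M f, f\rangle \le \langle f,f\rangle$ always for a symmetric Markov operator).

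The argument then runs as follows. Assume $\langle M_\gamma 1_R, 1_R\rangle \ge \delta|R|$. Expand: $\gamma \langle M 1_R,1_R\rangle + (1-\gamma)\sum_\ell \langle M_\ell^\parallel 1_{R_\ell}, 1_{R_\ell}\rangle \ge \delta|R|$. Using $\langle M 1_R, 1_R\rangle \le |R|$ and $|R| = \sum_\ell |R_\ell|$, we get $\gamma |R| + (1-\gamma)\sum_\ell \langle M_\ell^\parallel 1_{R_\ell},1_{R_\ell}\rangle \ge \delta \sum_\ell |R_\ell|$, i.e. $(1-\gamma)\sum_\ell \langle M_\ell^\parallel 1_{R_\ell},1_{R_\ell}\rangle \ge (\delta - \gamma)\sum_\ell |R_\ell|$. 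Dividing by $(1-\gamma)$ and setting $\delta' = \frac{\delta-\gamma}{1-\gamma}$, we obtain $\sum_\ell \langle M_\ell^\parallel 1_{R_\ell}, 1_{R_\ell}\rangle \ge \delta' \sum_\ell |R_\ell|$. By averaging there must exist some label $\ell$ with $\langle M_\ell^\parallel 1_{R_\ell}, 1_{R_\ell}\rangle \ge \delta' |R_\ell|$ (this requires $|R_\ell| > 0$; if $\delta' \le 0$ the conclusion is trivial, and one checks $\delta' \le \delta$ so we may as well assume $\delta' > 0$). Applying Lemma \ref{lem:AC} to $M_\ell^\parallel$ on the vertex set $E_\ell$ yields $|R_\ell| \ge (\delta' - \lambda)|E_\ell|$.

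To finish, I need to convert a lower bound on $|R_\ell|$ for a single label into a lower bound on $|R| \ge |R_\ell|$ in terms of $|E|$. Here $|E_\ell| \ge |E|/(2r)$ is false in general — rather, since there are at most $2r$ labels and they partition $E$, the \emph{largest} $E_\ell$ has size $\ge |E|/(2r)$, but the $\ell$ produced by averaging need not be that one. The correct move is to be slightly more careful: redo the averaging weighted by $|E_\ell|$, or simply observe that we want $|R| \ge \frac{\delta-\lambda}{2r}|E|$, and it suffices to exhibit one label with $|R_\ell| \ge \frac{\delta-\lambda}{2r}|E|$. Actually the cleanest route: from $\sum_\ell \langle M_\ell^\parallel 1_{R_\ell},1_{R_\ell}\rangle \ge \delta'\sum_\ell|R_\ell|$ and applying Lemma \ref{lem:AC}'s proof inequality $\langle M_\ell^\parallel 1_{R_\ell}, 1_{R_\ell}\rangle \le \frac{|R_\ell|^2}{|E_\ell|} + \lambda|R_\ell|$ term by term, sum to get $\sum_\ell \frac{|R_\ell|^2}{|E_\ell|} \ge (\delta' - \lambda)\sum_\ell |R_\ell|$. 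Since each $|E_\ell| \le |V| = |E|/(2r)$... wait, I should use $|E_\ell| \le |E|/r$ roughly; in any case bounding $\frac{|R_\ell|^2}{|E_\ell|} \le \frac{|R_\ell|}{|E_\ell|}|R| \le \frac{2r|R_\ell|}{|E|}|R|$ gives $\frac{2r|R|}{|E|}\sum_\ell|R_\ell| \ge (\delta'-\lambda)\sum_\ell|R_\ell|$, hence $|R| \ge \frac{\delta'-\lambda}{2r}|E| \ge \frac{\delta-\lambda}{2r}|E|$ after checking $\delta' \ge \delta$ is false — rather one checks $\delta' - \lambda \ge \delta - \lambda$? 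No: $\delta' = \frac{\delta-\gamma}{1-\gamma} \le \delta$ since $\delta < 1$. So this would give a \emph{weaker} bound. The resolution, and the step I expect to be the main subtlety, is that the $\gamma$-averaging must be done correctly: one should \emph{not} throw away the $M$ term so crudely, or alternatively the statement's constant already absorbs this. I would carefully track that $\langle M 1_R, 1_R\rangle \le |R|$ combined with the block decomposition gives exactly $\langle M_\gamma 1_R, 1_R\rangle \le \gamma|R| + (1-\gamma)(\sum_\ell \frac{|R_\ell|^2}{|E_\ell|} + \lambda|R|)$, and then since $|E_\ell|\le |E|/(2r)$ is the wrong direction, instead use $\frac{|R_\ell|^2}{|E_\ell|}\le \frac{|R_\ell|\cdot|R|}{|E_\ell|}$ and $|E_\ell| \ge$ (something); the honest bound is $|E_\ell| \le |E|$, giving $\sum \frac{|R_\ell|^2}{|E_\ell|} \le \frac{|R|^2}{\min_\ell|E_\ell|}$. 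The cleanest correct finish: apply Lemma \ref{lem:AC} to the averaged operator on the label $\ell$ maximizing $|E_\ell|$ among those with positive contribution, using $|E_\ell| \ge |E|/(2r)$ there, which is exactly why the factor $\frac{1}{2r}$ appears. I would therefore restructure the averaging to select, among labels $\ell$ with $\langle M_\ell^\parallel 1_{R_\ell},1_{R_\ell}\rangle \ge \delta'|R_\ell|$, and note the total $|E| = \sum|E_\ell|$ forces $\max|E_\ell|\ge|E|/(2r)$; then a short pigeonhole picks a heavy label satisfying the local inequality, and Lemma \ref{lem:AC} closes it. The main obstacle is getting this counting/pigeonhole bookkeeping exactly right so the constant comes out as $\frac{\delta-\lambda}{2r}$ rather than something worse.
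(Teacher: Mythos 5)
Your proposal does not close, and the gap is precisely at the step you flag as "the main subtlety." The decisive move in the paper's proof is to exploit convexity directly: since $\langle M_\gamma 1_R,1_R\rangle=\gamma\langle M 1_R,1_R\rangle+(1-\gamma)\langle M^\parallel 1_R,1_R\rangle\geq\delta|R|$ is a convex average, at least one of the two terms must individually be $\geq\delta|R|$. If $\langle M 1_R,1_R\rangle\geq\delta|R|$, then Lemmas \ref{lem:M-I} and \ref{lem:AC} give $|R|\geq(\delta-\lambda)|E|$ outright, which is stronger than required. If $\langle M^\parallel 1_R,1_R\rangle\geq\delta|R|$, the block decomposition of Corollary \ref{cor:M-II} plus pigeonhole yields a label $\ell$ with $\langle M_\ell^\parallel 1_{R_\ell},1_{R_\ell}\rangle\geq\delta|R_\ell|$, and Lemmas \ref{lem:M-II-c} and \ref{lem:AC} give $|R|\geq|R_\ell|\geq(\delta-\lambda)|E_\ell|\geq\frac{\delta-\lambda}{2r}|E|$. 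You instead replace $\langle M1_R,1_R\rangle$ by the worst-case bound $|R|$ and push the entire loss into the parallel term, which degrades $\delta$ to $\delta'=\frac{\delta-\gamma}{1-\gamma}\leq\delta$ (and $\delta'\leq 0$ once $\gamma\geq\delta$), so your route can only ever produce $\frac{\delta'-\lambda}{2r}|E|$ --- strictly weaker than the statement and vacuous for large $\gamma$. You notice this yourself, but the final paragraph is a list of candidate repairs, none of which is carried out, so the argument is incomplete; no amount of re-weighting the averaging over labels can recover the lost factor, because the loss was incurred before the label decomposition was ever used.

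Two smaller corrections. Your worry that ``$|E_\ell|\geq|E|/(2r)$ is false in general'' is unfounded: every $E_\ell$ has size $|G|$ (when $\ell\neq\ell^{-1}$) or $|G|/2$ (when $\ell=\ell^{-1}$), while $|E|=|G|(|A|+|B|)/2=r|G|$, so $|E_\ell|\geq\frac{1}{2r}|E|$ holds for \emph{every} label; the pigeonhole therefore does not need to select a ``heavy'' block, which is exactly why the paper's case analysis terminates cleanly. Relatedly, the partition $E=\bigsqcup_{\ell\in L'}E_\ell$ runs over representatives of the pairs $\{\ell,\ell^{-1}\}$, of which there are between $r$ and $2r$, not over $2r$ separate labels; this does not affect anything once the uniform lower bound on $|E_\ell|$ is in hand.
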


\begin{proof}[Proof of Proposition \ref{pro:M}]
Since $M_\gamma$ is the convex sum of $M_1 = M$ and $M_0 = M^\parallel$, either $\langle M 1_R , 1_R \rangle  \geq \delta |R|$ or $\langle M^\parallel 1_R , 1_R \rangle  \geq \delta |R|$.
If $\langle M 1_R , 1_R \rangle  \geq \delta |R|$ then by Lemmas \ref{lem:M-I} and \ref{lem:AC}, we get that $|R| \geq (\delta - \lambda)|E|$.
Assume that $\langle M^\parallel 1_R , 1_R \rangle  \geq \delta |R|$.
Let $L' \subset L$ be a set of representatives such that for any $\ell\in L$, exactly one element of $\{\ell,\ell^{-1}\}$ belongs to $L'$.
Denote $R_\ell = R \cap E_\ell$ for any $\ell\in L'$, and note that $R = \bigsqcup_{\ell\in L'} R_\ell$ and $E = \bigsqcup_{\ell\in L'} E_\ell$.
Therefore by Corollary \ref{cor:M-II}, $\langle M^\parallel 1_R , 1_R \rangle = \sum_{\ell\in L} \langle M_\ell^\parallel 1_{R_\ell} , 1_{R_\ell} \rangle$. 
Hence there exists $\ell\in L$, such that $\langle M_\ell^\parallel 1_{R_\ell} , 1_{R_\ell} \rangle \geq \delta |R_\ell|$.
Then by  Lemmas \ref{lem:M-II-c} and \ref{lem:AC}, we get that $|R_\ell| \geq (\delta - \lambda)|E_\ell|$.
Since $|R| \geq |R_\ell|$ and $|E_\ell| \geq \frac{1}{2r} |E|$, we get the claim.
\end{proof}

We end this section with some explicit constructions of very good expander Cayley graphs.
Recall that an $r$-regular graph $X$ is called Ramanujan if $\lambda(X) \leq \frac{2\sqrt{r-1}}{r}$.
In \cite{LPS}, Lubotzky, Phillips and Sarnak gave the first explicit construction of Ramanujan Cayley graphs.

\begin{thm} [\cite{LPS}] \label{thm:LPS}
For any prime $p$, $p \equiv 1 \,(\mbox{mod }4)$, and any prime $q$, $q \equiv 1 \,(\mbox{mod }4p)$, there exist an explicit construction of a symmetric generating subset $S_{p,q} \subset PSL_2(\bF_q)$, of size $p+1$, such that the Cayley graph $\mbox{Cay}(PSL_2(\bF_q),S_{p,q})$ is Ramanujan, i.e.
\begin{equation}
\lambda \left( \mbox{Cay}(PSL_2(\bF_q),S_{p,q}) \right) \leq \frac{2\sqrt{p}}{p+1}.
\end{equation}
\end{thm}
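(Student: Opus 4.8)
The plan is to realize $\mathrm{Cay}(PSL_2(\bF_q),S_{p,q})$ as a finite quotient of the $(p+1)$-regular Bruhat--Tits tree $\mathcal{T}$ of $PGL_2(\bQ_p)$ by a congruence subgroup of an arithmetic lattice coming from a quaternion algebra. Under this identification the normalized adjacency operator of the graph becomes (a normalization of) the Hecke operator $T_p$, and the eigenvalue bound $\lambda\le \tfrac{2\sqrt p}{p+1}$ is exactly the Ramanujan--Petersson bound $|a_p|\le 2\sqrt p$ for weight-$2$ holomorphic cusp forms, transported through the Jacquet--Langlands correspondence.

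First I would produce the generators. Let $\mathbb{H}$ be the Hamilton quaternion algebra over $\bQ$ (ramified precisely at $2$ and $\infty$), with its standard integral order and reduced norm $N(x_0+x_1i+x_2j+x_3k)=x_0^2+x_1^2+x_2^2+x_3^2$. Jacobi's four-square theorem gives $8(p+1)$ representations of $p$ as a sum of four squares; after the parity normalization available because $p\equiv 1\pmod 4$ (take $x_0$ odd and positive, $x_1,x_2,x_3$ even), one is left with exactly $p+1$ integral quaternions $\alpha_1,\dots,\alpha_{p+1}$ of norm $p$. Since $x_0\neq 0$ none is self-conjugate, so they fall into conjugate pairs with $\alpha\bar\alpha=p$. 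Because $q\equiv 1\pmod 4$ there is an explicit splitting $\mathbb{H}\otimes\bF_q\cong M_2(\bF_q)$ under which $N$ becomes the determinant; and because $q\equiv 1\pmod{4p}$ together with $p\equiv 1\pmod 4$, quadratic reciprocity gives $\left(\tfrac{p}{q}\right)=1$, so $p$ is a square in $\bF_q^\times$ and after rescaling the image of each $\alpha_i$ lies in $PSL_2(\bF_q)$. Define $S_{p,q}=\{\bar\alpha_1,\dots,\bar\alpha_{p+1}\}$; the conjugate-pair structure makes $S_{p,q}$ symmetric, and everything so far is manifestly explicit.

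Next I would set up the tree-to-graph dictionary. Let $\Gamma$ be $\mathbb{H}[1/p]^\times$ modulo its center; it acts on $\mathcal{T}$, which is $(p+1)$-regular since $p$ is unramified in $\mathbb{H}$, and the $p+1$ edges at a fixed vertex correspond to the $p+1$ associate classes of norm-$p$ quaternions. Strong approximation for $\mathbb{H}^\times$ identifies the quotient of $\mathcal{T}$ by the principal congruence subgroup $\Gamma(q)$ (elements $\equiv 1 \pmod q$) with $\mathrm{Cay}(PSL_2(\bF_q),S_{p,q})$; in particular this shows the graph is connected, i.e.\ that $S_{p,q}$ generates. Under this identification the normalized adjacency operator of the graph is $\frac{1}{p+1}T_p$, the Hecke operator at $p$ acting on functions on the finite graph, and the ``trivial'' eigenvalues (the constant function, and the bipartite alternating function when the graph is bipartite) account for the eigenvalues $\pm(p+1)$.

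Finally, the spectral estimate. Every eigenvalue of $T_p$ on $\ell^2$ of the finite graph other than the trivial ones $\pm(p+1)$ occurs among the $T_p$-eigenvalues on the space of automorphic forms for $\mathbb{H}^\times$ that are nontrivial at the archimedean place; by Jacquet--Langlands these equal Hecke eigenvalues $a_p$ of weight-$2$ holomorphic cusp forms of level dividing a fixed power of $2q$. The Ramanujan--Petersson bound for such forms — Eichler's theorem, deduced from the Weil conjectures for curves, and subsumed in Deligne's theorem — yields $|a_p|\le 2\sqrt p$, hence every nontrivial eigenvalue $\mu$ of the adjacency operator satisfies $|\mu|\le 2\sqrt p$ and $\lambda\big(\mathrm{Cay}(PSL_2(\bF_q),S_{p,q})\big)\le \tfrac{2\sqrt p}{p+1}$. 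The genuinely hard, non-elementary step is this last one: the Ramanujan--Petersson bound rests on Deligne's proof of the Weil conjectures and cannot be obtained by soft methods. The remaining steps are elementary but need care — pinning down the exact level of the associated modular forms, verifying via strong approximation that $\Gamma(q)\backslash\mathcal T$ is the full Cayley graph and is connected, and checking that the only trivial eigenvalues are $\pm(p+1)$.
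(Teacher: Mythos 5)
The paper does not prove this statement at all: it is quoted verbatim from \cite{LPS} and is explicitly flagged in the introduction as the one non-elementary ingredient of the whole construction ("this work is quite elementary (with the exception of Theorem \ref{thm:LPS})"). So there is no internal proof to compare against; what you have written is an outline of the original Lubotzky--Phillips--Sarnak argument, and as such it is essentially correct: the $p+1$ generators from the normalized Jacobi representations of $p$ as a sum of four squares, the splitting of the Hamilton quaternions at $q$ using $q\equiv 1\pmod 4$, the reciprocity computation $\left(\frac{p}{q}\right)=\left(\frac{q}{p}\right)=1$ forcing the image into $PSL_2(\bF_q)$, the identification of the congruence quotient of the $(p+1)$-regular tree with the Cayley graph (connectedness via strong approximation), and the reduction of the spectral bound to the Ramanujan--Petersson bound $|a_p|\le 2\sqrt p$ for weight-$2$ cusp forms via Jacquet--Langlands, with Eichler's theorem (Weil conjectures for curves) supplying the bound. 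You correctly identify that last step as the genuinely deep one. The remaining gaps you acknowledge (exact level of the forms, which in \cite{LPS} is $16q^2$; the dichotomy $\left(\frac{p}{q}\right)=\pm 1$ giving $PSL_2$ versus $PGL_2$ and non-bipartite versus bipartite; verifying generation) are real but routine relative to the source, and note that since the paper's Definition \ref{dfn:exp} is one-sided, only the second largest eigenvalue matters, so the bipartite caveat is harmless here. In short: a faithful summary of the cited proof, not a replacement for it.
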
 

The following Proposition, which we shall need in Section \ref{sec:proofs}, enables us more freedom in choosing the degrees of the Cayley graphs, at the price of making them only quasi-Ramanujan, i.e. $\lambda$-expander with $\lambda \leq \frac{c \sqrt{r-1}}{r}$, where $c$ is an absolute constant and $r$ is the degree. 

\begin{pro} \label{pro:Cayley-expanders}
Let $p$, $q$ and $S_{p,q} \subset PSL_2(\bF_q)$ be as in Theorem \ref{thm:LPS}.
Then for any $p+1 - \sqrt{p} \leq r \leq p+1$ and any symmetric subset $S \subset S_{p,q}$ of size  $r = |S|$, we have
\begin{equation}
\lambda \left( \mbox{Cay}(PSL_2(\bF_q),S) \right) \leq 5r^{-1/2}.
\end{equation}
\end{pro}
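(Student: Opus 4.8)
The plan is to bound $\lambda\big(\mathrm{Cay}(PSL_2(\bF_q),S)\big)$ by comparing the (normalized) adjacency operator of the sub-Cayley graph with that of the Ramanujan graph guaranteed by Theorem \ref{thm:LPS}. Write $T$ for the normalized adjacency operator of $\mathrm{Cay}(PSL_2(\bF_q),S_{p,q})$ and $T'$ for that of $\mathrm{Cay}(PSL_2(\bF_q),S)$, acting on the space of functions on $V=PSL_2(\bF_q)$. Both are symmetric, both fix the constant function $1_V$, and both leave invariant the subspace $W = 1_V^\perp$. The key point is that the \emph{unnormalized} adjacency matrices are related by $A_{S_{p,q}} = A_S + A_{S_{p,q}\setminus S}$, where $A_{S_{p,q}\setminus S}$ is the adjacency matrix of another Cayley graph (on the symmetric set $S_{p,q}\setminus S$ of size $p+1-r$), hence a nonnegative-eigenvalue-free... more precisely a symmetric matrix whose eigenvalues lie in $[-(p+1-r),\,p+1-r]$ since it is the adjacency matrix of a $(p+1-r)$-regular graph.

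The main steps, in order, would be: (1) For any unit vector $f\in W$, write $\langle A_S f,f\rangle = \langle A_{S_{p,q}} f,f\rangle - \langle A_{S_{p,q}\setminus S} f,f\rangle$. (2) Bound the first term in absolute value by $(p+1)\cdot\frac{2\sqrt p}{p+1} = 2\sqrt p$ using the Ramanujan bound of Theorem \ref{thm:LPS} (valid on $W$). (3) Bound the second term in absolute value by $p+1-r$, its trivial spectral radius as a $(p+1-r)$-regular graph. (4) Combine: $|\langle A_S f,f\rangle| \le 2\sqrt p + (p+1-r)$, so $\lambda\big(\mathrm{Cay}(PSL_2(\bF_q),S)\big) \le \frac{2\sqrt p + (p+1-r)}{r}$. (5) Finally plug in the hypothesis $p+1-\sqrt p \le r \le p+1$: then $p+1-r \le \sqrt p$ and $\sqrt p \le \frac{r}{\sqrt{r-1}}\cdot(\text{const})$; concretely $p \le r \le p+1$ gives $\sqrt p \le \sqrt r$, so the numerator is at most $3\sqrt p \le 3\sqrt r \le 3\sqrt r$, yielding $\lambda \le \frac{3\sqrt r}{r} = 3 r^{-1/2} \le 5 r^{-1/2}$. (One keeps the constant $5$ rather than $3$ to absorb the slack between $p$ and $r$ and any off-by-one effects, e.g. using $\sqrt p \le \sqrt{r+1} \le \sqrt 2\,\sqrt r$ in the worst case and $p+1-r\le\sqrt p\le\sqrt{2r}$, giving numerator $\le 2\sqrt{2r}+\sqrt{2r}=3\sqrt{2r}\le 5\sqrt r$.)

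The step I expect to require the most care is step (5), the bookkeeping translating the bound in terms of $p$ into a bound in terms of $r$, because $r$ can be as small as $p+1-\sqrt p$, so $\sqrt p$ is not literally $\le \sqrt r$ and one must be slightly generous with constants; this is exactly why the statement claims the clean constant $5$ rather than the sharper $\approx 3$. Everything else — the additivity of adjacency matrices of Cayley graphs on disjoint symmetric generating sets, the fact that $S_{p,q}\setminus S$ is again symmetric (which holds because $S$ is assumed symmetric and $S_{p,q}$ is symmetric), and the variational characterization of $\lambda$ on $1_V^\perp$ — is routine. A subtlety worth a remark: one should note $1\notin S_{p,q}\setminus S$ automatically since $1\notin S_{p,q}$, so $A_{S_{p,q}\setminus S}$ genuinely is the adjacency matrix of a simple $(p+1-r)$-regular graph and the eigenvalue bound $\le p+1-r$ applies; and if $S_{p,q}\setminus S$ happens to be empty the statement is just Theorem \ref{thm:LPS} with $r=p+1$ and $\frac{2\sqrt p}{p+1}\le 5(p+1)^{-1/2}$ trivially.
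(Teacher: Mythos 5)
Your proposal is correct and follows essentially the same route as the paper: decompose the adjacency matrix as $A_{S_{p,q}} = A_S + A_{S_{p,q}\setminus S}$, bound the quadratic form of $A_S$ on $1_V^\perp$ by the Ramanujan bound $2\sqrt{p}$ plus the trivial bound $p+1-r$ for the $(p+1-r)$-regular complement graph, and then use $p+1-r\le\sqrt{p}$ and $r\ge p+1-\sqrt{p}\ge p/2$ to convert $3\sqrt{p}/r$ into $5r^{-1/2}$. The only blemish is the parenthetical claim $\sqrt{p}\le\sqrt{r+1}$, which need not hold since $r$ may be as small as $p+1-\sqrt{p}$, but you immediately replace it with the correct estimate $\sqrt{p}\le\sqrt{2r}$, so the final bound stands.
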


\begin{proof}
Denote by $M = M_{S_{p,q}}$, $M' = M_{S}$ and $M'' = M_{S_{p,q} \setminus S} = M - M'$, the adjacency matrices of $\mbox{Cay}(G,S_{p,q})$, $\mbox{Cay}(G,S)$ and $\mbox{Cay}(G,S_{p,q} \setminus S)$, respectively.
Since $\mbox{Cay}(G,S_{p,q})$ is $|S_{p,q}|$-regular (resp. $\mbox{Cay}(G,S)$ is $|S|$-regular, resp. $\mbox{Cay}(G,S_{p,q} \setminus S)$ is $|S_{p,q} \setminus S|$-regular), the largest eigenvalue of $M$ is $|S_{p,q}|$ (resp. $M'$ is $|S|$, resp. $M''$ is $|S_{p,q} \setminus S|$), with corresponding eigenvector the constant function $1_G$.
Denote by $\lambda = \lambda \left( \mbox{Cay}(G,S_{p,q}) \right)$ and $\lambda' = \lambda \left( \mbox{Cay}(G,S) \right)$ the normalized second largest eigenvalue of $M$ and $M'$, respectively.
By the Courant-Fischer Formula we get  
\begin{equation}
\lambda \cdot |S_{p,q}| = \max_{0 \ne v \perp 1_G} \frac{v^tMv}{v^tv}, \quad \lambda' \cdot |S| = \max_{0 \ne v \perp 1_G} \frac{v^tM'v}{v^tv} \quad \mbox{and} \quad |S_{p,q} \setminus S| = \max_{0 \ne v} \frac{v^tM''v}{v^tv}.
\end{equation}
Therefore 
\begin{equation}
\lambda' \cdot |S| = \max_{0 \ne v \perp 1_G} \frac{v^tM'v}{v^tv} 
\leq \max_{0 \ne v \perp 1_G} \frac{v^tMv}{v^tv} + \max_{0 \ne v} \frac{v^tM''v}{v^tv}
\leq \lambda \cdot |S_{p,q}| + |S_{p,q} \setminus S|,
\end{equation}
and after dividing by $|S|$, and noting that $r \geq p+1 -\sqrt{p}\geq \frac{p}{2}$, we get
\begin{equation}
\lambda' \leq \lambda \frac{|S_{p,q}|}{|S|}   + \frac{|S_{p,q} \setminus S|}{|S|} \leq \frac{2\sqrt{p}}{p+1}\frac{p+1}{r} + \frac{p+1 - r}{r} \leq \frac{3 \sqrt{p}}{r} \leq 5 r^{-1/2}.
\end{equation}
\end{proof}

\section{Codes} \label{sec:codes}

Recall that a (binary, linear) error correcting code $C$ is a subspace of $\bF_2^n$, where $n=n(C)$ is called the block-length of $C$.
Define the rate and the normalized distance of the code to be 
\begin{equation}
\rho(C) = \frac{\dim(C)}{n} \qquad \mbox{and} \qquad \delta(C) = \frac{\min\{\mbox{wt}(v) \,:\, 0 \ne v \in C\}}{n},
\end{equation}
where $\mbox{wt}(v) = |\{1\leq i \leq n \,:\, v_i \ne 0\}|$ is the Hamming weight.
A family of codes is called good if their rates and normalized distances are uniformly bounded away from zero.

Let us now describe the Sipser-Spielman \cite{SS} construction of expander codes for Cayley graphs, building on the work of Tanner \cite{Tan}.
Let $G$ be a group, $A \subset G$ a symmetric subset of size $r = |A|$, $1\not\in A$, and let $\mbox{Cay}(A;G) = (V,E)$ be the (left) Cayley graph.
Let $C_0 \leq \bF_2^A$ be a code of length $r$.
For any $g \in V$, the map $a \leftrightarrow \ledge{a,g}$ gives a bijection between $A$ and $E(g) = \{ e \in E \,:\, g \in e\}$.
For any $f \in \bF_2^E$ and any $g\in V$, define the local view of $f$ at $g$ to be the restriction $f_g = f |_{E(g)} \in \bF_2^{E(g)} \cong \bF_2^A$.
Define the expander code w.r.t. the Cayley graph $\mbox{Cay}(A;G)$ and the small code $C_0$, to be 
\begin{equation} \label{eq:SS-codes}
C[G,A,C_0] = \{f \,:\,  \bF_2^E \;:\; \forall g \in V, \quad f_g \in  C_g \},
\end{equation}
where $C_g\cong C_0$ is defined explicitly by 
\begin{equation}
C_g = \sett{w\in \bF_2^{E(g)}}{w(\ledge{\cdot,g})\in C_0}.
\end{equation}
In their work \cite{SS}, Sipser and Spileman proved the following lower bounds on the rate and normalized distance of the expander codes, in terms of the parameters of the small code and the second largest eigenvalue (i.e. the expansion) of the graph. 

\begin{pro} [\cite{SS}] \label{pro:SS}
Let $\mbox{Cay}(A;G)$ be an $r$-regular $\lambda$-expander Cayley graph and $C_0$ a code of length $r$.
Then the expander code $C[G,A,C_0]$ is of length $n = \frac{r}{2}|G|$, with parameters
\begin{equation}
\rho(C[G,A,C_0]) \geq 2 \rho(C_0) - 1 \qquad \mbox{and} \qquad \delta(C[G,A,C_0]) \geq \delta(C_0) ( \delta(C_0) - \lambda).
\end{equation}
In particular, if $\rho(C_0) > \frac{1}{2}$, $\delta(C_0) > \lambda$ and $\mbox{Cay}(A;G)$ runs over a family of $r$-regular $\lambda$-expander graphs, then the resulting family of expander codes is good.
\end{pro}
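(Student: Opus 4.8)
The plan is to establish the two bounds separately using elementary linear algebra and expansion, exactly as in Sipser--Spielman, then derive the ``in particular'' conclusion by plugging in constants. Throughout, write $n = \frac r2 |G|$ for the block-length, noting that $|E| = n$ since each vertex has degree $r$ and edges are double-counted.

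For the rate bound, the idea is a constraint-counting argument. The code $C[G,A,C_0]$ is cut out from $\bF_2^E$ by imposing, at each vertex $g$, that $f_g \in C_g \cong C_0$; each such constraint is a system of at most $r - \dim(C_0) = r(1-\rho(C_0))$ linear equations on the $r$ coordinates of $E(g)$. Hence $\dim C[G,A,C_0] \ge |E| - |G|\cdot r(1-\rho(C_0)) = n - 2n(1-\rho(C_0)) = n(2\rho(C_0)-1)$, which gives $\rho \ge 2\rho(C_0)-1$. (When $\rho(C_0) \le \frac12$ the bound is vacuous.)

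For the distance bound I would argue by contradiction on the support. Let $0 \ne f \in C[G,A,C_0]$ and set $R = \mathrm{supp}(f) \subseteq E$, so $\mathrm{wt}(f) = |R|$. Let $R' \subseteq V$ be the set of vertices $g$ for which $f_g \ne 0$; every edge of $R$ has both endpoints in $R'$, and since $f_g \in C_g \cong C_0$ is a nonzero codeword whenever $g \in R'$, each such $g$ touches at least $\delta(C_0)\,r$ edges of $R$. Counting incidences between $R'$ and $R$ gives, on one hand, $\sum_{g \in R'} |E(g)\cap R| \ge \delta(C_0)\, r\, |R'|$, and on the other hand $\sum_{g \in R'} |E(g)\cap R| = 2|R|$ (each edge in $R$ counted from its two endpoints, both in $R'$). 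To turn this into a bound on $|R|$ I apply the expander mixing / Alon--Chung estimate to $R'$: the number of edges of $X$ with both endpoints in $R'$ is at most $\frac{r}{2}\bigl(\frac{|R'|^2}{|V|} + \lambda |R'|\bigr)$ — equivalently this is $\langle T_X 1_{R'},1_{R'}\rangle \cdot \frac r2$ bounded via the argument in Lemma \ref{lem:AC}. Since $R \subseteq \{$edges inside $R'\}$, combining $\delta(C_0) r |R'| \le 2|R| \le r\bigl(\tfrac{|R'|^2}{|V|} + \lambda|R'|\bigr)$ and dividing by $r|R'|$ yields $\delta(C_0) \le \tfrac{|R'|}{|V|} + \lambda$, i.e. $|R'| \ge (\delta(C_0)-\lambda)|V|$. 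Feeding this back, $|R| \ge \tfrac12 \delta(C_0) r |R'| \ge \tfrac12 \delta(C_0)(\delta(C_0)-\lambda) r|V| = \delta(C_0)(\delta(C_0)-\lambda)\, n$, which is the claimed $\delta \ge \delta(C_0)(\delta(C_0)-\lambda)$.

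Finally, for the ``in particular'' statement: if $\rho(C_0) > \frac12$ then $2\rho(C_0)-1 > 0$ is a fixed positive constant; if $\delta(C_0) > \lambda$ then $\delta(C_0)(\delta(C_0)-\lambda) > 0$ is a fixed positive constant; both are independent of $|G|$, so as $\mathrm{Cay}(A;G)$ ranges over an infinite family of $r$-regular $\lambda$-expanders the resulting codes have rate and normalized distance bounded uniformly away from zero, hence form a good family. The only mild subtlety — and the step I would be most careful about — is the double-counting bookkeeping in the distance argument (that every edge of $R$ genuinely has both endpoints in $R'$, and that the factor of $2$ from $|E| = \tfrac r2|V|$ versus edge-endpoint incidences is tracked correctly); the expansion input itself is exactly Lemma \ref{lem:AC} applied with $\mathcal M = T_X$, $\delta \rightsquigarrow \delta(C_0)$, so no new idea is needed there.
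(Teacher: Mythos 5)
Your proposal is correct and follows essentially the same route as the paper: the identical constraint-counting argument for the rate, and for the distance the same vertex set ($R'$ in your notation, $R$ in the paper's) of vertices with nonzero local view, the same observation that each such vertex meets at least $\delta(C_0)r$ support edges, and the same application of Lemma \ref{lem:AC} to conclude $|R'|\geq(\delta(C_0)-\lambda)|V|$ before feeding back to bound the support size. Your incidence double-counting via the mixing-lemma form is just a repackaging of the paper's $\langle T1_R,1_R\rangle\geq\delta(C_0)|R|$ step, and the bookkeeping you flag is handled correctly.
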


For the sake of completeness we give a proof for Proposition \ref{pro:SS}.

\begin{proof}
Let $\rho_0 = \rho(C_0)$, $\delta_0 = \delta(C_0)$ and $C = C[G,A,C_0]$.
The dimension of the code $C = C[G,A,C_0]$ is at least the number of degrees of freedom, $|E|$, minus the number of constraints, $(r - \dim(C_0)) |V| = 2(1-\rho_0)|E|$, hence $\rho(C) = \frac{\dim C[G,A,C_0]}{|E|} \geq 1 - 2(1-\rho_0) = 2\rho_0 - 1$.

Let $0 \ne f \in C$ be such that $\delta(C) = \frac{\mbox{wt}(f)}{|E|}$, where $\mbox{wt}$ denotes the Hamming weight.
Denote $S = \{e \in E \,:\, f(e) \ne 0\}$ and $R = \{g \in V \,:\, \exists e \in S\cap E(g)  \}$.
For any $g \in R$, the local view  $f_g \in \bF_2^{E(g)}$ is a non-zero codeword of $C_0$, hence $\mbox{wt}(f_g) \geq \delta_0 r$, which implies that there are $\delta_0 r$ neighbors of $g$ inside $R$.
Let $T$ be the normalized adjacency operator of $\mbox{Cay}(A;G)$, which is a symmetric, Markov and $\lambda$-expanding.
Then $\langle T 1_R , 1_R \rangle = \sum_{g \in R}  T 1_R(g) =  \sum_{g \in R} \frac{|\{a\in A \,:\, \ledge{a,g} \in E \}|}{r} \geq  \delta_0 |R|$, combined with Lemma \ref{lem:AC}, gives us $|R| \geq (\delta_0 - \lambda)|V|$.
Finally, since each $v\in R$ is contained in at least $\delta_0 r$ edges from $S$, we get $ \delta(C) |E| = \mbox{wt}(f) = |S| \geq \frac{r}{2} \delta_0 |R| \geq \frac{r}{2} \delta_0 (\delta_0 - \lambda) |V| = \delta_0 (\delta_0 - \lambda) |E|$, hence $ \delta(C) \geq  \delta_0 (\delta_0 - \lambda)$.
\end{proof}

Let us now define our construction of left/right Cayley expander codes. Recall that $S(e),S(g)$ are the squares containing an edge $e$ or a vertex $g$ respectively.

\begin{dfn} \label{dfn:codes}
Let $\mbox{Cay}^2(A;G;B) = (V,E,S)$ be a left/right Cayley complex, $r = |A| = |B|$.
Fix a bijection between $A\leftrightarrow B \leftrightarrow \set{1,\ldots,r}$. Let $C_1 \leq \bF_2^r \cong \bF_2^A \cong \bF_2^B$ be a code of length $r$, and define $C_0$ to be its tensor code,
\begin{equation}
C_0 = C_1 \otimes C_1 = \{f \in \bF_2^{A\times B} \;:\; \forall a \in A,\; f(a,\cdot) \in C_1 \quad \mbox{and} \quad \forall b \in B,\; f(\cdot,b) \in C_1 \}.
\end{equation} 
Define for each edge $e\in E$ the local code at $e$ by
\begin{equation}
C_e = \sett{w\in \bF_2^{S(e)} }{ w\circ\iota_e \in C_1}.
\end{equation}
Define for each vertex $g\in G$ the local code at $g$ by
\begin{equation}
C_g = \sett{w\in \bF_2^{S(g)} }{ w\circ\iota_g \in C_0=C_1\otimes C_1}.
\end{equation}
For any $f \in \bF_2^S$ and any edge $e$ define the local view $f_e\in\bF_2^{S(e)}$ of $f$ at $e$ to be the restriction of $f$ to the squares containing $e$.
Define the first left/right Cayley expander code w.r.t. the left/right Cayley complex $\mbox{Cay}^2(A;G;B)$ and the code $C_1$, to be 
\begin{equation}
C[G,A,B,C_1] = \{f \,:\,  \bF_2^S \;:\; \forall e \in E,  \quad f_e \in  C_e\}.
\end{equation}
For any $f \in \bF_2^S$ and any $g \in V$, define the local view $f_g\in\bF_2^{S(g)}$ of $f$ at $g$, denoted $f_g \in \bF_2^{A \times B}$, to be the restriction of $f$ to the squares containing $g$.
Define the second left/right Cayley expander code w.r.t. the left/right Cayley complex $\mbox{Cay}^2(A;G;B)$ and the code $C_0$, to be 
\begin{equation}
C'[G,A,B,C_0] = \{f \,:\,  \bF_2^S \;:\; \forall g \in V,  \quad f_g \in  C_g\}.
\end{equation}
\end{dfn}

First let us note that the two left/right Cayley expander codes are in fact equal. 

\begin{lem} \label{lem:C=C'}
In the notations of Definition \ref{dfn:codes}, $C[G,A,B,C_1]  = C'[G,A,B,C_0]$.
\end{lem}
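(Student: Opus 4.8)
The plan is to show the two sets of functions on $\bF_2^S$ coincide by showing that the defining constraints are literally the same constraints, just bundled differently: the vertex constraint ``$f_g \in C_g$'' is equivalent to the conjunction of the edge constraints ``$f_e \in C_e$'' over the edges $e$ incident to $g$. Concretely, I would fix $f \in \bF_2^S$ and a vertex $g \in V$, and use the labelling map $\iota_g \colon A \times B \to S(g)$ to transport $f_g$ to a function $\tilde f_g = f_g \circ \iota_g \in \bF_2^{A \times B}$. By definition $f_g \in C_g$ means $\tilde f_g \in C_0 = C_1 \otimes C_1$, which by the definition of the tensor code means every row $\tilde f_g(a, \cdot) \in C_1$ and every column $\tilde f_g(\cdot, b) \in C_1$.

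The key step is then to match rows and columns of $\tilde f_g$ with edge-local views. For $a \in A$, the edge $e = \ledge{a,g} \in E_A$ is incident to $g$, and its squares are $S(e) = \{[a,g,b] : b \in B\}$ with labelling map $\iota_{e} \colon B \to S(e)$, $\iota_{e}(b) = [a,g,b]$. The matching-labels property (Remark \ref{rem:ML}), or simply unwinding the definitions, gives $\iota_g(a,b) = [a,g,b] = \iota_{\ledge{a,g}}(b)$, so the row $\tilde f_g(a,\cdot) \colon B \to \bF_2$ is exactly $f_e \circ \iota_e$. Hence $\tilde f_g(a,\cdot) \in C_1$ iff $f_e \in C_e$. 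Symmetrically, for $b \in B$ the edge $e' = \redge{g,b} \in E_B$ is incident to $g$, $\iota_{e'}(a) = [a,g,b] = \iota_g(a,b)$, so the column $\tilde f_g(\cdot,b)$ equals $f_{e'} \circ \iota_{e'}$, and $\tilde f_g(\cdot,b) \in C_1$ iff $f_{e'} \in C_{e'}$. Therefore $f_g \in C_g$ if and only if $f_e \in C_e$ for all $r$ edges $e$ incident to $g$ (the $|A|$ left edges give the row constraints, the $|B|$ right edges give the column constraints).

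Finally I would assemble the equivalence globally. Every edge $e \in E$ is incident to exactly two vertices, so quantifying ``$f_e \in C_e$ for all $e \in E$'' is the same as quantifying ``$f_e \in C_e$ for all $e \ni g$'' over all $g \in V$; combined with the per-vertex equivalence just established, $f \in C[G,A,B,C_1]$ iff for every $g \in V$ we have $f_e \in C_e$ for all $e \ni g$, iff for every $g \in V$ we have $f_g \in C_g$, iff $f \in C'[G,A,B,C_0]$. This proves $C[G,A,B,C_1] = C'[G,A,B,C_0]$.

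I do not expect a genuine obstacle here; the statement is essentially a bookkeeping identity. The one point requiring a little care is the bijection between $A$ (resp. $B$) and $\{1,\dots,r\}$ fixed in Definition \ref{dfn:codes}: one must check that the tensor structure of $C_0 \le \bF_2^{A \times B}$ is set up so that the ``$A$-direction'' of $C_0$ (rows indexed by $A$, i.e. fibers $f(a,\cdot)$) aligns with the $C_1$-constraint transported along the left edge $\ledge{a,g}$, and the ``$B$-direction'' with the right edges — which is exactly how $C_0 = C_1 \otimes C_1$ and the labelling maps $\iota_e$ were defined, so it matches on the nose. A secondary subtlety, not affecting the proof, is that when condition \eqref{eq:TNC} fails some squares may collapse (Remark \ref{rem:count}); but the labelling maps $\iota_g, \iota_e$ remain well-defined and surjective (Lemma \ref{lem:label}), and the argument above only uses $w \mapsto w \circ \iota$, so it goes through verbatim regardless of collisions.
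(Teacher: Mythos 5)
Your proof is correct and follows essentially the same route as the paper's: both use the matching-labels identity $\iota_g(a,b)=\iota_{\ledge{a,g}}(b)=\iota_{\redge{g,b}}(a)$ to identify rows and columns of the vertex local view $f_g\circ\iota_g$ with the edge local views $f_e\circ\iota_e$, deduce the per-vertex equivalence $f_g\in C_g \Leftrightarrow f_e\in C_e$ for all $e\ni g$, and then globalize over all vertices and edges. (The only blemish is the phrase ``all $r$ edges incident to $g$'' — there are $|A|+|B|=2r$ of them — which your own parenthetical already makes clear and does not affect the argument.)
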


\begin{proof}
Let $f \in \bF_2^S$ and let $g \in V$, $a\in A$ and $b\in B$. Note that $S(\ledge{a,g}),S(\redge{g,b})\subset S(g)$ and in fact
$\iota_{\ledge{a,g}}(b) = \iota_g(a,b) = \iota_{\redge{g,b}}(a) $ (this comes from the matching labels property, see Remark \ref{rem:ML}).
So for example 
 $f_g\circ\iota_g(a,\cdot)  = f_{\ledge{a,g}}\circ\iota_{\ledge{a,g}}(\cdot)$. 
Hence $f_g \in C_g$ if and only if $f_{e} \in C_e$ for any edge $e \in E$ containing $g$.
This implies that $f_g \in C_g$ for any $g \in V$ if and only if $f_e \in C_e$ for any edge $e \in E$, which proves the claim.
\end{proof}

\begin{rem}
Note that $C[G,A,B,C_1]$ has less constraints in its definition than $C'[G,A,B,C_0]$.
Indeed $C[G,A,B,C_1]$ has $r(1 - \rho(C_1))$ constraints on each edge, and since there are at most $r |G|$ edges, we get at most $(1- \rho(C_1)) r^2 |G|$ constraints,
while $C'[G,A,B,C_0]$ has $r^2(1 - \rho(C_1)^2)$ constraints on each vertex (since $\rho(C_0) = \rho(C_1)^2$), hence a total of  $(1- \rho(C_1)^2) r^2 |G|$ constraints.
This redundancy of (short) constraints should be expected for any LTC as proven in \cite{BGKSV}.
\end{rem}

Next we prove the analogue of Proposition \ref{pro:SS} for left/right Cayley expander codes.

\begin{pro} \label{pro:SS-LRCC}
Let $\mbox{Cay}(A;G;B)$ be a left/right Cayley complex with $r = |A| = |B|$ which is a $\lambda$-expander and let $C_1$ be a code of length $r$.
Then the left/right Cayley expander code $C[G,A,B,C_1]$ is of length $n=|S| \geq \frac{r^2}{4}|G|$, with parameters
\begin{equation}
\rho(C[G,A,B,C_1]) \geq 4\rho(C_1) - 3 \qquad \mbox{and} \qquad \delta(C[G,A,B,C_1]) \geq \frac{1}{4} \delta(C_1)^2 ( \delta(C_1) - \lambda).
\end{equation}
In particular, if $\rho(C_1) > \frac{3}{4}$, $\delta(C_1) > \lambda$ and $\mbox{Cay}^2(A;G;B)$ runs over a family of $r$-regular (i.e. $|A| = |B| = r$) $\lambda$-expanding left/right Cayley complexes, then the resulting family of left/right Cayley expander codes $C[G,A,B,C_1]$ is good.
\end{pro}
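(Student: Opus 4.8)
The plan is to mimic the two-step analysis of the Sipser–Spielman bound (Proposition~\ref{pro:SS}), but carried out on the square complex rather than on a single Cayley graph. The rate bound is a pure constraint-counting argument. For the distance bound, the key idea is to pass from a nonzero codeword $f\in C[G,A,B,C_1]$, supported on a set of squares, to the set $R\subseteq E$ of edges whose local view $f_e$ is nonzero; apply the expander-mixing machinery of Section~\ref{sec:expanders} (in the form of Proposition~\ref{pro:M}, or more simply Lemma~\ref{lem:AC} applied to $T$) to show $R$ is large; and then lower-bound the weight of $f$ by summing the local distances $\mathrm{wt}(f_e)\ge \delta(C_1) r$ over $e\in R$, correcting for the multiplicity with which each square is counted.

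\medskip
\noindent\emph{Rate.} By Lemma~\ref{lem:C=C'} we may use the description $C'[G,A,B,C_0]$ with the constraint ``$f_g\in C_g$ for all $g\in V$''. The number of degrees of freedom is $n=|S|$, and by Remark~\ref{rem:count} we have $|S|\ge \frac{r^2}{4}|G|$ (with equality under \eqref{eq:TNC2}); in general $|S|\ge \frac{r^2|G|}{4}$, which gives the stated length. Each vertex $g$ imposes $\dim \bF_2^{S(g)} - \dim C_g \le r^2 - \dim C_0 = r^2(1-\rho(C_1)^2)$ linear constraints, for a total of at most $r^2(1-\rho(C_1)^2)|G|$ constraints (one should double-check, but the surjectivity of $\iota_g$ from Lemma~\ref{lem:label} makes $\dim C_g = \dim C_0$ when \eqref{eq:TNC} holds, and only helps otherwise). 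Hence
\[
\rho(C) \;\ge\; 1 - \frac{r^2(1-\rho(C_1)^2)|G|}{|S|} \;\ge\; 1 - \frac{r^2(1-\rho(C_1)^2)|G|}{\tfrac{r^2}{4}|G|}
\]
provided $|S|=\frac{r^2}{4}|G|$; but that bound is $1 - 4(1-\rho(C_1)^2)$, which is weaker than $4\rho(C_1)-3$. So instead I would count constraints using the \emph{edge} description $C[G,A,B,C_1]$: each edge imposes $r(1-\rho(C_1))$ constraints, there are $r|G|$ edges (Remark~\ref{rem:count}), giving $\le (1-\rho(C_1))r^2|G|$ constraints against $|S|\ge \frac{r^2}{4}|G|$ degrees of freedom. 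This gives $\rho(C)\ge 1 - 4(1-\rho(C_1)) = 4\rho(C_1)-3$, as claimed. (A minor point to handle carefully: when \eqref{eq:TNC2} fails, $|S|$ is slightly larger than $\frac{r^2}{4}|G|$, which only increases the ratio of free variables to constraints, so the bound still holds.)

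\medskip
\noindent\emph{Distance.} Let $0\ne f\in C$ with $\mathrm{wt}(f)=\delta(C)\,|S|$, let $\Sigma=\{s\in S: f(s)\ne 0\}$, and let $R=\{e\in E : f_e\ne 0\}$. For $e\in R$, the local view $f_e$ is a nonzero codeword of $C_1$ (under the identification $w\mapsto w\circ\iota_e$), so $\mathrm{wt}(f_e)\ge \delta(C_1)\,r$; equivalently, at least $\delta(C_1)\,r$ of the $r$ squares through $e$ lie in $\Sigma$. Now observe that if some square $s\in\Sigma$ has all four of its edges outside $R$, that is a contradiction, so every nonzero square touches $R$; more usefully, I want to show $R$ is large and then count. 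To bound $|R|$: each $e\in R$ has $\ge \delta(C_1)r$ squares in $\Sigma$, and each such square has an edge neighbor-structure forcing edges of the \emph{other} type also to be in $R$ — concretely, if $s=[a,g,b]\in\Sigma$ and $\ledge{a,g}\in R$, then since $f_g\in C_0=C_1\otimes C_1$ and the column $f_g(a,\cdot)$ is a nonzero codeword of $C_1$ (it has $s$ in its support), also $f_g(\cdot, b)\in C_1$ is nonzero, whence $\redge{g,b}\in R$. This propagation, together with $\lambda$-expansion, is exactly what Proposition~\ref{pro:M} packages: applying it with the operator $M_\gamma$ for a suitable $\gamma$ (or, more elementarily, applying Lemma~\ref{lem:AC} to the underlying-graph operator $T$ via $D$, after checking $\langle M 1_R,1_R\rangle \ge \delta(C_1)|R|$ using that every $e\in R$ has $\ge \delta(C_1)r$ square-neighbors in $R$), yields $|R|\ge \frac{\delta(C_1)-\lambda}{2r}|E| = \frac{\delta(C_1)-\lambda}{2r}\cdot r|G| = \frac{\delta(C_1)-\lambda}{2}|G|$. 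Finally, sum local weights: $\sum_{e\in R}\mathrm{wt}(f_e) \le 4\,\mathrm{wt}(f)$ since each nonzero square is counted once for each of its (at most four) edges, so
\[
4\,\mathrm{wt}(f) \;\ge\; \sum_{e\in R}\mathrm{wt}(f_e) \;\ge\; \delta(C_1)\,r\,|R| \;\ge\; \delta(C_1)\,r\cdot \frac{\delta(C_1)-\lambda}{2}|G|,
\]
and dividing by $|S|\le \frac{r^2}{?}|G|$ — here I need the \emph{upper} bound $|S|\le$ something like $\frac{r^2}{4}|G|\cdot(1+o(1))$; under \eqref{eq:TNC2}, $|S|=\frac{r^2}{4}|G|$ exactly, giving $\mathrm{wt}(f)/|S| \ge \frac14\delta(C_1)^2(\delta(C_1)-\lambda)$, the claimed bound. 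In general one should either assume \eqref{eq:TNC2} holds for the complexes used (it does for the ones in Section~\ref{sec:proofs}) or absorb the small discrepancy.

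\medskip
\noindent\textbf{Main obstacle.} The arithmetic bookkeeping around the exact number of squares and the ``multiplicity $4$'' in $\sum_{e\in R}\mathrm{wt}(f_e)\le 4\,\mathrm{wt}(f)$ is where the constants $\frac14$ come from and is the easiest place to be off by a factor; under condition \eqref{eq:TNC2} everything is clean. The genuinely substantive step is establishing the expansion inequality $\langle M 1_R,1_R\rangle\ge \delta(C_1)|R|$ (or its $M^\parallel$ analogue) from the tensor-code structure $f_g\in C_1\otimes C_1$ — i.e. verifying that a nonzero square forces enough neighboring edges into $R$ so that Proposition~\ref{pro:M} applies — and this is exactly the point where the matching-labels property (Remark~\ref{rem:ML}, Lemma~\ref{lem:C=C'}) is used in an essential way, whereas in the one-dimensional Sipser–Spielman argument the analogous step was immediate.
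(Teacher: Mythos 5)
Your rate argument is fine and coincides with the paper's (edge-based constraint count against $|S|\ge \frac{r|E|}{4}$). The gap is in the distance bound, and it is quantitative but real: the route you propose proves a strictly weaker inequality than the one claimed. First, the inequality $\langle M 1_R,1_R\rangle\ge \delta(C_1)|R|$ for the operator $M=D^tTD$ of Definition \ref{dfn:M-I} is not established and is not true under that normalization: $M1_R(e)$ averages $1_R$ over roughly $8r^2$ pairs $(\ell_1,\ell_2)$, so the $\ge \delta(C_1)r$ nonzero squares through $e\in R$ (each contributing a bounded number of pairs) only give $M1_R(e)\gtrsim \delta(C_1)/r$, which is useless against $\lambda$. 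The $M^\parallel$ route does work pointwise ($M^\parallel 1_R(e)\ge\delta(C_1)$ for $e\in R$, since each nonzero square through $e$ puts a distinct parallel edge into $R$), but Proposition \ref{pro:M} then only yields $|R|\ge \frac{\delta(C_1)-\lambda}{2r}|E|$, i.e.\ it certifies that $R$ is dense inside a \emph{single} parallel class $E_\ell$, which is only a $\frac{1}{2r}$-fraction of $E$. Feeding this into your final count $4\,\mathrm{wt}(f)\ge \delta(C_1)\,r\,|R|$ and dividing by $|S|=\frac{r^2}{4}|G|$ gives $\delta(C)\ge \frac{\delta(C_1)(\delta(C_1)-\lambda)}{2r}$, not $\frac14\delta(C_1)^2(\delta(C_1)-\lambda)$; your last displayed computation silently asserts the stronger bound, which is off by a factor of about $\delta(C_1)r/2$ (large whenever $\delta(C_1)>\lambda\sim r^{-1/2}$). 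A single expansion step on the set $R$ of bad edges cannot recover the $\delta(C_1)^2$ in the statement.

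The paper closes this gap by slicing rather than by edge-expansion: for each $b\in B$ it defines the layer $f^b\in\bF_2^{E_A}$, $f^b(\ledge{a,g})=f([a,g,b])$, checks that each layer is a codeword of the one-dimensional Sipser--Spielman code $C[G,A,C_1]$ on $\mbox{Cay}(A;G)$ (this uses the edge constraints at the right edges $\redge{g,b}$), and observes that at least $\delta(C_1)r$ layers are nonzero by looking at the local view $f_e$ at a single edge $e\in E_A$ with $f_e\ne 0$. Proposition \ref{pro:SS} then gives $\mathrm{wt}(f^b)\ge\delta(C_1)(\delta(C_1)-\lambda)|E_A|$ for each such $b$, and summing over the $\ge\delta(C_1)r$ nonzero layers (each square being counted at most twice) produces exactly the extra factor $\delta(C_1)r$ that your argument loses. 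If you want to stay closer to your formulation, the repair is to show not merely that $R$ is large in one parallel class but that $R\cap E_A=\bigcup_b\mathrm{supp}(f^b)$ has density $\ge\delta(C_1)(\delta(C_1)-\lambda)$ in all of $E_A$ --- and the only available tool for that is the two-step vertex-expansion argument of Proposition \ref{pro:SS} applied layer by layer, which is the paper's proof. (A side remark: the propagation you attribute to the tensor structure --- that a nonzero square forces its other edges into $R$ --- is immediate from $f_{e'}(s)=f(s)$ and needs neither $C_0=C_1\otimes C_1$ nor the matching-labels property; those are needed for local testability, not for distance.)
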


\begin{proof}
The dimension of $C[G,A,B,C_1]$ is at least the number of degrees of freedom, $|S|$, minus the number of constraints, $(1 - \rho(C_1)) r |E|$.
Note that $|S| \leq r|E| \leq 4 |S|$, since each edge sits in exactly $r$ squares, i.e. $r|E| = \sum_{e\in E}|S(e)|$, and each square contains at least $1$ and at most $4$ edges (see Remark \ref{rem:count}).
(If condition (TNC) holds then $r|E| = 4|S|$.)
Therefore 
\begin{equation}
\rho(C) = \frac{\dim C[G,A,B,C_1]}{|S|} \geq \frac{|S| - (1 - \rho(C_1)) r |E|}{|S|} \geq  1 - 4(1-\rho(C_1)) = 4\rho(C_1) - 3.
\end{equation}
Let $0 \ne f \in C$ be such that $\delta(C) = \frac{\mbox{wt}(f) }{|S|}$.
Since $f \ne 0$, there exists $e \in E_A$ such that $0 \ne f_e \in C_1$, and if $B' := \{b \in B \,:\, f_e(b) \ne 0 \}$, then $|B'| \geq r \delta(C_1)$.
For each $b \in B'$, define $f^b \in \bF_2^{E_A}$ by $f^b(\ledge{a,g}) = f([a,g,b])$. It is well defined since $f^b(\ledge{a,g}) = f([a,g,b]) = f([a^{-1},ag,b]) = f^b(\ledge{a^{-1},ag})$.
We view $f^b$ as a function on the edges of the Cayley graph $Cay(A;G)$. It is easy to check that on the edges touching a vertex $g$ the restriction of $f^b$ is in $C_1$; so $f^b$ is in the expander code  $C[G,A,C_1]$ defined in \eqref{eq:SS-codes} w.r.t. the Cayley graph $\mbox{Cay}(A;G)$ and the small code $C_1$.
Hence by Proposition \ref{pro:SS}, $\mbox{wt}(f^b) \geq \delta(C_1) (\delta(C_1) - \lambda)|E_A|$, for any $b\in B'$.
Combining all of this, together with the fact that $r|E_A| = \frac{1}{2} r|E| \geq \frac{1}{2} |S|$, we get
\begin{multline}
\delta(C) = \frac{1}{|S|} \mbox{wt}(f) \geq \frac{1}{2|S|} \sum_{b \in B} \mbox{wt}(f^b) \geq \frac{1}{2|S|} \sum_{b \in B'} \mbox{wt}(f^b) \geq \frac{1}{2|S|}  \sum_{b \in B'} \delta(C_1) ( \delta(C_1) - \lambda) |E_A| \\
\geq \delta(C_1)( \delta(C_1) - \lambda) \frac{|B'| |E_A|}{2|S|} \geq \delta(C_1)^2( \delta(C_1) - \lambda) \frac{r|E_A|}{2|S|} \geq \frac{1}{4} \delta(C_1)^2( \delta(C_1) - \lambda).
\end{multline}
\end{proof}

In the rest of this section we prove the existence of a base code with sufficiently good rate and normalized distance, and a form of local testability on its tensor code, called agreement testability, defined below.

Let $C \leq \bF_2^r$ be a code of length $r$ and let $C \otimes C \leq  \bF_2^r \otimes \bF_2^r = M_r(\bF_2)$ be its tensor code, where by linearity, $f \in C \otimes C$ if and only if $f(v,\cdot) \in C$ and $f(\cdot,u) \in C$ for any $1 \leq v,u \leq r$.

\begin{dfn}\label{dfn:agreement}
For any two $f, g \in \bF_2^r \otimes  \bF_2^r$ define the following normalized distances,
\begin{equation}
d(f, g)  = \frac{1}{r^2} |\{ (v, u) \,:\, f(v,u) \ne g(v,u) \}| =  \frac{1}{r^2} \mbox{wt}(f - g),
\end{equation} 
\begin{equation}
d_{row}(f,g) = \frac{1}{r}|\{v \,:\, f(v,\cdot)  \ne g(v,\cdot)\}| \quad \mbox{and} \quad d_{col}(f,g) = \frac{1}{r} |\{ u \,:\, f(\cdot,u) \ne g(\cdot,u) \}|.
\end{equation}
Define the row-column distance of the pair $(f,g)$ from the tensor code $C \otimes C$ to be
\begin{equation}
d_{rc}\left( (f,g) , C\otimes C \right) = \frac{1}{2} \min_{w\in C\otimes C} \left( d_{row}(f,w) + d_{col}(g,w) \right).
\end{equation}
Define the agreement testability parameter $\sigma(C)$ of the tensor code of $C$ to be, 
\begin{equation}\label{eq:def-agr}
\sigma(C) = \min \left\lbrace \frac{d(f, g)}{d_{rc}\left( (f,g) , C\otimes C \right)} \,:\,  f \in \bF_2^r \otimes C ,\, g \in C \otimes \bF_2^r ,\, f \ne g \right\rbrace,
\end{equation}
\end{dfn}

\begin{lem} \label{lem:sigma <=2}
For any code $C$,  $\sigma(C) \leq 2$.
\end{lem}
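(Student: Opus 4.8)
The plan is to exhibit a single pair $(f,g)$ with $f \in \bF_2^r \otimes C$, $g \in C \otimes \bF_2^r$, $f \neq g$, for which the ratio $d(f,g)/d_{rc}((f,g),C\otimes C)$ equals exactly $2$; since $\sigma(C)$ is a minimum over all such pairs, this gives $\sigma(C) \le 2$. The natural candidate is to take $f$ and $g$ that differ in as few rows/columns as possible relative to how many entries they differ in — concretely, pick a nonzero codeword $c \in C$ and place it in a single column of $g$ (so $g$ has that one column equal to $c$ and all other columns zero, which puts $g \in C \otimes \bF_2^r$ since every row of $g$ is a multiple of the standard basis vector... wait, one must be careful: every row of $g$ must lie in $C$).

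Let me instead proceed as follows. First I would observe that $0 \in C \otimes C \subseteq \bF_2^r \otimes C$ and $0 \in C \otimes \bF_2^r$, and try to take $g = 0$ and $f$ a minimal-weight element of $\bF_2^r \otimes C$. But $f \in \bF_2^r \otimes C$ means every row of $f$ is in $C$; the sparsest nonzero such $f$ is one whose single nonzero row is a minimum-weight codeword $c_0 \in C$ (weight $\delta(C) r$). Then $d(f,0) = \delta(C) r / r^2 = \delta(C)/r$. For the denominator, $d_{row}(f,w) \ge \frac1r$ for any $w$ (since $f$ has a nonzero row, $f \ne w$ would need... actually if $w=0$ then $d_{row}(f,0) = 1/r$ and $d_{col}(f,0) = \delta(C)$), and $d_{col}(0,w)$; taking $w = 0$ gives $d_{rc} \le \frac12(\frac1r + 0) = \frac1{2r}$. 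Hmm, then the ratio is $(\delta(C)/r)/(1/(2r)) = 2\delta(C) \le 2$ only if $\delta(C) \le 1$, which always holds but the equality $\sigma(C) = 2$ need not be attained here, and more importantly I must check no $w$ does better in the denominator — but since $\sigma$ is a $\min$ over pairs, I only need the ratio for this one pair to be $\le 2$, i.e. I need $d_{rc}((f,0),C\otimes C) \ge \frac12 d(f,0) = \delta(C)/(2r)$, equivalently $\min_w (d_{row}(f,w)+d_{col}(0,w)) \ge \delta(C)/r$. This requires an argument: for any $w \in C\otimes C$, either $w = 0$ (giving $d_{row}(f,0) = 1/r \ge \delta(C)/r$) or $w \ne 0$ so $d_{col}(0,w) = d_{col}(w,0) \ge \delta(C)$ (a nonzero tensor codeword has every nonzero column of weight $\ge \delta(C)r$, hence... no, rather it has at least $\delta(C)r$ nonzero columns), so $d_{col}(0,w) \ge \delta(C) \ge \delta(C)/r$. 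In both cases the bound holds.

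So the key steps, in order, are: (1) fix a minimum-weight nonzero codeword $c_0 \in C$ and let $f$ be the tensor (matrix) whose first row is $c_0$ and whose other rows are $0$; note $f \in \bF_2^r \otimes C$ and take $g = 0 \in C \otimes \bF_2^r$, with $f \ne g$; (2) compute $d(f,g) = \mathrm{wt}(c_0)/r^2 = \delta(C)/r$; (3) lower-bound $d_{rc}((f,g),C\otimes C)$ by case analysis on $w \in C\otimes C$: if $w=0$ use $d_{row}(f,0) = 1/r$, and if $w \ne 0$ use that $w$ has at least $\delta(C)r$ nonzero columns so $d_{col}(0,w) \ge \delta(C)$ — in either case $d_{row}(f,w) + d_{col}(0,w) \ge \delta(C)/r$, whence $d_{rc} \ge \delta(C)/(2r)$; (4) conclude $d(f,g)/d_{rc}((f,g),C\otimes C) \le (\delta(C)/r)/(\delta(C)/(2r)) = 2$, so $\sigma(C) \le 2$.

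The main obstacle — really the only subtle point — is step (3): one must verify that no tensor codeword $w$ can simultaneously be close to $f$ in row-distance and close to $0$ in column-distance, which rests on the elementary fact that a nonzero codeword of $C \otimes C$ has at least $\delta(C) r$ nonzero columns (and at least $\delta(C)r$ nonzero rows). Everything else is bookkeeping. (Implicitly one also uses $\delta(C) \le 1$, which is immediate.)
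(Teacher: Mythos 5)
Your proof is correct, but it takes a genuinely different route from the paper's. You exhibit a single witness pair --- $f$ with one row equal to a minimum-weight nonzero codeword $c_0$ and $g=0$ --- and lower-bound $d_{rc}((f,0),C\otimes C)$ by a case split on the competing tensor codeword $w$: for $w=0$ use $d_{row}(f,0)=1/r\ge\delta(C)/r$, and for $w\ne 0$ use that $w$ has at least $\delta(C)r$ nonzero columns so $d_{col}(0,w)\ge\delta(C)$; since $\sigma(C)$ is a minimum over pairs, one pair with ratio at most $2$ suffices, and your bookkeeping checks out. The paper instead proves the universal inequality $d(f,g)\le d_{row}(f,w)+d_{col}(g,w)$ for \emph{all} $f,g\in\bF_2^r\otimes\bF_2^r$ and all $w\in C\otimes C$: translate by the optimal $w$, apply the triangle inequality $d(f',g')\le d(f',0)+d(g',0)$, and note $d(h,0)\le d_{row}(h,0)$ and $d(h,0)\le d_{col}(h,0)$ because the support of $h$ is contained in its nonzero rows (resp.\ columns). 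The paper's argument is shorter, uses no property of $C$ whatsoever, and gives the stronger and conceptually relevant fact that \emph{every} admissible pair has ratio at most $2$, i.e.\ $d_{rc}$ is never smaller than $\tfrac12 d$. Your witness argument proves exactly the stated lemma but additionally assumes (implicitly) that $C\ne\{0\}$ so that $c_0$ exists; this is harmless, since the same nondegeneracy is needed for the minimum defining $\sigma(C)$ to range over a nonempty set at all.
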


\begin{proof}
We wish to show that $\sigma(C) \leq 2$, i.e. that $d(f, g) \leq \min_{w\in C\otimes C} \left( d_{row}(f,w) + d_{col}(g,w) \right)$, for any $f, g \in \bF_2^r \otimes  \bF_2^r$.
Note that $d_{row}(f,g) = d_{row}(f-g,0)$,  $d_{col}(f,g) = d_{col}(f-g,0)$ and $d(f,g) = d(f-g,0)$.
In particular, if $\omega \in C \otimes C$ is such that $\left( d_{row}(f,\omega) + d_{col}(g,\omega) \right) = \min_{w\in C\otimes C} \left( d_{row}(f,w) + d_{col}(g,w) \right)$, then by setting $f' = f- \omega$ and $g' = g - \omega$, it suffices to prove $d(f',g') \leq d_{row}(f',0) + d_{col}(g',0)$.
By the triangle inequality, $d(f',g') \leq d(f',0) + d(g',0)$, hence it suffices to prove $d(f,0) \leq  d_{row}(f,0)$ and $d(f,0) \leq  d_{col}(f,0)$, for any  $f \in \bF_2^r \otimes  \bF_2^r$.
We shall prove $d(f,0) \leq  d_{row}(f,0)$, the proof of the other bound follows analogously.
Let $t = |\{1\leq i \leq r \,:\, f(i,\cdot)  \ne 0\}|$, hence $d_{row}(f,0) = \frac{t}{r}$, and note that the non-zero coordinates of $f$ must be contained in the $t$ non-zero rows of $f$, i.e. $ \mbox{wt}(f) \leq tr$, and therefore $d(f, g) \leq \frac{tr}{r^2} = \frac{t}{r} = d_{row}(f,0)$, as claimed.
\end{proof}

The following proposition guarantees the existence of a base code with sufficiently good rate, normalized distance, and agreement testability parameters.

\begin{pro} \label{pro:base-code}
For any $\epsilon > 0$, there exists $\delta_1,  \sigma_1 > 0$, and an infinite family of explicitly constructed codes $\{ C_i \leq \bF_2^{r_i} \}_i$, of even lengths $r_i \rightarrow \infty$, such that for any $i$, 
\begin{equation}
\rho(C_i) \geq 1 - \epsilon,\qquad \delta(C_i) \geq \delta_1, \qquad \mbox{and} \qquad \sigma(C_i) \geq \sigma_1.
\end{equation}
\end{pro}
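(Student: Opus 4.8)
The plan is to build $C_i$ as a Sipser--Spielman expander code over a Ramanujan Cayley graph, which is the fully explicit construction the authors mention replaces their earlier random base code. First I would fix a target rate: choose $\rho_0 < 1$ close to $1$ (say $\rho_0 > 7/8$ so that $2\rho_0 - 1$ is well above $3/4$, anticipating that Proposition~\ref{pro:SS-LRCC} needs $\rho(C_1) > 3/4$; for the statement as written with arbitrary $\epsilon$, just take $\rho_0 > 1-\epsilon/2$). Then invoke Theorem~\ref{thm:LPS} and Proposition~\ref{pro:Cayley-expanders} to get, for a suitable growing sequence of parameters $(p,q)$, explicit $r$-regular $\lambda$-expander Cayley graphs $\mbox{Cay}(PSL_2(\bF_q), S)$ with $r \to \infty$ and $\lambda \leq 5 r^{-1/2} \to 0$. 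Pick a fixed inner code $C_0' \leq \bF_2^r$ of rate $\rho_0$ and normalized distance $\delta_0' > 0$ bounded below (a good code exists by Gallager/Varshamov, and one can even take $C_0'$ itself to be explicit, or just a fixed finite object that does not grow — the ambient length here is the small graph degree $r$, but we want $r\to\infty$, so actually I would instead take $C_0'$ to be another expander code, or simply invoke that good codes of any fixed rate $\rho_0<1$ and distance $\geq \delta_0'(\rho_0)$ exist over $\bF_2^r$ for all even $r$). By Proposition~\ref{pro:SS}, the expander codes $C_i := C[PSL_2(\bF_q), S, C_0']$ then have $\rho(C_i) \geq 2\rho_0 - 1 \geq 1-\epsilon$ and $\delta(C_i) \geq \delta_0'(\delta_0' - \lambda) \geq \tfrac12 (\delta_0')^2 =: \delta_1 > 0$ once $r$ is large enough that $\lambda \leq \delta_0'/2$, and the lengths $r_i$ go to infinity; so the first two conclusions hold.

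The real content is the lower bound $\sigma(C_i) \geq \sigma_1 > 0$ on the agreement-testability parameter of the tensor code. Here I would follow the "smoothness" route of \cite{DSW}: it suffices to show that the base codes $C_i$ and their duals have good distance \emph{and} are smooth (every coordinate is covered by low-weight dual words in a balanced way), since \cite{DSW} gives that smooth codes with good distance have robustly/agreement testable tensor squares with a parameter depending only on the distance and smoothness constants, not on $r$. Concretely I would: (i) recall (or reprove) the reduction from agreement testability $\sigma(C)$ as in Definition~\ref{dfn:agreement} to the statement "if $f \in \bF_2^r \otimes C$ and $g \in C \otimes \bF_2^r$ disagree on few entries, then they are both close (in row- resp. column-distance) to a common codeword of $C\otimes C$" — this is essentially the robust testability of the tensor product; (ii) verify that expander codes over Ramanujan graphs are smooth with a constant independent of $r$ — this follows because the parity checks are the local-view constraints, one per vertex, each involving exactly $r$ coordinates, and every edge lies in exactly two such checks, so the check hypergraph is $2$-regular on coordinates and $r$-uniform, giving a uniform covering; one also needs the dual distance, which for $C[G,A,C_0']$ is controlled by $\delta(C_0'^\perp)$ via the same expander argument; (iii) conclude $\sigma(C_i) \geq \sigma_1$ where $\sigma_1$ depends only on $\epsilon$ (through $\rho_0, \delta_0'$) and the absolute constants in \cite{DSW} and Proposition~\ref{pro:Cayley-expanders}.

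I expect step (ii)/(iii) — establishing the $r$-independent agreement-testability constant — to be the main obstacle, because it is the one place where one cannot get away with pure constraint counting or the Alon--Chung lemma and must quote or adapt the robust tensor testing machinery of \cite{DSW} (or Viderman, or Ben-Sasson--Sudan) applied to expander codes; one must be careful that all the constants there depend only on the inner-code distance and smoothness, not on block length. An alternative, if one wants to avoid \cite{DSW}, is to prove agreement testability of $C_0' \otimes C_0'$ directly for the specific expander code $C_0'$ by exploiting that a disagreement between a row-respecting $f$ and a column-respecting $g$ propagates along the expander, but this essentially re-derives the robust-testing argument, so I would cite \cite{DSW} and keep the proof short, matching the authors' stated goal of a streamlined exposition. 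The remaining bookkeeping — choosing $p,q$ to make $r_i$ even and increasing, checking $\lambda \leq \delta_0'/2$ and $\lambda \leq 5r^{-1/2}$, and tracking that $\sigma_1$ is positive — is routine.
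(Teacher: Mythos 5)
Your overall strategy coincides with the paper's: realize the base codes $C_i$ as Sipser--Spielman expander codes over an explicit family of (near-)Ramanujan graphs with a good fixed inner code, get rate and distance from Proposition~\ref{pro:SS}, and get the agreement-testability bound by showing that expander codes are smooth in the sense of \cite{DSW} and that smooth codes have agreement-testable tensor squares. The paper makes exactly these two steps self-contained (Lemmas~\ref{lem:expcode-smooth} and~\ref{lem:smooth-agreement}); it happens to use Alon's explicit expanders of degree $d=2^m-1$ with a BCH inner code rather than LPS graphs with an unspecified inner code, but that choice only serves to pin down an explicit inner code of rate $1-O(d^{-1/4})$ and distance $\geq 10d^{-1/2}$ and to guarantee even block lengths.

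There is one point where your write-up, read literally, would fail. You ask for Cayley graphs that are ``$r$-regular\dots with $r\to\infty$'' and then assert that the resulting expander codes are ``smooth with a constant independent of $r$,'' where $r$ is the graph \emph{degree}. That is not true: in Lemma~\ref{lem:expcode-smooth} a degree-$d$ expander code is only $(\frac{\delta_0}{8d},\frac{1}{4d},\frac{\delta_0^2}{8},d)$-US, and Lemma~\ref{lem:smooth-agreement} then yields $\sigma\geq\frac{\alpha\delta}{d}=\Theta(\delta_0^3/d^2)$. The agreement-testability constant is independent of the \emph{number of vertices} of the underlying graph, but it degrades polynomially in the \emph{degree}; if the degree tends to infinity, then $\sigma(C_i)\to 0$ and the uniform bound $\sigma(C_i)\geq\sigma_1>0$ is lost. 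The correct setup, and the one the paper uses, is to fix the degree $d=d(\epsilon)$ once and for all --- large enough that an inner code of rate $1-\epsilon/2$ and distance $\delta_0>4\lambda$ (with $\lambda=O(d^{-1/2})$) exists, which is the only reason $d$ must be large --- and let only the vertex set grow, so that $r_i=|E_i|\to\infty$ while $\delta_1$ and $\sigma_1$ depend only on $d(\epsilon)$. This also resolves the tension in your description of the inner code, which cannot simultaneously be ``a fixed finite object'' and have length tending to infinity. Two smaller remarks: the appeal to the dual distance of $C[G,A,C_0']$ is a red herring --- the uniform-smoothness notion of Definition~\ref{dfn:smooth} imposes no dual-distance requirement, and indeed the dual of an expander code contains words of weight at most $d$, so its distance is bounded by a constant; and your step~(i) is not a reduction but a restatement of Definition~\ref{dfn:agreement}.
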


In \cite[\S~5.1]{DELLM} it was shown that most random LDPC codes satisfy the proposition.
Here we shall give a more explicit proof, by using expander codes.
Both \cite{DELLM} and our proof relies on the notion of smooth codes introduced and studied in \cite{DSW} and \cite{BV}.

Let us fix some notations.
For any $r\in \bN$, denote $[r]=\{1,\ldots,r\}$, $\bF_2^r = \{f \,:\, [r] \rightarrow \bF_2 \}$ and $\cdot \,:\, \bF_2^r \times \bF_2^r \rightarrow \bF_2$, $ f \cdot g = \sum_{k \in [r]}f(k)g(k)$.
For any $H\subset \bF_2^r$ and $d \leq r$, denote $H^\perp = \{g\in \bF_2^r \,:\, \forall f \in H, \; f\cdot g = 0 \}$, $H_{\leq d} = \{f\in H \,:\, \mbox{wt}(f)\leq d \}$ and $H_{\leq d}^\perp = \{g\in H^\perp \,:\, \mbox{wt}(g)\leq d \}$.

If $C$ is a code then $C^\perp$ is its set of constraints and $C_{\leq d}^\perp$ is its subset of short (of weight at most $d$) constraints.
Recall that a code $C \leq \bF_2^r$ is called a $d$-LDPC (low density parity check) code if $C = (C^\perp_{\leq d})^\perp$, i.e. it is defined by its short constraints.

Below we give a slight strengthening of the definition of the notion of smooth codes from \cite{DSW}, which we call uniformly smooth codes.
(In the original definition of \cite{DSW}, the upper bound on the set $J$ is proportional only to $r$, as opposed to $|I|$.)

\begin{dfn} \label{dfn:smooth}
Let $0< \alpha,\beta,\delta <1$, $d\leq r \in \bN$ and $C \leq \bF_2^r$ a code. 
For any $I, J \subset [r]$, denote 
\begin{equation}
C_{\leq d}^\perp(I) = \{ f\in C_{\leq d}^\perp \,:\, f|_I \equiv 0 \} \quad \mbox{and} \quad C(I,J) = \{f|_{[r] \setminus J} \,:\, f\in (C_{\leq d}^\perp(I))^\perp \}.
\end{equation}
The code $C$ is called $(\alpha,\beta,\delta,d)$-US (uniformly smooth), if $C$ is a $d$-LDPC code and
\begin{equation}
\forall I \subset [r], \quad |I| \leq \alpha r, \quad \exists J \subset [r], \quad I\subset J, \quad |J| \leq \beta^{-1} |I| \qquad \mbox{with} \qquad \delta(C(I,J)) \geq \delta.
\end{equation}
\end{dfn}

Let us spell out the definition. 
If $C$ is a $d$-LDPC code then $C_{\leq d}^\perp$ is the set of short constraints of $C$.
Then $C_{\leq d}^\perp(I) \subset C_{\leq d}^\perp$ is the subset of short constraints supported on $[r] \setminus I$, hence $(C_{\leq d}^\perp(I))^\perp$ is a code that contains $C = (C_{\leq d}^\perp)^\perp$, and $C(I,J)$ is obtained from this larger code by restricting its codewords to $[r] \setminus J$.
Call $C$ uniformly smooth, if for any small set $I$, there is a small set $J$ (small relative to $I$), such that $C(I,J)$ has good distance.

Observe that if $C$ is $(\alpha,\beta,\delta,d)$-US then $\delta(C) \geq \delta$.
Indeed, if we take $I = \emptyset$, then in the definition of US code $J = \emptyset$ since $|J|\leq \beta^{-1} |I|$, and note that $C(\emptyset,\emptyset) = C$, hence $\delta(C) \geq \delta$.

The following Lemma shows that uniformly smooth codes have tensor codes which are agreement testable.
The proof essentially follows from the work of the first author with Sudan and Wigderson \cite{DSW}, but for the sake of completeness we give it here.

\begin{lem} \label{lem:smooth-agreement}
Let $0< \alpha, \beta, \delta <1$ be such that $\alpha\beta^{-1} < \min\{ \frac{1}{2}, \delta\}$ and let $2 \leq d \in \bN$.
If $C \leq \bF_2^r$ is a $(\alpha,\beta,\delta,d)$-US code, then $\sigma(C) \geq \frac{\alpha\delta}{d}$.
\end{lem}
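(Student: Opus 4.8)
The plan is to take a pair $(f,g)$ with $f \in \bF_2^r \otimes C$, $g \in C \otimes \bF_2^r$, $f \ne g$, set $h = f - g$, and bound $d(f,g) = d(h,0)$ from below in terms of $d_{rc}((f,g), C\otimes C)$. The key object is the set $I$ of rows where $h$ is nonzero; since $f$ has all rows in $C$, on every row outside $I$ we have $f(v,\cdot) = g(v,\cdot)$, so these rows already lie in $C$, and the ``discrepancy'' between $f$ and $g$ is concentrated on the $|I|$ rows of $I$. First I would dispose of the easy case $|I| > \alpha r$: then $d_{row}(f,g) = |I|/r > \alpha$, and comparing against the trivial bound (both $d_{rc}$ and $d(f,g)$ live in $[0,1]$, with $d_{rc} \le d_{row}(f,g)+d_{col}(\cdot)$ controlled; more carefully $d(f,g) \ge$ something like $\alpha\delta$ directly, or one notes $d_{rc} \le 1$ and $d(f,g)$ is at least $\alpha\delta/d$ because each nonzero row of $h$ must, being a sum of two codeword-rows restricted appropriately, interact with a short constraint) gives $\sigma \ge \alpha\delta/d$ on the nose.

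The substantive case is $|I| \le \alpha r$. Here I would invoke the US property: there is $J \supset I$ with $|J| \le \beta^{-1}|I|$ and $\delta(C(I,J)) \ge \delta$. The idea is that $h$, restricted away from the columns in $J$, should look like a codeword of a tensor-type code built from $C(I,J)$, and since $h \ne 0$ its weight there is bounded below. More precisely: every short constraint $c \in C^\perp_{\le d}(I)$ is supported off $I$, so applying $c$ to a row $g(v,\cdot)$ with $v \notin I$ gives $0$ (that row is in $C$), and applying it to a row of $f$... — the right statement to extract is that for columns $u \notin J$, the column $h(\cdot,u)$ restricted to rows in $[r]\setminus I$ is a codeword of $(C^\perp_{\le d}(I))^\perp$ restricted to $[r]\setminus J$-compatible support, hence either zero or of weight $\ge \delta(r - |J|) \ge \delta(1-\alpha\beta^{-1})r$. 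Combined with the fact that $h$ has a nonzero entry somewhere in $I \times [r]$, and each nonzero row being a nonzero codeword of $C$ (from the $f$ side) has $\ge \delta r$ nonzero entries, we get $\mathrm{wt}(h) \ge \delta |I| r$ roughly, whence $d(f,g) \ge \delta |I|/r \ge \delta \cdot d_{row}(f,g)/1$; pairing this with a matching lower bound of $d_{rc}$ by $d_{row}$-type quantities and the hypothesis $\alpha\beta^{-1} < \min\{1/2,\delta\}$ yields $d(f,g)/d_{rc} \ge \alpha\delta/d$. The factor $1/d$ enters because passing from ``$h$ vanishes against all short constraints off $I$'' to ``$h$ restricted off $J$ is genuinely a $C(I,J)$-codeword column'' costs a factor related to the locality $d$ when we count how many columns can fail to be constrained, or when we lower-bound $d_{rc}$: the natural choice of $w \in C\otimes C$ approximating the pair forces us to modify $O(|J|) = O(\beta^{-1}|I|)$ and $O(d)$-dependent quantities of columns/rows.

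The main obstacle I anticipate is getting the bookkeeping exactly right so that the constant comes out as $\alpha\delta/d$ rather than something weaker: one must (i) correctly identify which of $f,g$ to project and on which axis, exploiting that $f$ is row-perfect and $g$ is column-perfect so that $h=f-g$ is simultaneously ``almost a row-codeword'' and ``almost a column-codeword''; (ii) correctly use the US definition with the set $I$ of bad rows (of size $\le \alpha r$, which is exactly where the hypothesis $|I| \le \alpha r$ must be verified — this forces the case split above and is why $\alpha$ appears in the final bound); and (iii) choose the approximating tensor codeword $w$ so that $d_{rc}((f,g),C\otimes C)$ is controlled by $|I|/r$ up to the constant $d$ — concretely, on rows outside $I$ take $w$ to agree with $f=g$, and on rows inside $I$ correct to the nearest $C$-codeword, checking that the resulting $w$ is a genuine tensor codeword using that $C(I,J)$ has distance $\ge \delta > \alpha\beta^{-1}$, which prevents two distinct corrections from being close. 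The inequality $\alpha\beta^{-1} < 1/2$ is what guarantees uniqueness of such nearest-codeword corrections (decoding within half the distance), and $\alpha\beta^{-1} < \delta$ is what lets us conclude a nonzero column of $h$ off $J$ has weight $\ge \delta(1-\alpha\beta^{-1})r > 0$; I would keep both uses explicit.
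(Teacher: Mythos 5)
There is a genuine gap, and it starts at the very first step: your set $I$ is the set of \emph{all rows on which $h=f-g$ is nonzero}, and your argument needs $|I|\leq \alpha r$ in the main case while claiming the case $|I|>\alpha r$ can be "disposed of directly." Neither half works. The matrix $h$ can have, say, one nonzero entry in every row, so that $|I|=r$ while $d(f,g)=1/r$ is tiny; in that situation your "easy case" conclusion $d(f,g)\gtrsim\alpha\delta/d$ is simply false. The related claim you lean on in the main case --- that each nonzero row of $h$ has weight $\geq\delta r$ because it is "a nonzero codeword of $C$ from the $f$ side" --- is also false: $f(v,\cdot)\in C$ but $g(v,\cdot)$ need \emph{not} be in $C$ (only the columns of $g$ are), so $h(v,\cdot)$ is not a $C$-codeword and has no a priori weight lower bound. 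The paper avoids both problems by choosing the set fed into the US property differently: it takes $I_2=\{i: \mathrm{wt}(m(*,i))\geq \delta r/d\}$, the set of \emph{heavy columns} of $m=f+g$. Under the standing assumption $d(f,g)<\alpha\delta/d$ (otherwise $d_{rc}\leq 1$ makes the lemma trivial), one gets $|I_2|\leq d\,\mathrm{wt}(m)/(\delta r)<\alpha r$ automatically --- no case split is needed, and this is exactly where the factor $d$ and the parameter $\alpha$ enter the final constant.

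The second missing idea is how the US property actually gets used. After obtaining $J_2\supset I_2$, the paper shows that every short constraint $v\in C^{\perp}_{\leq d}(I_2)$ satisfies $m\cdot v=0$: since $v$ is supported on at most $d$ columns, all outside $I_2$, the vector $m\cdot v$ has weight $<d\cdot(\delta r/d)=\delta r$, and it equals $f\cdot v\in C$ (because $g\cdot v=0$), so it must vanish. This is what makes every row of $m$, restricted to $[r]\setminus J_2$, a genuine codeword of $C(I_2,J_2)$, which in turn bounds the number $|J_1|$ of bad rows by $2r\sigma/\delta$ and lets one produce the approximating $w\in C\otimes C$ via the injectivity of the projection onto $([r]\setminus J_1)\times([r]\setminus J_2)$ (this is where $|J_1|,|J_2|<\delta r$, i.e.\ $\alpha\beta^{-1}<\delta$, is used; $\alpha\beta^{-1}<1/2$ is used only to get $r-|J_2|\geq r/2$, not for any unique-decoding argument as you suggest). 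Your sketch gestures at "restricting $h$ away from $J$ should look like a codeword of a tensor-type code," but without the heavy-column choice of $I$ and the $m\cdot v=0$ computation, there is no route to that conclusion, and the bookkeeping you worry about in point (iii) cannot be completed as stated.
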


\begin{proof}
Let $f \in \bF_2^r \otimes C$ and $g \in C \otimes  \bF_2^r$ such that $f \ne g$.
Let $m = f + g \in M_r(\bF_2)$, and for any $i\in [r]$, denote by $m(i,*), m(*,i) \in \bF_2^r$ the $i$-th row and column of $m$, respectively.
Let $\sigma := \frac{\mbox{wt}(m)}{r^2} = d(f,g)$ and assume $\sigma < \sigma_0 := \frac{\alpha\delta}{d}$ (otherwise there is nothing to prove, since $d_{rc}\left( (f,g) , C\otimes C \right) \leq 1$ for any $f$ and $g$).

Let $I_2  = \{i \in [r] \,:\, \mbox{wt}(m(*,i)) \geq \frac{\delta r}{d} \}$.
Note that $|I_2| \leq \frac{d\sigma r^2}{\delta r}$, and in particular $|I_2| < \alpha r$.
Since $C$ is $(\alpha,\beta,\delta,d)$-US, there exists $J_2 \subset [r]$ such that $I_2 \subset J_2 $ and $|J_2| \leq \beta^{-1} |I_2|$ and such that $\delta(C(I_2,J_2))\geq \delta$.
Define $J_1 = \{j\in [r] \,:\, m(j,*)|_{[r]\setminus J_2} \ne 0 \}$ and $S = ([r] \setminus J_1) \times ([r] \setminus J_2)$.

For any $v\in C_{\leq d}^\perp(I_2)$, then $m\cdot v = 0$.
Indeed, $\mbox{wt}(m\cdot v) <\delta r$ (by the definition of $I_2$), $f \cdot v = \sum_{v_i \ne 0} f(*,i) \in C$ (since $f \in \bF_2^r \otimes C$), $m \cdot v = f \cdot v$ (since $m - f = g \in C \otimes \bF_2^r$), and $\delta(C) \geq \delta$ (since $C$ is $(\alpha,\beta,\delta,d)$-US).
Therefore $m(j,*)|_{[r]\setminus J_2} \in C(I_2,J_2)$ for any $j \in [r]$.
By the definition of $J_1$ and since $\delta(C(I_2,J_2))\geq \delta$, we get $\delta(r-|J_2|)|J_1| \leq \sum_{j \in J_1} \mbox{wt}(m(j,*)) \leq \mbox{wt}(m)$, and since $|J_2|\leq \beta^{-1}|I_2|\leq \beta^{-1}\alpha r \leq \frac{r}{2}$, we get $|J_1| \leq \frac{2r \sigma}{\delta}$, and in particular $|J_1| < \frac{2\alpha}{d} r < \delta r$. 

Next we shall prove that there exists $w \in C \otimes C$, such that $f|_S = w|_S = g|_S$.
By the definition of $J_1$, we get that $m|_S = 0$, hence $f|_S = g|_S$.
Denote $C^{\otimes 2} = C \otimes C$, $C^{\otimes 2}_S = C|_{[r] \setminus J_1} \otimes C|_{[r] \setminus J_2}$ and $\mbox{pr}_S \,:\, C^{\otimes 2} \rightarrow C^{\otimes 2}_S$, $\mbox{pr}_S(f) = f|_S$.
If $w \in C^{\otimes 2}$ is such that $\mbox{pr}_S(w)=0$, then $w(i,*)|_{[r] \setminus J_2} \equiv 0$ and  $w(*,i)|_{[r] \setminus J_1} \equiv 0$ for any $i\in [r]$.
Since $|J_1| < \delta r$, $|J_2|  < \delta r$ and $\delta(C) \geq \delta$, we get that $w(i,*) \equiv 0$ and  $w(*,i) \equiv 0$ for any $i\in [r]$, hence $\mbox{pr}_S$ is injective.
Since $\dim C^{\otimes 2}_S \leq \dim C^{\otimes 2}$, we get that $\mbox{pr}_S$ is an isomorphism.
Note that $f|_S = g|_S \in C^{\otimes 2}_S$, and therefore it is the image of a (unique) $w \in C^{\otimes 2}$ under $\mbox{pr}_S$, i.e. $f|_S = w|_S = g|_S$.

Finally, we give an upper bound on $d_{rc}\left( (f,g) , C\otimes C \right) \leq 1$ in terms of $\sigma = d(f,g)$.
Since $f \in \bF_2^r \otimes C$, $g \in C \otimes  \bF_2^r$, $w \in C\otimes C$ and $f|_S = w|_S = g|_S$, ,and since $|J_1| < \delta r$, $|J_2| < \delta r$ and $\delta(C) \geq \delta$,  then $f$ agrees with $w$ on all rows outside $J_1$ and $g$ agrees with $w$ on all columns outside $J_2$, i.e. $d_{row}(f,w) \leq \frac{|J_1|}{r}$, and $d_{col}(g,w) \leq \frac{|J_2|}{r}$.
Since $|J_1| \leq \frac{2}{\delta}\cdot r \sigma$ and $|J_2| \leq \beta^{-1}|I_2| \leq \frac{d}{\beta \delta} \cdot r \sigma$, and since $\sigma_0 < \frac{\beta \delta}{d} < \frac{\delta}{2}$, we get that $\frac{|J_i|}{r} \leq \sigma_0^{-1} \sigma = \sigma_0^{-1} \cdot d(f,g)$, for $i=1,2$.
Therefore,
\begin{equation}
d_{rc}\left( (f,g) , C\otimes C \right) \leq \frac{1}{2} \left( d_{row}(f,w) + d_{col}(g,w) \right) \leq \frac{1}{2} \left( \frac{|J_1|}{r} + \frac{|J_2|}{r} \right) \leq \sigma_0^{-1} \cdot d(f,g).
\end{equation}
\end{proof}

The next Lemma shows that expander codes are uniformly smooth.

\begin{lem} \label{lem:expcode-smooth}
Let $X=(V,E)$ be a $d$-regular graph, for any $v\in V$, let $C_v \leq \bF_2^{E_v}$ be a local code around $v$, where $E_v = \{e\in E \,:\, v\in e \}$, and let $C = \left\lbrace f\in \bF_2^E \;:\; \forall v\in V, \quad f|_{E_v} \in C_v \right\rbrace$.
If $X$ is a $\lambda$-expander,  $\delta_0 = \min_{v\in V} \delta(C_v)$ and $\lambda < \frac{\delta_0}{4}$, then $C$ is an $(\frac{\delta_0}{8d}, \frac{1}{4d}, \frac{\delta_0^2}{8}, d)$-US code. 
\end{lem}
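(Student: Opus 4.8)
The plan is: (i) verify the $d$-LDPC property, (ii) exhibit $C(I,J)$, for a suitable $J$, as a subcode of a Tanner/expander code on an induced subgraph of $X$ with slightly punctured local codes, and (iii) run the Sipser--Spielman argument behind Proposition~\ref{pro:SS} on that subgraph, using the generalized Alon--Chung Lemma~\ref{lem:AC}.

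For (i): for each vertex $v$, every dual word of $C_v$, extended by $0$ outside $E_v$, is a parity check of $C$ of weight $\le d$, and these checks already cut out $C$; hence $C^\perp_{\le d}$ contains them and $C=(C^\perp_{\le d})^\perp$, so $C$ is $d$-LDPC. For (ii): given $I\subseteq E$ with $|I|\le\alpha|E|$, set $W=\{v\in V:E_v\cap I\ne\emptyset\}$, so $|W|\le 2|I|$, and since $|E|=\tfrac{d}{2}|V|$ this gives $|W|\le\tfrac{\delta_0}{8}|V|$. Let $J$ be the set of edges incident to $W$ (this will be enlarged a little, see below); then $I\subseteq J$, $\,|J|\le d|W|\le 2d|I|\le\beta^{-1}|I|$, and $[r]\setminus J$ is the edge set $E'$ of the induced subgraph $X'=X[V\setminus W]$. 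For every $v\notin W$ the check ``$f|_{E_v}\in C_v$'' is supported on $E_v$, which misses $I$, hence lies in $C^\perp_{\le d}(I)$; therefore $(C^\perp_{\le d}(I))^\perp\subseteq\{f:f|_{E_v}\in C_v\ \forall v\notin W\}$, and so $C(I,J)\subseteq C':=\{g\in\bF_2^{E'}:g|_{E_v\cap E'}\in\widehat C_v\ \forall v\notin W\}$, where $\widehat C_v$ is $C_v$ punctured at the $t_v:=|N(v)\cap W|$ coordinates that index the edges from $v$ into $W$; note $\widehat C_v$ has (absolute) distance $\ge\delta_0 d-t_v$. Thus it suffices to prove $\delta(C')\ge\delta=\delta_0^2/8$.

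For (iii): let $0\ne g\in C'$ with support $T$, and $R=\{v\notin W:E_v\cap T\ne\emptyset\}$. Every edge of $T$ joins two vertices of $R$, and each $v\in R$ contributes a nonzero word of $\widehat C_v$, so $|E_v\cap T|\ge\delta_0 d-t_v$ and $2|T|=\sum_{v\in R}|E_v\cap T|\ge\delta_0 d|R|-\sum_{v\in R}t_v$, while $\langle T_X 1_R,1_R\rangle=\tfrac{2}{d}\,e_X(R)\ge\tfrac{2}{d}|T|$, writing $e_X(R)$ for the number of edges of $X$ with both endpoints in $R$. If $\sum_{v\in R}t_v\le\tfrac{\delta_0}{2}d|R|$ this gives $\langle T_X 1_R,1_R\rangle\ge\tfrac{\delta_0}{2}|R|$, so Lemma~\ref{lem:AC} forces $|R|\ge(\tfrac{\delta_0}{2}-\lambda)|V|>\tfrac{\delta_0}{4}|V|$ (this is where the hypothesis $\lambda<\delta_0/4$ enters), whence $2|T|\ge\tfrac{\delta_0}{2}d|R|>\tfrac{\delta_0^2}{4}|E|$, i.e.\ $\mathrm{wt}(g)=|T|>\tfrac{\delta_0^2}{8}|E|\ge\delta|E'|$, as wanted. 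The main obstacle, and the only remaining case, is when $\sum_{v\in R}t_v$ is large, i.e.\ $g$ tries to concentrate on vertices having anomalously many edges into $W$, for which the punctured local code is weak. The point is that such vertices are few: $W$ is tiny and $X$ expands, so a counting argument (or Lemma~\ref{lem:AC} applied to $W$ together with its set of heavy neighbours, where one uses $2e_X(S)\le \tfrac{d|S|^2}{|V|}+\lambda d|S|$) bounds their number by $O(|W|)$; one then enlarges $J$ to freeze also the edges joining these heavy vertices to $V\setminus W$ — only $O(d-t_v)$ of them per such vertex, so that $|J|$ still stays $\le\beta^{-1}|I|=4d|I|$ — after which every retained vertex has $t_v<\tfrac{\delta_0}{2}d$ and the previous computation finishes the proof. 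The delicate part is exactly this last bookkeeping: the constants $\alpha=\delta_0/8d$, $\beta=1/4d$ and the spectral gap $\lambda<\delta_0/4$ are tuned so that the edges one must freeze to repair the damage caused by $I$ still fit inside the budget $\beta^{-1}|I|$.
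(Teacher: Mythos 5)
Your overall strategy is the paper's: delete the vertices meeting $I$, additionally delete ``heavy'' vertices whose local codes get punctured too much, take $J$ to be all edges incident to the deleted set, and then run the Sipser--Spielman/Alon--Chung distance argument on what remains. Steps (i) and (ii), and the final computation in (iii) under the assumption that every surviving vertex has fewer than $\tfrac{\delta_0}{2}d$ frozen edges, are all correct and match the paper.

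The gap is in the repair step. You perform a \emph{single} round of heavy-vertex removal: you identify the vertices with at least $\tfrac{\delta_0}{2}d$ neighbours in $W$, freeze all their edges, and then assert that ``every retained vertex has $t_v<\tfrac{\delta_0}{2}d$.'' But $t_v$ was defined as the number of edges from $v$ into $W$, whereas after the enlargement the punctured local code at a retained vertex $v$ loses all coordinates in $E_v\cap J$, which now also includes the edges from $v$ into the newly removed heavy vertices. A vertex that was light with respect to $W$ can perfectly well have $|E_v\cap J|\ge\tfrac{\delta_0}{2}d$ after the first round, so the bound $|E_v\cap T|\ge \delta_0 d-|E_v\cap J|\ge\tfrac{\delta_0}{2}d$ that drives the final computation is not established. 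What is needed is to iterate the removal until closure: keep adjoining any vertex with more than $\tfrac{\delta_0}{2}d$ neighbours in the \emph{current} deleted set. This is exactly what the paper does (the sets $U_0\subset U_1\subset\cdots$), and the nontrivial point is termination with control of the size: if the process ran for more than $|U_0|$ steps, the resulting set would have average degree at least $\tfrac{\delta_0}{2}d$, hence size at least $(\tfrac{\delta_0}{2}-\lambda)|V|>\tfrac{\delta_0}{4}|V|$ by Lemma~\ref{lem:AC}, contradicting $|U_t|=2|U_0|\le 4|I|\le\tfrac{\delta_0}{4}|V|$. This yields $|U|\le 2|U_0|$ and hence $|J|\le d|U|\le 4d|I|$, which is the budget you want. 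Your parenthetical mixing-lemma argument bounds only the first generation of heavy vertices; without the iterative closure (or an equivalent argument) the proof does not go through. With that repair inserted, your argument coincides with the paper's.
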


\begin{proof}
First note that $C$ is an expander code w.r.t. the graph $X$ and the local codes $\{C_v\}_{v\in V}$, and since $X$ is $d$-regular the code is a $d$-LDPC code.
Next, recall that by the Alon-Chung Lemma (Lemma \ref{lem:AC}), if $U \subset V$ has an average degree $\kappa d$, where $0\leq \kappa \leq 1$, then $|U| \geq (\kappa - \lambda)|V|$.

Let $I \subset E$ be a subset of edges of size $|I| \leq \frac{\delta_0}{8d} |E|$.
Let $U_0 \subset V$ be the subset of vertices which touch an edge from $I$.
Define $U_1,U_2,\ldots$, iteratively as follows: 
If $U_{i-1}$ is already defined and there exists $v_i \not\in U_{i-1}$ with more than $\frac{\delta_0 d}{2}$ neighbours in $U_{i-1}$, then define $U_i = U_{i-1} \cup \{v_i\}$.
This process must stop after at most $t = |U_0|$ steps. 
Otherwise the set $U_t$ has $2t$ vertices and at least $\frac{\delta_0 d}{2} t$ edges, i.e. an average degree of at least $\frac{\delta_0}{2} d$, which by the Alon-Chung Lemma implies that $|U_t| \geq (\frac{\delta_0}{2} - \lambda)|V| > \frac{\delta_0}{4} |V|$, contradicting the fact that $|U_t| = 2 |U_0| \leq 4|I| = \frac{\delta_0}{2d} |E| = \frac{\delta_0}{4} |V|$.

Let $U \subset V$ be the final set in the above process and let $J = \{ e \in E \,:\, \exists u \in U,\, u\in e \}$. 
Then: (i) $I \subset J$, (ii)  $|J| \leq d|U| \leq 2d|U_0| \leq 4d|I|$, and (iii) $|E_v \cap J| < \frac{\delta_0 d}{2}$, for any $v \in V \setminus U$.

Let $0 \ne f' \in C(I,J)$ be such that $\delta(C(I,J)) = \frac{\mbox{wt}(f')}{|E \setminus J|}$ and let $f \in (C_{\leq d}^\perp(I))^\perp$ be such that $f' = f|_{E\setminus J}$. Let us write $f'(e) = 0$ for all $e\in J$, and set $R = \{v \in V \,:\, f'|_{E_v} \ne 0\}$.
Since $\mbox{supp}(f') \subset E \setminus J$, by the definition of $U_0$, $U$ and $J$, we get that $v \not \in U$, in particular $v \not\in U_0$, which implies $E_v \cap I = \emptyset$, hence the constraints of $C_v$ belongs to $C_{\leq d}^\perp(I)$, therefore $f'|_{E_v} \in C_v$.
By property (iii) of $J$, $|E_v \cap J| \leq \frac{\delta_0 d}{2}$, therefore $\mbox{wt}(f'|_{E_v\setminus J}) \geq \mbox{wt}(f|_{E_v}) - |E_v \cap J| \geq \delta_0 d - \frac{\delta_0 d}{2} \geq \frac{\delta_0 d}{2}$, i.e. $R$ has an average degree of at least $\frac{\delta_0}{2}d$, and by the Alon-Chung Lemma, $|R| \geq (\frac{\delta_0}{2} - \lambda)|V| >  \frac{\delta_0}{4} |V|$.
Hence, $\delta(C(I,J)) = \frac{1}{2|E \setminus J|}\sum_{v\in R} \mbox{wt}(f'|_{E_v\setminus J}) \geq \frac{\delta_0 d|R|}{4|E|} \geq \frac{\delta_0^2 d|V|}{16|E|} > \frac{\delta_0^2}{8}$, which, combined with properties (i) and (ii) of $J$, proves the Lemma.
\end{proof}

Proposition \ref{pro:base-code} now follows from Lemmas \ref{lem:smooth-agreement} and \ref{lem:expcode-smooth}.

\begin{proof}[Proof of Proposition \ref{pro:base-code}]
Let $1 > \epsilon >0$. 
Let $d = 2^m - 1$, where $m > 20 + 4 \lceil \log_2(\epsilon^{-1}) \rceil$ is large enough such that by \cite[Proposition~1.1]{Alo}, there exists an infinite family of graphs $\{X_i = (V_i,E_i)\}$ which are $d$-regular, $\lambda$-expanders with $\lambda \leq 2.1d^{-1/2}$, and by the construction in \cite[Section~2.1]{Alo}, $X_i$ has $|V_i|= \frac{1}{2}q_i(q_i^2-1)$ vertices, where $q_i$ are primes. 
Note that $8 \mid q_i^2 -1$, for any odd prime $q_i$, hence $4 \mid |V_i|$ and $2 \mid |E_i| = \frac{d}{2}|V_i|$, i.e. $r_i = |E_i|$ are even integers.
Let $C_0$ be a primitive narrow-sense BCH binary code of length $d$ and distance parameter $b = 10 \lceil d^{1/2} \rceil$, which by \cite[Chapter~9, Theorem~1]{MS}, $\delta_0 := \delta(C_0) \geq  \frac{b}{d} \geq 10 d^{-1/2}$ and $\rho(C_0) \geq 1 - \frac{mb}{d} \geq 1 - 20 m d^{-1/2} > 1 - 20 d^{-1/4} > 1 - \frac{\epsilon}{2}$,
Then the family of expander codes $\{ C_i = C[X_i,C_0]\}$, satisfy the requirement of Proposition \ref{pro:base-code}.
Indeed for any $i$, by Proposition \ref{pro:SS}, $\rho(C_i) \geq 1 - \epsilon$ and $\delta(C_i) \geq \delta_0(\delta_0 - \lambda) \geq 79d^{-1} =: \delta_1$, since $\lambda < \frac{\delta_0}{4}$.
By Lemma \ref{lem:expcode-smooth}, $C_i$ is $(\frac{\delta_0}{8d},\frac{1}{4d}, \frac{\delta_0^2}{8}, d)$-US, and by Lemma \ref{lem:smooth-agreement}, we get $\sigma(C_i)\geq \frac{\frac{\delta_0}{8d}\frac{\delta_0^2}{8}}{d}  \geq 15 d^{-7/2} =: \sigma_1$.
\end{proof}

\section{Proofs} \label{sec:proofs}

In this section we state and prove the main result of this paper, namely, constructing locally testable codes (LTCs), with constant rate, constant normalized distance and constant query complexity.

\begin{thm} \label{thm:main}
Let $\epsilon > 0$. 
There exist $\delta_\epsilon,  \kappa_\epsilon > 0$ and $q_\epsilon \in \bN$, and an infinite family of explicitly constructed locally testable codes $\{ C_i \}$, of lengths $n(C_i) \rightarrow \infty$, such that for any $i$,
\begin{equation}
\rho(C_i) \geq 1 - \epsilon, \qquad \delta(C_i) \geq \delta_\epsilon, \qquad q(C_i) = q_\epsilon \qquad \mbox{and} \qquad \kappa(C_i) \geq \kappa_\epsilon.
\end{equation}
\end{thm}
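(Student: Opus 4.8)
The plan is to take an explicit base code $C_1$ from Proposition~\ref{pro:base-code}, build the left/right Cayley expander code $C = C[G,A,B,C_1]$ over a suitable family of $\lambda$-expanding left/right Cayley complexes, and then verify all four properties in turn: the rate and distance bounds come directly from Proposition~\ref{pro:SS-LRCC}, the query complexity is immediate from the LDPC structure, and the only real work is establishing the local testability constant $\kappa_\epsilon$.

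\medskip

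\noindent\textbf{Setup and choice of parameters.} Given $\epsilon>0$, first invoke Proposition~\ref{pro:base-code} to obtain $\delta_1,\sigma_1>0$ and an explicit family of codes $C_1 \le \bF_2^{r}$ with $\rho(C_1)\ge 1-\epsilon/4$, $\delta(C_1)\ge \delta_1$, $\sigma(C_1)\ge\sigma_1$, for infinitely many even lengths $r\to\infty$. Fix one such $r$ large enough that $\delta_1 > \lambda := 5r^{-1/2}$ (so that the expansion is small compared to the base distance; this uses that $\delta_1$ is a fixed constant while $\lambda\to 0$ as $r\to\infty$). Next, using Theorem~\ref{thm:LPS} and Proposition~\ref{pro:Cayley-expanders}, choose primes $p,q$ so that $PSL_2(\bF_q)$ carries two symmetric generating sets $A,B$ of size $r$ (taken as subsets of a Ramanujan generating set, possibly for two different choices that still generate, or via a single group with two independent Cayley structures as arranged in \cite{DELLM}) with $\lambda\big(\mbox{Cay}(A;G)\big),\lambda\big(\mbox{Cay}(G;B)\big)\le 5r^{-1/2}=\lambda$. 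Letting $q\to\infty$ gives the infinite family of $\lambda$-expanding left/right Cayley complexes $\mbox{Cay}^2(A;G;B)$, hence the family $C_i = C[G_i,A,B,C_1]$ with $n(C_i)=|S_i|\to\infty$.

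\medskip

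\noindent\textbf{Rate, distance, queries.} By Proposition~\ref{pro:SS-LRCC}, $\rho(C_i)\ge 4\rho(C_1)-3 \ge 1-\epsilon$ (provided $\rho(C_1)\ge 1-\epsilon/4$) and $\delta(C_i)\ge \tfrac14\delta(C_1)^2(\delta(C_1)-\lambda)\ge \tfrac14\delta_1^2(\delta_1-\lambda)=:\delta_\epsilon>0$, which is a positive constant independent of $i$ once $r$ is fixed. For the tester: the code is defined by the local constraints $f_e\in C_e$ at each edge $e$, each of which involves only the $r=|B|$ (or $|A|$) squares containing $e$; thus $C_i$ is LDPC with constraint size $r$, and the natural tester picks a uniformly random edge $e$ (or a random vertex $g$), reads the $r$ (resp.\ $r^2$) squares touching it, and accepts iff the local view lies in $C_e$ (resp.\ in $C_0=C_1\otimes C_1$). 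This makes $q(C_i)=q_\epsilon$ a constant ($q_\epsilon = r^2$ for the vertex tester), independent of $i$.

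\medskip

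\noindent\textbf{Local testability — the main obstacle.} The crux is proving $\kappa(C_i)\ge\kappa_\epsilon$, i.e.\ that for every $f\in\bF_2^S$ the rejection probability of the vertex tester is at least $\kappa_\epsilon\cdot\dist(f,C_i)$. The strategy is a ``local-to-global'' argument propagating the local testability of $C_0=C_1\otimes C_1$ — which is precisely the agreement-testability parameter $\sigma_1$ from Proposition~\ref{pro:base-code} — up to $C_i$, using the expansion machinery of Section~\ref{sec:expanders}. Concretely: given $f$, for each vertex $g$ let $w_g\in C_0$ be a closest codeword to the local view $f_g$, defining an error pattern; each square $s=[a,g,b]=[a',g',b']$ is seen from four vertices, and when two adjacent vertices disagree on $s$ one gets a ``corrupted'' edge. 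One shows that if the tester rejects with small probability $\varepsilon$, then the set $R\subseteq E$ of edges on which neighbouring local decodings disagree (suitably weighted by row/column defects, so that agreement-testability of the tensor code $C_0$ via $\sigma_1$ controls the per-vertex contribution) satisfies an inequality of the form $\langle M_\gamma 1_R,1_R\rangle\ge\delta|R|$ for an appropriate $\gamma$ and $\delta$ depending on $\sigma_1$ and $\delta_1$. Applying Proposition~\ref{pro:M} forces $|R|\le$ (small), so the decodings glue consistently into a global candidate $c$, and one bounds $\dist(f,c)$ — hence $\dist(f,C_i)$ — by $O_\epsilon(\varepsilon)$, using the edge-expander codes $C[G,A,C_1]$ inside the complex (as in the proof of Proposition~\ref{pro:SS-LRCC}) to control how local errors spread. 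The delicate points are: correctly setting up the distance $d_{rc}$ bookkeeping so that agreement-testability of $C_0$ enters with the right constant; checking the decodings along $E_A$ and along $E_B$ are mutually consistent (the matching-labels property of Remark~\ref{rem:ML} is what makes the vertex and edge pictures coincide, cf.\ Lemma~\ref{lem:C=C'}); and combining the ``$M$'' expansion (vertex-to-vertex within the underlying graph) with the ``$M^\parallel$'' expansion (edge-to-parallel-edge along squares) exactly in the convex combination $M_\gamma$ that Proposition~\ref{pro:M} handles. All constants produced — $\kappa_\epsilon$ in particular — depend only on $\delta_1,\sigma_1,r$, hence only on $\epsilon$, completing the proof.
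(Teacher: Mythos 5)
Your overall architecture is exactly the paper's: base code from Proposition~\ref{pro:base-code}, complexes from Theorem~\ref{thm:LPS} plus Proposition~\ref{pro:Cayley-expanders} (the paper simply takes $A=B=A_i\subset S_{p,q_i}$, so no separate arrangement of two generating sets is needed), rate and distance from Proposition~\ref{pro:SS-LRCC}, the vertex tester with $q=r^2$, and a local-to-global testability argument driven by the agreement testability $\sigma_1$ of $C_1\otimes C_1$ together with the operators $M$, $M^\parallel$, $M_\gamma$ and Proposition~\ref{pro:M}. The setup, rate, distance and query parts are correct as written. However, the testability argument --- which you rightly call the main obstacle --- is left as a sketch, and as sketched it has two concrete gaps.

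First, you initialize $w_g$ as a closest codeword of $C_0$ to $f_g$ and then immediately try to run the expansion argument on the resulting disagreement set $R$. This does not work: the step where agreement testability enters requires that the assignment $W=(W_g)_g$ be a \emph{local minimum} of the disagreement count $\Delta(W)$, not merely the pointwise-closest decoding of $f$. Concretely, the identity $d_{rc}((w_1,w_2),C_1\otimes C_1)=\frac{1}{2}\bigl(d_{row}(w_1,w_0)+d_{col}(w_2,w_0)\bigr)=\na(g)/2r$ (the paper's Lemma~\ref{lem:s2}) holds precisely because no replacement of $W_g$ by another codeword of $C_g$ can decrease $\Delta(W)$; with the ``closest to $f_g$'' choice the right-hand side could exceed the minimum over $w\in C_0$ and the agreement-testability bound would give you nothing about $\na(g)$. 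The paper therefore runs an iterative flip algorithm (Algorithm~\ref{dfn:algorithm}) until $\Delta$ is locally minimal, and the number of iterations is also what bounds the number of vertices whose view changes, which is needed to convert rejection probability into distance (Proposition~\ref{pro:algorithm-succeed}). Second, you apply Proposition~\ref{pro:M} in the wrong direction: its conclusion is a \emph{lower} bound $|R|\geq\frac{\delta-\lambda}{2r}|E|$, not ``$|R|\leq$ small.'' The correct logic is a dichotomy: either the correction loop terminates with $\Delta(W)=0$, in which case the glued word $F$ is a codeword and $\dist(f,C)\leq(4+8r)D(f)$; or it gets stuck at a locally minimal $W$ with $R\neq\emptyset$, in which case Lemmas~\ref{lem:locdist}--\ref{lem:s2} give $\langle M_\gamma 1_R,1_R\rangle\geq\frac{\sigma_1\delta_1}{16+\sigma_1}|R|$, Proposition~\ref{pro:M} forces $|R|$ to be a constant fraction of $|E|$, and since $|R|\leq\Delta(W^0)\leq 2D(f)|E|$ this makes $D(f)$ itself bounded below by a constant, which suffices because $\dist(f,C)\leq 1$. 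Filling in these two points would turn your outline into the paper's proof.
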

Our codes, just like the Sipser-Spielman codes \cite{SS}, come with a linear-time decoding algorithm (see Algorithm \ref{dfn:algorithm} below). 

Before proving Theorem \ref{thm:main}, let us show how to deduce from it, combined with existing knowledge from \cite{KMRS, GKORS}, the following corollary.

\begin{cor} \label{cor:main}
Let $\epsilon, \rho, \delta > 0$ be such that $1 - \epsilon \leq \rho + h(\delta) < 1$.
 There exists $\kappa_\epsilon > 0$ and $q_\epsilon \in \bN$, and an infinite family of locally testable codes $\{ C_i \}$, of lengths $n(C_i) \rightarrow \infty$, such that for any $i$, 
\begin{equation}
\rho(C_i) \geq \rho, \qquad \delta(C_i) \geq \delta, \qquad q(C_i) \leq q_\epsilon \qquad \mbox{and} \qquad \kappa(C_i) \geq \kappa_\epsilon.
\end{equation}
\end{cor}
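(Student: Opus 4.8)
The plan is to reduce Corollary~\ref{cor:main} to Theorem~\ref{thm:main} by a ``rate boosting'' construction taken from the literature on LTCs with optimal parameters, namely \cite{KMRS,GKORS}. The point is that Theorem~\ref{thm:main} already produces, for every $\epsilon>0$, an explicit family of good LTCs of rate at least $1-\epsilon$, constant normalized distance $\delta_\epsilon$, constant query complexity $q_\epsilon$ and soundness $\kappa_\epsilon$; what it does \emph{not} do is control the trade-off between $\rho$ and $\delta$. The results of \cite{KMRS} (high-rate LTCs via tensoring/distance amplification) and \cite{GKORS} (concatenation-type transformations that preserve local testability) provide exactly the missing ingredient: a generic procedure that takes a family of high-rate LTCs and transforms it into a family of LTCs achieving any point strictly below the Gilbert--Varshamov curve, at the cost of only a constant factor blow-up in query complexity and only a constant-factor loss in soundness.

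Concretely, here is how I would carry it out. First, fix $\epsilon,\rho,\delta>0$ with $1-\epsilon\le \rho+h(\delta)<1$ and invoke Theorem~\ref{thm:main} with a suitably small parameter $\epsilon'=\epsilon'(\epsilon)>0$ (chosen so that the high-rate input code is close enough to rate $1$ for the amplification step to land at the desired point), obtaining an explicit family $\{C_i'\}$ of LTCs with $\rho(C_i')\ge 1-\epsilon'$, $\delta(C_i')\ge\delta_{\epsilon'}$, $q(C_i')=q_{\epsilon'}$, $\kappa(C_i')\ge\kappa_{\epsilon'}$. Second, apply the distance-amplification / concatenation transformation of \cite{KMRS,GKORS} to $\{C_i'\}$: one concatenates each high-rate outer LTC with a short inner code lying on the Gilbert--Varshamov bound (such a code exists by \cite{G,V}, though it is only found by brute-force search over a constant-size space, which is the source of the non-explicitness mentioned in the introduction). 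A counting argument on rates shows that the resulting code has rate at least $\rho$ and normalized distance at least $\delta$ as long as $\epsilon'$ was chosen small enough and the inner code's parameters are tuned appropriately. Third, verify that local testability survives: the composed tester queries the outer tester's queries, each ``unfolded'' through the inner code, so $q$ multiplies by a constant (the inner block length), and the soundness analysis of \cite{KMRS,GKORS} shows $\kappa$ degrades by only a constant factor — giving the claimed $q_\epsilon$ and $\kappa_\epsilon$ depending only on $\epsilon$. Finally, assemble these facts into the statement.

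The main obstacle, and the reason this is stated as a corollary rather than proved from scratch, is that I would be \emph{quoting} the transformations of \cite{KMRS,GKORS} essentially as black boxes; the real content of verifying the corollary lies in checking that the hypotheses of those theorems are met by the output of Theorem~\ref{thm:main} (in particular that the high-rate LTCs we produce have the structural features — e.g.\ linearity, being LDPC, a robust or two-query-per-constraint tester — required by the amplification lemmas), and in bookkeeping the dependence of all parameters on $\epsilon$ alone. I do not expect any genuinely new combinatorial difficulty here beyond what is already in those references; the work is in the careful interfacing. One should also explicitly flag, as the introduction already does, that the inner GV-code is obtained non-constructively, so unlike Theorem~\ref{thm:main} the family $\{C_i\}$ of Corollary~\ref{cor:main} is randomized / not fully explicit — this is why the corollary's statement omits the word ``explicitly constructed.''
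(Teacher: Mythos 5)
Your proposal matches the paper's proof: Corollary~\ref{cor:main} is indeed obtained by feeding the codes of Theorem~\ref{thm:main} (with a sufficiently small $\epsilon'$) into the two black-box transformations of \cite{KMRS} (the \cite{AEL} distance-amplification step, yielding large-alphabet LTCs near the Singleton bound) and \cite{GKORS} (reduction to binary near the Gilbert--Varshamov bound). One small correction: the second step uses Thommesen's method \cite{Tho} of concatenating with \emph{random invertible linear transformations}, one per outer coordinate, rather than a single brute-force-found inner code on the GV bound --- this is the actual source of the non-explicitness you correctly flag.
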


\begin{proof}
These codes are obtained from the codes in our main theorem via two local-transformation steps. 
The first step, due to \cite{KMRS}, is to apply an expander-based distance-amplification step due to \cite{AEL} to obtain LTCs with large constant alphabet and parameters approaching the Singleton bound. 
The second step, due to \cite{GKORS}, uses Thommesen's method \cite{Tho} of concatenation with a random invertible linear transformation, to get a binary code with rate and distance approaching the Gilbert-Varshamov bound.    
\end{proof}

The codes in Theorem \ref{thm:main} will be the left/right Cayley codes built on the left/right Cayley complexes constructed in previous sections. Recall from Definition \ref{dfn:label} the notations $S(e) = \{s \in S \,:\, s\ni e\}$ for $e \in E$, and $S(g) = \{s \in S \,:\, s\ni  g \}$ for $g \in V$, and recall also the code $C_g \leq \bF_2^{S(g)}$ which is isomorphic to $C_0$, and the code $C_e\leq \bF_2^{S(e)}$ which is isomorphic to $C_1$.

The local tester is defined as follows,
\begin{dfn} \label{dfn:tester}
Let $C = C[G,A,B,C_1] \leq \bF_2^S$ be the left/right Cayley expander code w.r.t. the left/right Cayley complex $\mbox{Cay}^2(A;G;B) = (V,E,S)$, $|A| = |B| = r$, and the base code $C_1 \leq \bF_2^r$. 

Define the tester $T$ as follows:
Given $f \in \bR^S$, pick a uniformly random vertex $g \in V$, read the values of $f$ at all squares touching $g$, namely read $f|_{S(g)}$, and accept if and only if $f|_{S(g)} \in C_g$.
\end{dfn}
\begin{rem}
A local-test with even fewer queries works as well: Choose a uniformly random edge $e$, read the values of $f$ at the squares touching $e$, and accept iff $f|_{S(e)}\in C_e$. The validity of this test follows because of the robust testability of the tensor code combined with the validity of the test above, cf. \cite{BSS}. 
\end{rem}
Define the testability parameter $\kappa(C)$ of $C$ (w.r.t. the tester $T$) to be 
\begin{equation}
\kappa(C) = \min \left\lbrace \frac{D(f)}{{\dist(f,C)} } \,:\,  f \in \bF_2^S \setminus C \right\rbrace,
\end{equation}
where $\dist(f,C) = \min_{w\in C}\dist(f,w)$ 
is the normalized Hamming distance of $f$ from the code $C$ and
\begin{equation}
D(f) = \frac{1}{|V|} |\{g \in V \,:\, f_{S(g)} \not\in C_g \}| = \mathbb{P}[T \mbox{ Rejects } f],
\end{equation}
i.e. the probability that the tester rejects $f$.

\begin{thm} \label{thm:tester}
Let $C = C[G,A,B,C_1]$, where $\mbox{Cay}^2(A;G;B)$ is a $\lambda$-expander, $r = |A| = |B|$, and $C_1$ be a base code of length $r$, such that $\lambda < \frac{\sigma(C_1) \delta(C_1)}{8 + \sigma(C_1)}$.
Then $C$ is a locally testable code, w.r.t. the tester of Definition \ref{dfn:tester}, with query complexity $q(C) = r^2$ and testability parameter
\begin{equation}
\kappa(C) \geq \frac{1}{4r} \left( \frac{\sigma(C_1) \delta(C_1)}{8 + \sigma(C_1)} - \lambda \right) .
\end{equation}
\end{thm}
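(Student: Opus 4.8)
The plan is to bound $\kappa(C)$ from below by showing that whenever the rejection probability $D(f)$ is small, the word $f$ is close to the code $C$. Fix $f \in \bF_2^S$ and let $R = \{g \in V \,:\, f_{S(g)} \notin C_g\}$ be the set of rejecting vertices, so $D(f) = |R|/|V|$. The heart of the argument is to produce, for each non-rejecting vertex $g \notin R$, a canonical correction and then to glue these local corrections into a single global codeword, controlling the total number of squares that get changed. Since $f_{S(g)} \in C_g \cong C_0 = C_1 \otimes C_1$ for $g \notin R$, each edge $e \in E$ inherits from its (at most two) endpoints outside $R$ a local view that lies in $C_1$; the agreement-testability parameter $\sigma(C_1)$ is exactly the tool that quantifies how much the two endpoint-views of an edge must disagree in terms of the distance of the pair from $C_0 = C_1\otimes C_1$. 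So first I would set up, for each edge $e$, the pair $(f^{(1)}_e, f^{(2)}_e)$ coming from its two endpoints, invoke Definition \ref{dfn:agreement} to get a single consensus codeword $w_e \in C_1$ with $d_{\mathrm{row}} + d_{\mathrm{col}}$ controlled by $\frac{1}{\sigma(C_1)} d(f^{(1)}_e, f^{(2)}_e)$, and record the set of squares where $f$ disagrees with the locally-reconstructed value.

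The second step is the propagation. I would consider the set $R' \subset E$ of ``bad'' edges — those incident to a vertex of $R$, together with those edges where the two endpoint-views disagree — and bound $|R'|$ in terms of $|R|$ and the total disagreement mass. This is where the operator $M_\gamma = \gamma M + (1-\gamma) M^\parallel$ from Proposition \ref{pro:M} enters: the operator $M$ (going edge $\to$ vertex $\to$ edge) captures the ``expansion through vertices'' needed to control vertex-level inconsistency, while $M^\parallel$, which along each parallel class $E_\ell$ is the Cayley/Schreier graph adjacency operator, captures the ``row/column'' expansion of the tensor structure within a single $C_1$-fiber. One arranges the bad squares/edges into an indicator $1_{R}$ (or a related set) for which $\langle M_\gamma 1_R, 1_R\rangle \geq \delta|R|$ holds with $\delta$ essentially $\frac{\sigma(C_1)\delta(C_1)}{8+\sigma(C_1)}$, and then Proposition \ref{pro:M} forces $|R| \geq \frac{\delta - \lambda}{2r}|E|$ — contrapositively, if the rejection set is small relative to this threshold, the union of all corrected squares is a small fraction of $S$, and the corrected function is genuinely in $C$ (all local views land in the respective $C_g$). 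The choice $\gamma$ and the precise weighting of row-disagreements versus column-disagreements versus vertex-rejections is what produces the constants $8 + \sigma(C_1)$ and the factor $\frac14$; I expect the final distance bound $\dist(f,C) \leq \frac{1}{D(f)} \cdot \big(\text{stuff}\big)$ to come out as $D(f) \geq \frac{1}{4r}\big(\frac{\sigma(C_1)\delta(C_1)}{8+\sigma(C_1)} - \lambda\big) \dist(f,C)$ after tracking the edge-to-square multiplicity ($r$ squares per edge, at most $4$ edges per square), which explains the $\frac{1}{4r}$.

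The query complexity claim $q(C) = r^2$ is immediate from Definition \ref{dfn:tester}: the tester reads $f|_{S(g)}$ and $|S(g)| = |A\times B| = r^2$ by Lemma \ref{lem:label} (and Remark \ref{rem:count} shows it is never more). And the perfect-completeness half of the LTC definition — $f \in C \Rightarrow$ Accept with probability $1$ — is just Lemma \ref{lem:C=C'}, which says $C = C'[G,A,B,C_0]$ consists exactly of the $f$ with $f_{S(g)} \in C_g$ for all $g$.

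The main obstacle I anticipate is the gluing/consistency step: turning the collection of per-edge consensus codewords $\{w_e\}$ into a single globally consistent assignment on squares, and verifying that the corrected word actually lies in $C$ rather than merely being locally consistent on most edges. This requires that the ``bad'' region be not just small but \emph{expander-isolated} — i.e. its complement is connected enough (via both the $M$-walk and the $M^\parallel$-walk) that local agreement forces global agreement. Making the bookkeeping of which squares are modified tight enough to land the clean constant $\frac{\sigma(C_1)\delta(C_1)}{8+\sigma(C_1)}$, while simultaneously staying above the $\lambda$-threshold so Proposition \ref{pro:M} applies, is the delicate part; everything else is assembling the already-established Lemmas \ref{lem:AC}, \ref{lem:smooth-agreement}, Proposition \ref{pro:M}, and the agreement-testability of $C_0$ guaranteed by Proposition \ref{pro:base-code}.
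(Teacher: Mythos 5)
Your proposal assembles the right ingredients (the distance of $C_1$, the agreement testability of $C_0$, the operators $M$ and $M^\parallel$, and Proposition \ref{pro:M}), but it is missing the one idea that makes the paper's argument close: an \emph{iterative local correction algorithm} whose termination at a local optimum drives both halves of the proof. The paper (Algorithm \ref{dfn:algorithm}) assigns to each vertex $g$ the nearest codeword $W_g\in C_g$ and repeatedly replaces a single $W_g$ whenever doing so strictly decreases the number of disagreeing edges $\Delta(W)$. This resolves exactly the two points you flag as obstacles. First, the gluing: you never need to merge per-edge consensus words into a global assignment by hand; if the algorithm reaches $\Delta(W)=0$ the vertex views automatically glue to a codeword $F\in C$, and $\dist(f,C)$ is bounded by $(4+8r)D(f)$ because the number of vertices ever modified is at most the number of iterations, which is at most $\Delta(W^0)\le 2D(f)|E|$ (Lemma \ref{lem:algorithm-time}, Proposition \ref{pro:algorithm-succeed}). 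Second, and more importantly, the ``staying probability'' bound $\langle M_\gamma 1_R,1_R\rangle\ge\delta|R|$ that you want to feed into Proposition \ref{pro:M} is only provable because the final configuration is \emph{locally optimal}: in Lemma \ref{lem:s2}, the fact that no replacement of $W_g$ decreases $\Delta(W)$ is precisely what identifies $\na(g)/2r$ with $d_{rc}((w_1,w_2),C_1\otimes C_1)$, after which agreement testability converts it into a count of disputed edges in the link of $g$. For an arbitrary $f$ with small $D(f)$, as in your setup, no such inequality holds, so Proposition \ref{pro:M} cannot be invoked and your contrapositive does not get off the ground.

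A secondary issue: you apply Definition \ref{dfn:agreement} per \emph{edge}, but the local view at an edge is a vector in $\bF_2^{S(e)}\cong\bF_2^r$, not an $r\times r$ array, so agreement testability of the tensor code does not type-check there. In the paper it is applied at a \emph{vertex} $g$, whose square-neighborhood $S(g)\cong A\times B$ is the $r\times r$ grid; the two partial views $w_1\in\bF_2^A\otimes C_1$ and $w_2\in C_1\otimes\bF_2^B$ are the opinions of the $A$-neighbors and $B$-neighbors of $g$ respectively, and the consensus word is $W_g$ itself. (Also note the paper's proof actually yields the constant $16+\sigma_1$ rather than $8+\sigma_1$; the theorem statement and the propositions it rests on are not consistent on this point, but that discrepancy is in the paper, not something your argument would fix.) To repair your proposal you would need to either introduce the correction algorithm and its local-optimality property, or find a substitute argument establishing that the bad edge set of a raw input $f$ expands under $M_\gamma$ --- the latter is exactly the step you correctly identified as delicate, and it is not merely delicate but absent.
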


In order to prove Theorem \ref{thm:tester} we introduce the following correction algorithm.

\begin{alg} \label{dfn:algorithm}
On input $f \,:\, S \rightarrow \bF_2$ perform the following algorithm.
\begin{itemize}
\item[Start:] For each $g$, let $W_g\in C_g$ be chosen to be the closest element to $f|_{S(g)}$  in $C_g$ (break ties arbitrarily), and let $W=(W_g)_{g\in G}$ be the collection of local views. 
Define the number of disagreeing edges in $W$ by
\begin{equation}
\Delta(W) = |\{ e = \edge{g;\ell} \in E \,:\, W_g|_{S(e)} \ne W_{g^\ell}|_{S(e)} \}|.
\end{equation}
\item[Loop:] If there exists some $g\in G$ and a codeword $w\in C_g$ such that replacing $W_g$ by $w$ reduces $\Delta(W)$ then replace $W_g$ accordingly and repeat. If there is no such $g$, continue to the End step.
\item[End:] If $\Delta(W) > 0$, output ``far", otherwise output
\begin{equation}
F \,:\, S \rightarrow \bF_2, \qquad F(s) = W_g(s), \qquad \forall s \in S, \quad \forall g \in s.
\end{equation}
\end{itemize}
\end{alg}

Note that if Algorithm \ref{dfn:algorithm} ends with $\Delta(W) = 0$, then the definition of $F(s)$ does not depend on the choice of $g \in s$ for any $s \in S$.
Indeed, since $\Delta(W) = 0$, for any $s \in S$ and any two vertices $g,g' \in s$, if
they are connected by an edge, say $g'=g^\ell$, then $W_g(s) = W_{g^\ell}(s)$.
If they are not connected by an edge, then there is a third vertex so that $g,g^\ell,(g^\ell)^{\ell'}=g'$ is a length two path contained in $s$ and again $W_g(s) = W_{g^\ell}(s) = W_{g'}(s)$.
\begin{lem} \label{lem:algorithm-time}
Let $W^0=(W^0_g)_{g\in G} $ be the initial collection of local views in the start step of Algorithm \ref{dfn:algorithm}.
The algorithm terminates after at most $\Delta(W^0)$ iterations and
\begin{equation}
\Delta(W^0) \leq 2 D(f) |E|.
\end{equation}
\end{lem}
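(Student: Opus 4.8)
The plan is to prove the two assertions in Lemma \ref{lem:algorithm-time} separately, starting with the bound $\Delta(W^0) \leq 2 D(f)|E|$ and then deducing the termination bound.

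\textbf{Bounding $\Delta(W^0)$.} First I would observe that an edge $e = \edge{g;\ell}$ contributes to $\Delta(W^0)$ only if the two local views $W^0_g$ and $W^0_{g^\ell}$ disagree on $S(e)$; in particular, at least one of $f|_{S(g)}$ or $f|_{S(g^\ell)}$ must fail to lie in the corresponding local code (if both $f|_{S(g)} \in C_g$ and $f|_{S(g^\ell)} \in C_{g^\ell}$, then by the start step $W^0_g = f|_{S(g)}$ and $W^0_{g^\ell} = f|_{S(g^\ell)}$, and since $f$ is a single well-defined function, $f|_{S(g)}$ and $f|_{S(g^\ell)}$ automatically agree on the common squares $S(e)$). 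Let $B = \{g \in V : f|_{S(g)} \notin C_g\}$ be the set of ``rejecting'' vertices, so $|B| = D(f)|V|$. Every edge counted by $\Delta(W^0)$ has at least one endpoint in $B$. Since $\mbox{Cay}^2(A;G;B)$ has degree $2r$ at each vertex (it is $2r$-regular as the union of two $r$-regular graphs), the number of edges with an endpoint in $B$ is at most $2r|B| = 2r \cdot D(f)|V| = 2 D(f)|E|$, using $|E| = r|V|$. This gives $\Delta(W^0) \leq 2D(f)|E|$.

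\textbf{Bounding the number of iterations.} This is essentially immediate from the structure of the Loop step: $\Delta(W)$ is a non-negative integer, and each iteration strictly decreases it by at least one. Since the initial value is $\Delta(W^0)$, there can be at most $\Delta(W^0)$ iterations before the algorithm reaches the End step.

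\textbf{Expected main obstacle.} The only subtle point is the claim that a disagreeing edge forces a rejecting endpoint — one must be careful that $W^0_g$ is defined as the \emph{closest} codeword in $C_g$ to $f|_{S(g)}$, so when $f|_{S(g)} \in C_g$ we indeed have $W^0_g = f|_{S(g)}$ exactly (tie-breaking is irrelevant here since the distance is zero and uniquely achieved). Given that, the agreement of $f|_{S(g)}$ and $f|_{S(g^\ell)}$ on $S(e)$ is automatic because both are restrictions of the single function $f$. Everything else is a routine counting argument using the regularity of the complex and $|E| = r|V|$, which I would state briefly without belaboring.
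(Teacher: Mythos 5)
Your proof is correct and follows essentially the same route as the paper's: a disagreeing edge must have a rejecting endpoint (since both endpoints accepting forces both initial local views to equal the corresponding restrictions of $f$, which automatically agree on the shared squares), each vertex meets $2r$ edges so the disagreeing edges number at most $2r\,D(f)|V| = 2D(f)|E|$, and termination follows because each loop iteration strictly decreases the non-negative integer $\Delta(W)$.
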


\begin{proof}
Since each loop step in the algorithm reduces $\Delta(W) \in \bN$ by at least $1$, the number of iterations is upper bounded by $\Delta(W^0)$.
Note that  if $f|_{S(g)} \in C_g$ then $W^0_g = f|_{S(g)}$, so for any edge $e=\edge{g;\ell} \in E$, if both of its endpoints satisfy $f|_{S(g)}\in C_g$ and $f|_{S(g^\ell)} \in C_{g^\ell}$, then
$W^0_g = f|_{S(g)}$ and $(W^0)_{g^\ell} = f|_{S(g^\ell)}$ and so
 $W^0_g|_{S(e)} = f|_{S(e)}  = (W^0)_{g^\ell}|_{S(e)}$.
Therefore, if $e = \edge{g;\ell} \in E$ contributes to the count of $\Delta(W^0)$, then  either $g$ or $g^\ell$ contribute to the count of $D(f) |V|$, and since each vertex can be counted at most $2r$ times this way, we get that $\Delta(W^0) \leq 2 r D(f) |V| = 2 D(f) |E|$.
\end{proof}

\begin{pro} \label{pro:algorithm-succeed}
If Algorithm \ref{dfn:algorithm} outputs $F \in \bF_2^S$, then $F \in C = C[G,A,B,C_1]$ and
\begin{equation}
\dist(f,C) \leq (4 + 8r)\cdot D(f).
\end{equation}
\end{pro}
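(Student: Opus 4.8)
The plan is to establish the two assertions separately, and for the distance bound to track carefully how the algorithm's output $F$ differs from the input $f$.

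\textbf{Step 1: $F \in C$.} Suppose the algorithm outputs $F$, so it terminated with $\Delta(W) = 0$. As noted in the remark following Algorithm~\ref{dfn:algorithm}, $\Delta(W)=0$ forces $F(s)$ to be well-defined independently of the choice of $g \in s$, and moreover for every vertex $g$ and every $s \in S(g)$ we have $F(s) = W_g(s)$, i.e. $F|_{S(g)} = W_g \in C_g$. Since $F|_{S(g)} \in C_g$ for all $g \in V$, by definition $F \in C'[G,A,B,C_0]$, which equals $C = C[G,A,B,C_1]$ by Lemma~\ref{lem:C=C'}. This step is routine.

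\textbf{Step 2: the distance bound.} The idea is to bound $\dist(f,C) \le \dist(f,F)$ by splitting the squares $s$ on which $f(s) \ne F(s)$ according to whether the ``local corrections'' were themselves already disagreeing with $f$ at the start. Concretely, write, for a square $s$ with vertices $g_1,\dots,g_4$, that $f(s) \ne F(s)$ implies $f(s) \ne W_{g_i}(s)$ for at least one $i$ at the \emph{end} of the algorithm. So it suffices to count, over all $g \in V$, the quantity $\mathrm{wt}(f|_{S(g)} - W_g^{\mathrm{final}})$, with a factor of $4$ to account for each square being counted up to four times. The key observation is that $\mathrm{wt}(f|_{S(g)} - W_g^{\mathrm{final}}) \le \mathrm{wt}(f|_{S(g)} - W_g^0) + \mathrm{wt}(W_g^0 - W_g^{\mathrm{final}})$, and $W_g$ is only ever changed during the loop, each change reducing $\Delta(W)$ by at least $1$; so the total number of coordinates ever touched by changes to any $W_g$ is at most (number of changes)$\times r^2 \le \Delta(W^0) \cdot r^2$. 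Meanwhile $\sum_{g} \mathrm{wt}(f|_{S(g)} - W_g^0) = \sum_g \mathrm{dist}(f|_{S(g)}, C_g)\cdot |S(g)|$, and for each $g$ with $f|_{S(g)} \notin C_g$ this is at most $|S(g)| = r^2$, while for $g$ with $f|_{S(g)} \in C_g$ it vanishes; hence $\sum_g \mathrm{wt}(f|_{S(g)} - W_g^0) \le r^2 \cdot D(f)|V|$.

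\textbf{Step 3: assembling the bound.} Combining Step~2 with Lemma~\ref{lem:algorithm-time}, which gives $\Delta(W^0) \le 2 D(f)|E| = 2D(f)\cdot r|V|$ (using $|E| = r|V|$ here, or more precisely $|E|\le r|V|$), we get that the total number of squares on which $f$ and $F$ disagree is at most
\begin{equation}
\sum_{g\in V}\mathrm{wt}(f|_{S(g)} - W_g^{\mathrm{final}}) \le r^2 D(f)|V| + r^2 \Delta(W^0) \le r^2 D(f)|V| + 2r^3 D(f)|V|.
\end{equation}
Dividing by $|S| \ge \tfrac{r^2}{4}|V|$ (from Proposition~\ref{pro:SS-LRCC}) and accounting for the square-multiplicity factor, one reaches a bound of the shape $\dist(f,C) \le (4 + 8r) D(f)$; the precise bookkeeping of the multiplicative constants (the $4$ from $|S|\ge \tfrac{r^2}{4}|V|$ versus the $4$ from each square meeting up to four vertices, and whether the change-counting loses an extra factor) is what produces exactly $4 + 8r$.

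\textbf{Main obstacle.} The delicate point is Step~2: one must argue that the \emph{cumulative} number of coordinate-changes made to the $W_g$'s throughout the loop — not just the net change — is controlled by $\Delta(W^0)$. This works because each loop iteration strictly decreases the integer $\Delta(W)\ge 0$, so there are at most $\Delta(W^0)$ iterations, and each iteration alters one $W_g$ in at most $|S(g)| \le r^2$ coordinates. One must be careful that a single $W_g$ may be changed many times, but the iteration count still caps the total. Getting the constants to land on exactly $(4+8r)$ rather than something slightly worse requires being slightly clever about not double-counting — e.g. noting that $f|_{S(g)} \in C_g$ implies $W_g^0 = f|_{S(g)}$ so such vertices contribute nothing to the ``$W^0$ vs $f$'' term — which is exactly the dichotomy already exploited in Lemma~\ref{lem:algorithm-time}.
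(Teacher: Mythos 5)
Your proposal is correct and follows essentially the same route as the paper: both arguments observe that a square where $f$ and $F$ disagree must touch a vertex that either initially failed the local test ($|V_0|=D(f)|V|$ of these) or was modified during the loop (at most $\Delta(W^0)\le 2rD(f)|V|$ of these, by Lemma \ref{lem:algorithm-time}), each such vertex accounting for at most $r^2$ squares, and then divide by $|S|\ge r^2|V|/4$. The bookkeeping you left slightly open does close: $\mathrm{wt}(f-F)\le r^2(1+2r)D(f)|V|$ divided by $r^2|V|/4$ gives exactly $(4+8r)D(f)$, with no further multiplicity correction needed.
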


\begin{proof}
The fact that $F$ is a codeword of $C$ follows from Lemma \ref{lem:C=C'}, together with the construction of $F$, which guarantees that $F_g \in C_g$ for any $g \in V$.

Let $W^0$ and $W^1$ be the initial and final collections $W$ in the algorithm, respectively. 
Denote $V_0 = \{g \in V \,:\, W^0_g \ne f_g \}$ and $V_1 = \{ g \in V \,:\, W^0_g \ne W^1_g \}$.
Note that $W^1_g = F|_{S(g)}$ for any $g \in V$, and that $f|_{S(g)} = F|_{S(g)}$ for any $g \not\in V_0 \cup V_1$, hence $\frac{\mbox{wt}(f - F) }{|S|} \leq \frac{|A||B||V_0 \cup V_1|}{|S|}$, where we have used the fact that each $g\in G$ participates in $|A||B|$ squares.
By definition $|V_0| = D(f) |V|$, and since each iteration of the loop step of the algorithm affects the value of $W$ in exactly one vertex, $|V_1|$ is bounded by the number of iterations of the algorithm, which combined with Lemma \ref{lem:algorithm-time} gives $|V_1| \leq \Delta(W^0) \leq 2 D(f) |E| = 2r D(f) |V|$. 
All in all, using $|S|\geq \frac{|G||A||B|}4$ (see Remark \ref{rem:count}), we get
\begin{equation}
\mbox{dist}(f,C) \leq \frac{\mbox{wt}(f - F)}{|S|} \leq \frac{|A||B||V_0 \cup V_1|}{|S|} \leq \frac{|V_0| + |V_1|}{|V|/4} \leq  (4 + 8r) D(f).
\end{equation}
\end{proof}

\begin{pro} \label{prop:algorithm-fails}
Let $\mbox{Cay}^2(A;G;B)$ be a $\lambda$-expander and denote $\delta_1 = \delta(C_1)$ and $\sigma_1 = \sigma(C_1)$ (as defined in \eqref{eq:def-agr}).
If Algorithm \ref{dfn:algorithm} outputs ``far" on input $f \in \bF_2^S$, then
\begin{equation}
D(f) \geq  \frac{1}{4r} \left( \frac{\sigma_1 \delta_1}{16 + \sigma_1} - \lambda \right).
\end{equation}
\end{pro}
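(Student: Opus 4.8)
The plan is to bound from below the set $R\subseteq E$ of edges on which the \emph{final} collection $W=W^1=(W_g)_{g\in G}$ output by Algorithm~\ref{dfn:algorithm} disagrees, i.e.\ $R=\{\,\edge{g;\ell}\in E:\ W_g|_{S(\edge{g;\ell})}\neq W_{g^\ell}|_{S(\edge{g;\ell})}\,\}$, and then to read off the estimate on $D(f)$. Since the output is ``far'' the loop terminated with $\Delta(W)>0$, so $R\neq\emptyset$ with $|R|=\Delta(W)$, and $W$ is \emph{stable}: for each $g$ the view $W_g$ minimises, over $C_g\cong C_1\otimes C_1$, the number $r_g:=|\{e\in R:g\in e\}|$ of disagreeing edges at $g$. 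By Lemma~\ref{lem:algorithm-time} and the fact that $\Delta$ only decreases during the loop, $D(f)\ge \Delta(W^0)/(2|E|)\ge |R|/(2|E|)$; hence it suffices to show $|R|\ge \tfrac1{2r}\bigl(\tfrac{\sigma_1\delta_1}{16+\sigma_1}-\lambda\bigr)|E|$, which by Proposition~\ref{pro:M} (applied with $\delta=\tfrac{\sigma_1\delta_1}{16+\sigma_1}$) follows from the quadratic-form bound $\langle M_\gamma 1_R,1_R\rangle\ge \delta\,|R|$. Choosing $\gamma=\tfrac{16}{16+\sigma_1}$, this is equivalent to
\[
\langle \mpar 1_R,1_R\rangle\ +\ \tfrac{16}{\sigma_1}\,\langle M 1_R,1_R\rangle\ \ge\ \delta_1\,|R|,
\]
and this inequality is what the rest of the proof would establish.

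Two local facts are the inputs. \textbf{(i)} For every $e\in R$ with endpoints $g,g'$, the difference $\xi_e:=W_g|_{S(e)}+W_{g'}|_{S(e)}\in\bF_2^{S(e)}$ satisfies $\xi_e\circ\iota_e\in C_1$ (a line-restriction of a codeword of $C_0=C_1\otimes C_1$ lies in $C_1$) and $\xi_e\neq0$, so $\mbox{wt}(\xi_e)\ge\delta_1 r$: each disagreeing edge has at least $\delta_1 r$ ``disagreeing squares''. \textbf{(ii)} By stability, $W_g$ is the tensor codeword closest (in $d_{row}+d_{col}$) to the pair of neighbour-induced matrices $(f^g,h^g)$, where the rows of $f^g$ are the restrictions of the left-neighbours' views along the left edges at $g$ and the columns of $h^g$ are the restrictions of the right-neighbours' views along the right edges at $g$; since those rows and columns lie in $C_1$, Definition~\ref{dfn:agreement} yields $d(f^g,h^g)\ge\sigma_1\,d_{rc}\bigl((f^g,h^g),C_1\otimes C_1\bigr)$, while stability gives $r_g=2r\,d_{rc}\bigl((f^g,h^g),C_1\otimes C_1\bigr)$. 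Consequently the number $q_g$ of squares $[a,g,b]$ at $g$ on which the two $g$-adjacent corners $ag$ and $gb$ carry different $W$-values obeys $q_g\ge\tfrac{\sigma_1 r}{2}\,r_g$.

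The bridge between (i) and (ii) is the telescoping identity $\sum_{e\in s}\xi_e(s)=0$ on each square $s$. Fixing $e\in R$, say $e=\ledge{a_0,g}$, and its set $B_e$ of disagreeing squares $[a_0,g,b]$ (so $|B_e|\ge\delta_1 r$), the identity forces, in each such square, a second edge which is disagreeing on it; this edge is either the opposite left edge $\ledge{a_0,gb}$ — the unique $\gp$-neighbour of $e$ contributed by that square — or one of the right edges $\redge{g,b},\redge{ag,b}$, which touch the neighbours $gb,agb$ of the two endpoints of $e$. Let $B_e=B_e^{\parallel}\sqcup B_e^{M}$ be the resulting partition. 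Distinct $b$'s give distinct $\gp$-neighbours in $R$, whence $\langle\mpar 1_R,1_R\rangle\ge\tfrac1r\sum_{e\in R}|B_e^{\parallel}|$, so it remains to prove $\tfrac{16}{\sigma_1}\langle M1_R,1_R\rangle\ge\tfrac1r\sum_{e\in R}|B_e^{M}|$, i.e.\ $\langle M1_R,1_R\rangle\ge\tfrac{\sigma_1}{16r}\sum_{e\in R}|B_e^{M}|$.

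This last estimate is where the real work lies, and I expect it to be the main obstacle. A direct edge-by-edge bound is too weak: a single disagreeing edge sees only $O(r^{-2})$ of $M$-mass per disagreeing square (because $M=D^t T D$ carries two $\tfrac1{2r}$-normalisations), so one only gets $\langle M1_R,1_R\rangle\gtrsim r^{-2}\sum_g r_g^2$, while $\sum_{e\in R}|B_e^{M}|$ is comparable to $\sum_g r_g^2$ — short by a factor of order $\sigma_1 r$. Closing this gap forces a second use of (ii), this time at the \emph{receiving} end: each disagreeing vertex $g$ spawns, via its $q_g\ge\tfrac{\sigma_1 r}{2}r_g$ disagreeing squares, disagreeing edges at $\Omega(\sigma_1 r_g)$ of its distance-two ``opposite-corner'' vertices, so the disagreeing vertices cannot be spread out — they cluster, and this clustering is precisely what upgrades $\langle M1_R,1_R\rangle$ from the $O(|R|/r)$ one gets for free to the required $\Omega(\sigma_1/r)\cdot\sum_e|B_e^{M}|$ scale. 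Carrying this out carefully — calibrating the weights so that the constant is exactly $16$ (hence the denominator $16+\sigma_1$), and tracking the at-most-mild presence of degenerate squares with fewer than four edges through the bounds $|S|\ge |G|r^2/4$ and ``each edge lies in exactly $r$ squares'' (Remark~\ref{rem:count}) — is the bulk of the argument.
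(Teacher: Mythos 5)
Your skeleton is the paper's: the same reduction of $D(f)$ to $|R|$ via Lemma \ref{lem:algorithm-time}, the same application of Proposition \ref{pro:M} with $\gamma=\tfrac{16}{16+\sigma_1}$, and the same two local inputs (at least $\delta_1 r$ disagreeing squares per disputed edge, from $\delta(C_1)$ and the telescoping identity; and the stability-plus-agreement-testability bound $q_g\geq\tfrac{\sigma_1 r}{2}r_g$ at each vertex, which is exactly the paper's Lemma \ref{lem:s2}). However, the step you defer as ``the bulk of the argument'' --- the inequality $\langle M1_R,1_R\rangle\geq\tfrac{\sigma_1}{16r}\sum_{e\in R}|B_e^M|$ --- is precisely where the proof lives, and as written your proposal does not prove it; worse, your diagnosis of it is off. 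No clustering, second-moment, or ``second use of (ii) at the receiving end beyond the two endpoints of $e$'' is needed: the direct edge-by-edge bound is \emph{not} too weak, because you are undercounting $M1_R(e)$. For $e=\edge{g;\ell}$ one has $8r^2\,M1_R(e)=\m(g)+\m(g^\ell)$, where $\m(h)$ counts all label-pairs $(\ell_1,\ell_2)$ with $\edge{h^{\ell_1};\ell_2}\in R$; in particular it dominates the number of $R$-edges in the \emph{link} of $h$ (edges of squares at $h$ not touching $h$). Each of the $q_h$ squares at $h$ on which the two $h$-adjacent corners disagree forces one of its two non-$h$ edges into $R$, and distinct squares contribute distinct label-pairs to $\m(h)$, so $\m(h)\geq q_h\geq\tfrac{\sigma_1 r}{2}r_h$. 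Hence, per edge,
\begin{equation*}
M1_R(e)\;=\;\frac{\m(g)+\m(g^\ell)}{8r^2}\;\geq\;\frac{\sigma_1\,(r_g+r_{g^\ell})}{16r}\;\geq\;\frac{\sigma_1}{16r}\,|B_e^M|,
\end{equation*}
the last step because each square in $B_e^M$ contributes a distinct $R$-edge touching $g$ or $g^\ell$, so $|B_e^M|\leq r_g+r_{g^\ell}$. Summing over $e\in R$ closes your argument, and this is exactly how the paper proceeds (packaged as Lemmas \ref{lem:locdist}, \ref{lem:s2} and \ref{lem:n1n2}). So the gap is a single missing observation --- that $M=D^tTD$ evaluated at $e$ already ``sees'' the link edges of both endpoints of $e$, which is where the testability gain $\sigma_1 r/2$ enters multiplicatively --- but without it the proposal stops short of a proof at its decisive step and predicts the wrong shape for the remaining work.
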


Our proof of Proposition \ref{prop:algorithm-fails} focuses on the set $ R \subset E$ of disputed edges, i.e. the edges which contribute to $\Delta(W)$, for the final $W$ when Algorithm \ref{dfn:algorithm} outputs ``far". We will show that $R$ is large by describing a highly-mixing random walk on the edges and showing that $R$ has a large ``staying probability'' with respect to this random walk. Namely, if we are at $R$ and take a random step we will remain in $R$ with good probability. Standard expansion propagation arguments imply that $R$ takes up a constant fraction of the entire set of edges (see Lemma \ref{lem:AC}). To show that $R$ has a large ``staying probability'' we analyze its local structure, relying on the distance of $C_1$ and on the agreement testability of $C_0=C_1\otimes C_1$. 

Let us begin with some notation.
For a vertex $g\in G$, let 
\begin{equation}
\na(g) = \Card{\sett{\ell\in L}{\edge{g;\ell}\in R}}
\end{equation}
be the number of edges touching $g$ that are in $R$, and similarly for an edge $e=\edge{g;\ell}$ let 
\begin{equation}
\na(\edge{g;\ell}) = \na(g)+\na(g^\ell).
\end{equation}
For an edge $e=\edge{g;\ell}\in E$ we need to count how many edges parallel to $e$ are in $R$. 
So let 
\begin{equation}
\npar(\edge{g;\ell}) = \Card{\sett{\ell'\in L}{type(\ell')\neq type(\ell)\hbox{ and }[g^{\ell'},\ell]\in R}}.
\end{equation}
For example if $e=\redge{g,b}\in E_B$, $\npar(e)$ is the number of $a\in A$ for which $\redge{ag,b}\in R$.
The importance of the quantity $\npar(e)$ stems from the fact that 
\begin{equation}\label{eq:Mpar}
 M^\parallel 1_R(e) = \frac {\npar(e)}r,
\end{equation}
which follows directly from the definition of $M^\parallel$, see Definition \ref{def:gpar}.
\begin{lem} \label{lem:locdist}
Let $\delta_1 = \delta(C_1)$. 
Then for any $e \in R$,
\begin{equation}
\npar(e)+\na(e) \geq  \delta_1 r.
\end{equation}
\end{lem}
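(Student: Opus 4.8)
The plan is to analyze the local structure at a disputed edge $e\in R$ and show that either many parallel edges or many adjacent edges are also disputed. Let $e = \edge{g;\ell}$ with, say, $type(\ell)$ corresponding to $E_B$, so $e = \redge{g,b}$ for some $b\in B$; the other case is symmetric. Since $e\in R$, the final collection $W$ satisfies $W_g|_{S(e)} \ne W_{g^\ell}|_{S(e)}$. Both $W_g \in C_g$ and $W_{g^\ell}\in C_{g^\ell}$, and via the labelling maps $\iota_g, \iota_{g^\ell}$ these are codewords of $C_0 = C_1\otimes C_1$. Restricting to the row/column indexed by $b$, the vectors $W_g|_{S(e)}\circ\iota_e$ and $W_{g^\ell}|_{S(e)}\circ\iota_e$ are two distinct codewords of $C_1$ (here I use that $S(e)\subset S(g)$ and $S(e)\subset S(g^\ell)$, with the labelling of $S(e)$ induced compatibly from both sides via the matching-labels property, Remark \ref{rem:ML}).

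Now I would invoke the distance of $C_1$: two distinct codewords of $C_1$ differ in at least $\delta_1 r$ coordinates. Each coordinate of $C_1$ (thought of inside $S(e)$) corresponds to a square $[g^{\ell'},\ell] \in S(e)$ for some $\ell'$ with $type(\ell')\ne type(\ell)$, i.e. to a choice of $a\in A$ giving the square $[a,g,b]$, whose "parallel edge'' on the $A$-side is $\edge{g^{\ell'};\ell}=\redge{ag,b}$. So there are at least $\delta_1 r$ squares $s\in S(e)$ on which $W_g$ and $W_{g^\ell}$ disagree. The key point: for each such square $s$, let $\edge{g^{\ell'};\ell}$ be the parallel edge through $s$; if that edge is \emph{not} in $R$, then along it the final $W$ agrees, i.e. $W_{g^{\ell'}}$ and $W_{(g^{\ell'})^{\ell}}$ agree on $S(\edge{g^{\ell'};\ell})$, and in particular they pin down the value on $s$. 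I then want to argue that this forces the disagreement between $W_g(s)$ and $W_{g^\ell}(s)$ to be ``explained'' by one of the two short edges $\edge{g;\ell'}$ or $\edge{g^\ell;\ell'}$ (the two edges of the square $s$ that touch $g$ or $g^\ell$ respectively, other than $e$ itself) being in $R$ — because if all four edges of $s$ other than... more carefully, if the parallel edge through $s$ is not in $R$, consistency around the square $s$ together with $W_g(s)\ne W_{g^\ell}(s)$ forces at least one of the two other edges of $s$ (one touching $g$, one touching $g^\ell$) to be in $R$.

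Making this last implication precise is what I expect to be the main obstacle. The clean way is: the square $s$ has four edges, $e=\edge{g;\ell}$, its parallel $e'=\edge{g^{\ell'};\ell}$, and the two ``perpendicular'' edges $\edge{g;\ell'}$ and $\edge{g^{\ell};\ell'}$. The values $W_g(s)$ and $W_{g^\ell}(s)$ are both values assigned to $s$; disagreement means the two endpoints of $e$ disagree on $s$. Going around the 4-cycle $s$, if neither perpendicular edge is in $R$ nor is $e'$, then we'd have a closed chain of agreements forcing $W_g(s)=W_{g^\ell}(s)$, a contradiction. Hence for each of the $\ge \delta_1 r$ bad squares, at least one of the three other edges of $s$ lies in $R$. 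Each edge of the form $\edge{g;\ell'}$ is counted by $\na(g)$, each $\edge{g^\ell;\ell'}$ by $\na(g^\ell)$, and each parallel $e'$ by $\npar(e)$; since distinct bad squares give distinct parallel edges and distinct perpendicular edges at $g$ (resp. at $g^\ell$) — because the square $[a,g,b]$ determines $a$ hence $\ell'$ — summing gives $\npar(e) + \na(g) + \na(g^\ell) \ge \delta_1 r$, which is exactly $\npar(e)+\na(e)\ge\delta_1 r$. I would double-check the edge cases where a square has fewer than four distinct edges (Remark \ref{rem:count}), but those only make the bookkeeping easier, not harder.
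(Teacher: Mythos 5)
Your proposal is correct and follows essentially the same route as the paper's proof: use the distance of $C_1$ to get at least $\delta_1 r$ squares of $S(e)$ on which $W_g$ and $W_{g^\ell}$ disagree, then argue that for each such square a chain of agreements around the $4$-cycle would force $W_g(s)=W_{g^\ell}(s)$, so at least one of the other three edges of the square lies in $R$, each such edge contributing (distinctly, as the squares of $S(e)$ are indexed by $\ell'$) to $\npar(e)+\na(e)$. The paper's proof is exactly this, stated a bit more tersely.
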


\begin{proof}
Since $e = \edge{g;\ell} \in R$, $0 \ne (W_g|_{S(e)} - W_{g^\ell}|_{S(e)}) \in C_e$, we get that $W_g $ and $ W_{g'}$ disagree on at least $\delta_1 r$  squares in $S(e)$.
Suppose $s \in S(e)$ is such that $W_g(s) \ne W_{g^\ell}(s)$, and suppose that $s = [a,g,b]$ for some $a\in A$ and $B\in B$. 
Assume that $e=\ledge{a,g}$ (resp. $e=\redge{g,b}$). 
If all other edges of the square were not in $R$ we would get
$W_g(s) = W_{gb}(s)=W_{agb}(s)=W_{ag}(s)$ contradicting the fact that $W_g(s)\neq W_{ag}(s)$ (resp. we would get that  $W_g(s) = W_{ag}(s)=W_{agb}(s)=W_{gb}(s)$ contradicting the fact that $W_g(s)\neq W_{gb}(s)$).

Therefore, each such square adds at least one to the count of $\npar(e)+\na(e)$ and we get the required inequality.
\end{proof}

The above is a first step in showing that a single edge in $R$ implies more edges in $R$. 
If $\npar(e)$ is large then since the parallel walk is rapidly mixing we are in good shape. 
However, it could be that $\npar(e)$ is small or even zero and only $\na(e)$ is large. 
Unfortunately, the operator that moves from an edge to another edge sharing a vertex, does not mix quickly enough (it gets stuck inside the set of edges touching some vertex $g_0$ with probability $1/2$). 
Instead, we consider the slightly more complicated operator $M$ (recall Definition \ref{dfn:M-I}). 
For this we must introduce some more notation.
For a vertex $g\in G$, let 
\begin{equation}
\m(g) = \Card{\sett{(\ell_1,\ell_2)\in L^2}{\edge{g^{\ell_1};{\ell_2}}\in R}},
\end{equation}
and for an edge $e=\edge{g;\ell}$ let 
\begin{equation}
\m(e) = \m(g)+\m(g^\ell).
\end{equation}
Note that the definition of $\m(e)$ might count some edges twice.
The importance of $\m(e)$ stems from the fact that
\begin{equation}\label{eq:Mop}
 M1_R(e) = \frac {\m(e)}{8r^2}.
\end{equation}
which follows directly from the definition, see Definition \ref{def:gpar}. 
Moreover,
\begin{lem}\label{lem:n1n2} 
For every edge $e\in E$,
\begin{equation}
\frac{\na(e)}{4r}\leq 4\sigma_1^{-1} \cdot \frac{\m(e)}{8r^2}.
\end{equation}
\end{lem}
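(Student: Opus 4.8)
The plan is to prove the sharper \emph{per-vertex} inequality $\na(g)\le\frac{2}{\sigma_1 r}\,\m(g)$ for every $g\in G$, and then add it for the two endpoints $g,g^\ell$ of $e=\edge{g;\ell}$: since $\na(e)=\na(g)+\na(g^\ell)$ and $\m(e)=\m(g)+\m(g^\ell)$, this gives $\na(e)\le\frac{2}{\sigma_1 r}\m(e)$, which is exactly the assertion of the lemma after dividing by $4r$ (note $\frac{1}{2\sigma_1 r^2}=4\sigma_1^{-1}\cdot\frac1{8r^2}$). I will use throughout that $\m(g)=\sum_{\ell_1\in L}\na(g^{\ell_1})$, because $\na(g^{\ell_1})$ counts precisely the $\ell_2\in L$ with $\edge{g^{\ell_1};\ell_2}\in R$; so the per-vertex claim says a vertex with many disputed incident edges forces many disputed edges among its neighbours.

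To prove it I would fix $g$, identify $\bF_2^{S(g)}$ with $\bF_2^{A\times B}$ via $\iota_g$ (so $W_g$ becomes an element of $C_0=C_1\otimes C_1$), and record the neighbours' views of the squares through $g$: put $\Phi(a,b)=W_{ag}([a,g,b])$ for $a\in A$ and $\Psi(a,b)=W_{gb}([a,g,b])$ for $b\in B$. The first thing to check is the ``shape'' of these matrices: since $W_{ag}\in C_{ag}\cong C_1\otimes C_1$, the $a$-th row of $\Phi$ is a row of a tensor codeword, hence in $C_1$; so $\Phi\in\bF_2^A\otimes C_1$, and symmetrically $\Psi\in C_1\otimes\bF_2^B$ — exactly the admissibility needed in \eqref{eq:def-agr}. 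Next, unwinding the definition of a disputed edge: $\ledge{a,g}\in R$ iff row $a$ of $W_g$ differs from row $a$ of $\Phi$, and $\redge{g,b}\in R$ iff column $b$ of $W_g$ differs from column $b$ of $\Psi$; so $\na(g)=r\,d_{row}(W_g,\Phi)+r\,d_{col}(W_g,\Psi)$. The same bookkeeping shows replacing $W_g$ by any $w\in C_g$ alters $\Delta(W)$ by $r(d_{row}(w,\Phi)+d_{col}(w,\Psi))-\na(g)$ (only edges at $g$ are touched, and $\Phi,\Psi$ are unchanged), and this is $\ge0$ because Algorithm \ref{dfn:algorithm} has halted; minimizing over $w$ in $C_1\otimes C_1$ and applying the definition \eqref{eq:def-agr} of $\sigma_1=\sigma(C_1)$ to the pair $(\Phi,\Psi)$ then yields
\[
\na(g)\ \le\ 2r\,d_{rc}\big((\Phi,\Psi),C_1\otimes C_1\big)\ \le\ \frac{2r}{\sigma_1}\,d(\Phi,\Psi)\ =\ \frac{2}{\sigma_1 r}\,\mbox{wt}(\Phi-\Psi)
\]
(if $\Phi=\Psi$ then $\na(g)=0$ and there is nothing to prove).

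It remains to bound $\mbox{wt}(\Phi-\Psi)$ by $\m(g)$. If $\Phi(a,b)\ne\Psi(a,b)$, i.e.\ $W_{ag}$ and $W_{gb}$ disagree on the square $s=[a,g,b]$, I would walk from $ag$ to $gb$ along the two edges of $s$ other than $\ledge{a,g},\redge{gb,b^{-1}}$, namely $\redge{ag,b}$ (which joins $ag$ to $agb$) and $\ledge{a^{-1},agb}=\ledge{a,gb}$ (which joins $agb$ to $gb$): if neither were in $R$ we would get $W_{ag}(s)=W_{agb}(s)=W_{gb}(s)$, contradicting $\Phi(a,b)\ne\Psi(a,b)$, so at least one of them is in $R$. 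The first is the right edge at the neighbour $ag$ carrying label $b$, the second the left edge at the neighbour $gb$ carrying label $a$; charging $(a,b)$ accordingly, for each fixed $a$ at most $\na(ag)$ values of $b$ are charged to $ag$ and for each fixed $b$ at most $\na(gb)$ values of $a$ to $gb$, whence $\mbox{wt}(\Phi-\Psi)\le\sum_{a\in A}\na(ag)+\sum_{b\in B}\na(gb)=\m(g)$. Combined with the display above, this gives the per-vertex inequality, and hence the lemma. (The degenerate squares, where two of the four vertices coincide, need no special treatment: if two \emph{opposite} vertices coincide then $\Phi(a,b)=\Psi(a,b)$ automatically, and in any case $\redge{ag,b},\ledge{a^{-1},agb}$ are incident to $ag,gb$.)

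The crux will be the middle paragraph: recognising that $\Phi,\Psi$ have exactly the one-sided tensor structure that the agreement-testability parameter $\sigma(C_1)$ requires, that $\na(g)$ \emph{equals} $r(d_{row}(W_g,\Phi)+d_{col}(W_g,\Psi))$, and that the halting of the local-correction algorithm says precisely that $W_g$ is a joint row/column optimum for $(\Phi,\Psi)$ — everything else is bookkeeping. The only other point that needs care is routing the square-disagreement in the last paragraph through the \emph{far} side of the square, so that it is witnessed by edges incident to \emph{other} neighbours of $g$ and is therefore absorbed into $\m(g)$ rather than into the useless quantity $\na(g)$ itself.
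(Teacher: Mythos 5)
Your proposal is correct and follows essentially the same route as the paper: your $\Phi,\Psi$ are the paper's $w_1,w_2$, the identity $\na(g)=r(d_{row}(W_g,\Phi)+d_{col}(W_g,\Psi))$ together with the halting condition gives $\na(g)=2r\,d_{rc}((\Phi,\Psi),C_1\otimes C_1)$ exactly as in the paper's combined proof of Lemmas \ref{lem:s2} and \ref{lem:n1n2}, and your square-walk charging of $\mathrm{wt}(\Phi-\Psi)$ to edges in the link of $g$ is the paper's bound $d(w_1,w_2)r^2\le \m'(g)\le\m(g)$. The only cosmetic difference is that you fold the intermediate quantity $\m'(g)$ directly into $\m(g)$ via the identity $\m(g)=\sum_{\ell_1\in L}\na(g^{\ell_1})$; the arithmetic matches the stated constant.
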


The proof of this lemma relies on the agreement testability of $C_0$, which implies that if a vertex $g$ touches many $R$ edges, then there must also be many $R$ edges in its ``link''. 
By ``link" we mean the set of edges in squares containing $g$ that are not adjacent to $g$. 
These are exactly edges of the form $\edge{g^{\ell_1};\ell_2}$ where $type(\ell_1)\neq type(\ell_2)$. 
Let $\m'(g)$ count the number of such edges that land in $R$,
\begin{equation}\label{eq:link}
\m'(g) = \Card{\sett{(\ell_1,\ell_2)\in L^2}{type(\ell_1)\neq type(\ell_2)\hbox{ and }\edge{g^{\ell_1};\ell_2}\in R}}.
\end{equation}
Clearly
\begin{equation}\label{eq:mm1} 
\m(g) \geq \m'(g)
\end{equation}
and conveniently, the agreement testability of $C_0$ allows us to relate $\m'(g)$ to $\na(g)$:

\begin{lem}\label{lem:s2}
Let $\sigma_1 = \sigma(C_1)$ be as in Definition \ref{dfn:agreement}.
Then for any $g\in G$,
\begin{equation}\label{eq:ltc}
\frac{\na(g)}{2r} \leq 2\sigma_1^{-1}\cdot \frac{\m'(g)}{2r^2}.
\end{equation} 
\end{lem}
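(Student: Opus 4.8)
The plan is to leverage the agreement testability of $C_0 = C_1 \otimes C_1$ through the following dictionary around a fixed vertex $g \in G$. Use $\iota_g$ to identify $S(g)$ with $A \times B$, so that $C_g$ becomes $C_0$, i.e.\ the matrices in $\bF_2^{A\times B}$ all of whose rows and columns lie in $C_1$. Let $W$ be the terminal configuration of Algorithm \ref{dfn:algorithm} and $R$ its set of disputed edges. The neighbours of $g$ impose two matrices of values on the squares through $g$: the left neighbours $ag$ ($a\in A$) give $\Phi(a,b) := W_{ag}([a,g,b])$, and the right neighbours $gb$ ($b\in B$) give $\Psi(a,b) := W_{gb}([a,g,b])$. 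Using the equivalence $[a,g,b] = [a^{-1},ag,b]$ together with $W_{ag}\in C_{ag}$ one checks that every row of $\Phi$ lies in $C_1$, and using $[a,g,b]=[a,gb,b^{-1}]$ together with $W_{gb}\in C_{gb}$ that every column of $\Psi$ lies in $C_1$; hence $\Phi \in \bF_2^r \otimes C_1$ and $\Psi \in C_1\otimes\bF_2^r$, exactly the kind of pair appearing in the definition of $\sigma(C_1)$.

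The first main step is to identify $\na(g)$ with a row/column distance. Replacing $W_g$ by another $w\in C_g$ changes the disagreement status only of the $2r$ edges at $g$: the edge $\ledge{a,g}$ becomes disputed exactly when row $a$ of $w\circ\iota_g$ differs from row $a$ of $\Phi$, and $\redge{g,b}$ becomes disputed exactly when column $b$ of $w\circ\iota_g$ differs from column $b$ of $\Psi$. So the number of disputed edges at $g$ under the choice $w$ equals $r\big(d_{row}(\Phi,w\circ\iota_g) + d_{col}(\Psi,w\circ\iota_g)\big)$. Since $W$ is terminal this cannot be smaller than $\na(g)$ for any $w\in C_g$, and as $w$ runs over $C_g$ the matrix $w\circ\iota_g$ runs over $C_0$; therefore $W_g\circ\iota_g$ is a minimizer and
\[
\na(g) \;=\; 2r\, d_{rc}\big((\Phi,\Psi),\, C_1\otimes C_1\big).
\]

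The second main step is a square-chasing argument showing $\m'(g) \ge r^2\, d(\Phi,\Psi)$. If $\Phi(a,b)\ne\Psi(a,b)$, i.e.\ $W_{ag}([a,g,b]) \ne W_{gb}([a,g,b])$, then the two link-edges of $g$ inside the square $[a,g,b]$ — namely $\redge{ag,b}$ and $\ledge{a,gb}$ — cannot both lie outside $R$, for otherwise $W_{ag}([a,g,b]) = W_{agb}([a,g,b]) = W_{gb}([a,g,b])$. Thus each such pair $(a,b)$ contributes at least one to the count defining $\m'(g)$ in \eqref{eq:link}, which gives the bound.

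Combining the two steps with agreement testability applied to $(\Phi,\Psi)$ — the case $\Phi=\Psi$ being trivial, since then $\na(g)=0$ by the first step — yields
\[
\m'(g) \;\ge\; r^2\, d(\Phi,\Psi) \;\ge\; \sigma_1\, r^2\, d_{rc}\big((\Phi,\Psi),\, C_1\otimes C_1\big) \;=\; \tfrac{\sigma_1 r}{2}\,\na(g),
\]
which rearranges to $\frac{\na(g)}{2r} \le 2\sigma_1^{-1}\cdot\frac{\m'(g)}{2r^2}$, as claimed. The step I expect to need the most care is the first one: verifying cleanly that terminality of the algorithm forces $W_g$ to minimize ``row-distance to $\Phi$ plus column-distance to $\Psi$'', and keeping $\Phi$ and $\Psi$ on the correct sides ($\bF_2^r\otimes C_1$ versus $C_1\otimes\bF_2^r$) of the definition of $\sigma(C_1)$; the square-chasing of the second step is routine.
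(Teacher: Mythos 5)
Your proposal is correct and follows essentially the same route as the paper: your $\Phi,\Psi$ are exactly the paper's $w_1,w_2$ (the opinions of the left and right neighbours of $g$), the identity $\na(g)=2r\,d_{rc}((\Phi,\Psi),C_1\otimes C_1)$ is obtained in both via terminality of the algorithm, and the bound $\m'(g)\geq r^2 d(\Phi,\Psi)$ is the same square-chasing argument before invoking the definition of $\sigma(C_1)$. Your explicit treatment of the degenerate case $\Phi=\Psi$ is a small point the paper leaves implicit, but otherwise the two arguments coincide.
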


We will prove the above two lemmas shortly, but first let us complete the proof of Proposition \ref{prop:algorithm-fails}.

\begin{proof}[Proof of Proposition \ref{prop:algorithm-fails}]
First recall \eqref{eq:Mpar} and \eqref{eq:Mop} which allow us to express $M^\parallel 1_R(e)$ and $M1_R(e)$ in terms of $\npar(e)$ and $\m(e)$ respectively. 
Together with \eqref{eq:mm1}, Lemma \ref{lem:n1n2}, and Lemma \ref{lem:locdist} we get for each $e\in E$,
\begin{equation} \label{eq:alg-fail-4}
\left( M^\parallel + 16 \sigma_1^{-1} M \right) 1_R (e) = 
\frac{\npar(e)}{r} + 16\sigma_1^{-1}\cdot \frac{\m(e)}{8r^2}\geq 
\frac{\npar(e) + \na (e)}{r} \geq \delta_1.
\end{equation} 
Summing Equation \eqref{eq:alg-fail-4} over all $e \in R$ and multiplying by $\frac{\sigma_1}{16 + \sigma_1}$, gives
\begin{equation} \label{eq:alg-fail-5}
\left\langle \left( \frac{\sigma_1}{16 + \sigma_1} M^\parallel + \frac{16}{16 + \sigma_1}M \right) 1_R , 1_R \right\rangle  
=  \frac{\sigma_1}{16 + \sigma_1} \sum_{e \in R} \left( M^\parallel + 16 \sigma_1^{-1} M \right) 1_R (e) \geq \frac{\sigma_1}{16 + \sigma_1} \delta_1 |R|.
\end{equation} 
Therefore, by applying Proposition \ref{pro:M} together with Equation \eqref{eq:alg-fail-5},
\begin{equation} \label{eq:alg-fail-6}
|R| \geq \frac{1}{2r} \left( \frac{\sigma_1 \delta_1 }{16 + \sigma_1} - \lambda \right) |E|.
\end{equation}
Finally by Lemma \ref{lem:algorithm-time}, the fact that $\Delta(W^0) \geq \Delta(W) = |R|$ and Equation \eqref{eq:alg-fail-6}, 
\begin{equation}
D(f) \geq \frac{|R|}{2|E|} \geq \frac{1}{4r}  \left( \frac{\sigma_1 \delta_1 }{16 + \sigma_1} - \lambda \right).
\end{equation}
\end{proof}

It remains to prove the lemmas.

\begin{proof}[Proof of Lemmas \ref{lem:n1n2} and \ref{lem:s2}]
Fix $g\in G$ and define $w_0,w_1,w_2 \in \bF_2^{A \times B}$, by $w_0(a,b) = W_g([a,g,b])$, $w_1(a,b) = W_{ag}([a^{-1},ag,b])$ and $w_2(a,b) = W_{gb}([a,gb,b^{-1}])$.
Note that the three values are the ``opinions" of the three vertices $g$, $ag$ and $gb$ on the square $[a,g,b]$.
Also note that $w_0 \in C_1 \otimes C_1$, $w_1 \in \bF_2^A \otimes C_1$ and $w_2 \in C_1 \otimes \bF_2^B$. 
Furthermore $w_1(a,\cdot) \ne w_0(a,\cdot)$ if and only if $\ledge{a,g} \in R$ and $w_2(\cdot,b) \ne w_0(\cdot,b)$ if and only if $\redge{g,b} \in R$, so
\begin{equation}
\na(g) = r\cdot(d_{row}(w_1, w_0)+d_{col}(w_2 ,w_0)) .
\end{equation}

Recall the definitions of $d_{row},d_{col},d_{rc}$ from Definition \ref{dfn:agreement}.

Observe that 
$d_{row}(w_1, w_0)+d_{col}(w_2 ,w_0) \leq d_{row}(w_1, w)+d_{col}(w_2 ,w)$ for any $w \in C_0$, since otherwise  Algorithm \ref{dfn:algorithm}  would replace $W_g$ by $w_0$ and decrease $\Delta(W)$, contradicting the fact that the algorithm terminates and outputs ``far". This means that 
\begin{equation}\label{eq:n}
d_{rc}((w_1,w_2),C_1\otimes C_1) = \frac 1 2 (d_{row}(w_1, w_0)+d_{col}(w_2 ,w_0)) = \frac 1 {2r} \na(g).
\end{equation}

Next note that $w_1(a,b) \ne w_2(a,b)$ implies that either $\redge{ag,b} \in R$ or $[gb,a] \in R$, so
\begin{equation}\label{eq:ntwo}
 d(w_1,w_2)\cdot{r^2}\leq \m'(g) 
\end{equation}
Combining \eqref{eq:n} and \eqref{eq:ntwo}  with Definition \ref{dfn:agreement}, gives for any $e \in R$ 
\begin{equation} 
\frac{\na(g)}{2r} =  d_{rc}(w_1 , w_2) \leq \sigma_1^{-1} d(w_1,w_2) \leq  2\sigma_1^{-1}\cdot \frac{ \m'(g)}{2r^2}.
\end{equation}
This completes the proof of Lemma \ref{lem:s2}. To prove Lemma \ref{lem:n1n2} it remains to recall from \eqref{eq:mm1} that $\m'(g) \leq \m(g)$ and together with Lemma \ref{lem:s2} we get for any edge $e=\edge{g;\ell}$, 
\begin{equation}
\frac{ \na(e)}{4r} = \frac{ \na(g)+\na(g^\ell)}{4r} \leq 4\sigma_1^{-1}\cdot \frac{\m'(g)+\m'(g^\ell)}{8r^2} 
\leq 4\sigma_1^{-1}\cdot\frac{\m(g)+\m(g^\ell)}{8r^2}  = \sigma_1^{-1}\cdot\frac{\m(e)}{2r^2}.
\end{equation}
\end{proof}

From Propositions \ref{pro:algorithm-succeed} and \ref{prop:algorithm-fails} we get Theorem \ref{thm:tester}.

\begin{proof}[Proof of Theorem \ref{thm:tester}]
If $f \in C$, then $f_g \in C_0$ for any $g \in V$, hence $\bP[T \mbox{ Accept}] = 1$.

The query complexity of the tester $T$ is $q(C) = r^2$, since for any input $f \in \bF_2^S$ and any random vertex $g \in V$, the tester queries the local view $f_g \in \bF_2^{A \times B}$, at $r^2 = |A \times B|$ values.

Denote $\kappa = \frac{1}{4r} \left( \frac{\sigma(C_1) \delta(C_1)}{16 + \sigma(C_1)} - \lambda \right)$.
Since $\delta(C_1) \leq 1$, $0 \leq \sigma(C_1) \leq 2$ (by Lemma \ref{lem:sigma <=2}), $\lambda \geq 0$ and $r\geq 1$, we get $\kappa \leq (32 r)^{-1} \leq  (4 + 8r)^{-1}$. 
Given $f \in \bF_2^S$, apply Algorithm \ref{dfn:algorithm} to it.
If the algorithm output $F \in \bF_2^S$, we get by Proposition \ref{pro:algorithm-succeed} that
\begin{equation}
\mathbb{P}[T \mbox{ Rejects } f]  \geq (4 + 8r)^{-1} \mbox{dist}(f,C) \geq \kappa \cdot \mbox{dist}(f,C).
\end{equation}
If the algorithm outputs ``far", since $\mbox{dist}(f,C) \leq 1$, we get by Proposition \ref{prop:algorithm-fails} that
\begin{equation}
\mathbb{P}[T \mbox{ Rejects } f]  \geq \kappa  \geq \kappa\cdot  \mbox{dist}(f,C).
\end{equation}
\end{proof}

We are now in a position to prove Theorem \ref{thm:main}.

\begin{proof}[Proof of Theorem \ref{thm:main}]
Let $0 < \epsilon < \frac{1}{2}$.
By Proposition \ref{pro:base-code} there exists $\delta_1, \sigma_1 > 0$, and a code $C_1$ of length $r$, where $r$ is the first even integer $r_i$ which is larger then $10^4 \delta^{-2}\sigma^{-2}$ and such that there exists a prime $r \leq p \leq r+\sqrt{r}$, satisfying $\rho(C_1) \geq 1 - \frac{\epsilon}{4}$, $\delta(C_1) \geq \delta_1$ and $\sigma(C_1) \geq \sigma_1$.
Let $\{q_i\}_{i=2}^\infty$, be the sequence of primes in the arithmetic progression $q_i \equiv 1 \mod{4p}$ (by Dirichlet Theorem this sequence is infinite).
By Theorem \ref{thm:LPS}, for any $i$, there exists a Ramanujan generating subset $S_i \subset G_i = PSL_2(\bF_{q_i})$ of size $p+1$, and by Proposition \ref{pro:Cayley-expanders}, for any symmetric subset $A_i \subset S_i$ of size $r= |A_i|$, the left/right Cayley complex $\mbox{Cay}^2(A_i;G_i;A_i)$ is a $\lambda$-expander for $\lambda = 5 r^{-1/2} \leq \frac{\sigma_1 \delta_1 }{20}$.
By Lemma \ref{lem:sigma <=2}, $\sigma_1 \leq 2$, hence $\lambda \leq \frac{\delta_1}{10}$ and $\frac{\sigma_1 \delta_1}{16 + \sigma_1} \geq \frac{\sigma_1 \delta_1}{18}$.
Define our family of codes to be the left/right Cayley expander codes $\{C_i = C[G_i,A_i,A_i,C_1] \}_{i=2}^\infty$, of lengths $n(C_i) \geq \frac{|A_i|^2|G_i|}{4} = \frac{1}{4} r^2 (q_i^3 - q_i) \rightarrow \infty$.
By Theorem \ref{pro:SS-LRCC}, we get 
\begin{equation}
\rho(C_i) \geq 4 \rho(C_1) - 3 \geq 1 - \epsilon \qquad \mbox{and} \qquad \delta(C_i) \geq \frac{1}{4} \delta_1^2 (\delta_1 -\lambda) \geq \frac{9}{40} \delta_1^3 =: \delta_\epsilon
\end{equation}
and by Theorem \ref{thm:tester} we get
\begin{equation}
q(C_i) = r^2 =: q_\epsilon \qquad \mbox{and} \qquad  \kappa(C_i) = \frac{1}{4r} \left(\frac{\sigma_1 \delta_1}{16 + \sigma_1} - \lambda \right) \geq \frac{\delta_1\sigma_1}{720 r} := \kappa_\epsilon
\end{equation}
which completes the proof of the Theorem.
\end{proof}

\begin{rem} \label{rem:constants}
The constants in Theorem \ref{thm:main} depends on $\epsilon$ poly-logarithmically.
\end{rem}

\section{High dimensional expanders: suggestions for further research} \label{sec:problems}

The current paper is mainly elementary and almost self-contained. 
But it came up as a result of a much longer and intensive journey. 
Some interesting open problems were left aside along the way.  
It is, therefore, worthwhile to give the story here.

Although expander codes are typically not locally testable \cite{BHR} the hope was that higher dimensional versions would be.
This optimistic belief was inspired by local to global expansion behavior of certain high dimensional simplicial complexes that was uncovered already by Garland in his seminal work \cite{Gar}.

In \cite{Gar}, Garland proved a conjecture of Serre, predicting the vanishing of the cohomology of co-compact lattices in high-rank simple $p$-adic groups.  
Equivalently,  if $X$ is a finite simplicial quotient of a Bruhat-Tits building of dimension $d \geq 2$, its cohomology vanishes in dimensions $1 \leq k \leq d-1$. 
The proof of Garland is ``local-to-global'': he showed that if the links of $(d-2)$-dimensional faces have (large) spectral gap, then so do the global Laplacians of $X$. 
Namely, if $X$ is locally an expander, then it is also globally so. 
(For a purely combinatorial treatment and generalizations, see \cite{Opp}). 
Thus, the local spectral gap implies the vanishing of the cohomology. 

This ``local to global'' approach is a high-dimensional phenomenon that does not hold for graphs! 
In graphs, the local structure does not reveal any information about the global expansion. 
To illustrate this, the reader may recall the LPS Ramanujan graphs \cite{LPS} which are $(p+1)$-regular expander graphs with large girth. 
One can easily get $(p+1)$-regular graphs with large girth (and hence locally isomorphic to the LPS ones) which are far from being expanders. 
On the other hand, the Garland method shows that (strong) local expansion implies global expansion in the high dimensional case.

Inspired by all this, the idea was to construct LTC by using the local-to-global behavior of the Ramanujan complexes (\cite{LSV1, LSV2}) in an analog to the way \cite{SS} used Ramanujan graphs for LDPC codes.
For simplicity, we will describe it from now on only in dimension $2$, but one can do the same in higher dimensions.
We shall present two possible constructions of codes on Ramanujan complexes, which we call high dimensional arithmetic codes and high dimensional expander codes.

\subsection{High dimensional arithmetic codes}

Our first code will be defined using the theory of $p$-adic uniformization of Shimura varieties (see \cite{Var}).
Recall that in \cite{Mum}, Mumford used the theory of $p$-adic uniformization and deep result of Yau \cite{Yau}, to construct a (connected component of a) Shimura surface appearing as a locally symmetric quotient of $PU(2,1)$, whose topology is related to the combinatorial structure of a specific quotient of a Bruhat-Tits building of $PGL_3(\bQ_2)$. 
Our plan was to go in the opposite direction and to use the theory of Shimura surfaces to study finite quotients of Bruhat-Tits buildings. 

Let $X = \Gamma \backslash \mB$ be a finite quotient of the Bruhat-Tits building $\mB$ of $PGL_3(\bQ_p)$ by a cocompact torsion-free lattice $\Gamma \leq PGL_3(\bQ_p) = \mbox{Aut}(\mB)$. 
We suppose that $\Gamma$ is given as a congruence subgroup of a projective similitude unitary group as follows.
Let $G = PGU(A,\sigma)$ be a projective similitude unitary algebraic group over $\bQ$ w.r.t. the matrix algebra $A = M_3(E)$, $E/\bQ$ an imaginary quadratic field, assumed for simplicity to be of class number one, such that $p = \p \bar{\p}$ splits in $E$, and an involution of the second type $\sigma(g) = H^{-1} g^* H$, where $H$ is a definite non-degenerate Hermitian matrix.
Then $\Gamma = G(\bQ) \cap K^{p,\infty}$, for some compact open $\{p,\infty\}$-adelic subgroup $K^{p,\infty} \leq G(\bA^{p,\infty})$, where $\bA^{p,\infty} = \prod'_{v \ne p , \infty} \bQ_v$ is the ring of $\{p,\infty\}$-adeles of $\bQ$.
Note that $\Gamma \leq G(\bR \times \bQ_p)$ is a cocompact lattice, and since $G(\bQ_p) \cong PGL_3(\bQ_p)$ and $G(\bR) \cong PU(3)$ is compact, $\Gamma$ is a cocompact lattice in $PGL_3(\bQ_p)$.

By \cite[Section~1]{Mum}, there is a smooth projective scheme $\mX$ defined over $\bZ_p$, satisfying the following.
Let $\bar{\mX} = \mX \,(\,\mbox{mod}\, p\,)$, be the special fiber of $\mX$ defined over $\bF_p$, which is a reducible algebraic surface whose irreducible components are isomorphic to $\tilde{\bP}^2$, the blow up of the projective plane $\bP^2$ at the $p^2+p+1$ rational points (so that each rational point is replaced by a copy of $\bP^1$).
Let $\Delta(\bar{\mX})$ be the dual complex of $\bar{\mX}$, i.e. the simplicial complex whose vertices are the irreducible components of $\bar{\mX}$, two (resp. three) of which are connected by an edge (resp. form a triangle) if their intersection is non-empty.
Then $\Delta(\bar{\mX})$ is isomorphic to the quotient of the Bruhat-Tits building $X = \Gamma \backslash \mB$.

Using the theory of $p$-adic uniformization of Shimura varieties (see \cite{Var, RZ}), $\mX$ can be described in the following way.
Let $G' = PGU(A',\sigma')$ be a projective similitude unitary algebraic group over $\bQ$ w.r.t. $A' = D$ the unique division algebra of degree $3$ over $E$ which ramifies precisely at the two places above $p$, and $\sigma'$ an involution of the second type whose signature at infinity is $(2,1)$.
Note that $G'(\bR) = PU(2,1)$, $G'(\bQ_p) = D_{\p}^*$ is compact and $G'(\bA^{p,\infty}) \cong G(\bA^{p,\infty})$.
Let $\Gamma' = G'(\bQ) \cap K^\infty$, where $K^\infty = K_p K^{p,\infty} \leq G'(\bA^{\infty})$, $K^{p,\infty} \leq G'(\bA^{p,\infty})$ is the level defining $\Gamma$ above and $K_p = \mO_{D_{\p}}^* \leq G'(\bQ_p)$. 
Let $X' = \Gamma' \backslash \mB'$ be the compact quotient of the $2$-dimensional complex unit ball $\mB'$, by $\Gamma' \leq PU(2,1) = \mbox{Aut}(\mB')$.
By the theory of Shimura varieties, there exists an algebraic surface scheme $\mX'$, defined over the ring of integers of some number field, such that the complex manifold $X' = \Gamma' \backslash \mB' $ is a connected component of $\mX'(\bC)$, and such that by taking a base change to $\bZ_p$ we get our previous scheme $\mX$.

In summary, our finite quotient of the Bruhat-Tits building $X = \Gamma \backslash \mB$, can be identified with the dual complex of a connected (since $E$ is of class number one) reducible algebraic Shimura surface $\bar{\mX}/\bF_p$, i.e. $X \cong \Delta(\bar{\mX})$.
For any vertex $v\in X$ (resp. edge $e = \{v_1,v_2\} \in X$), denote by $\bP_v$ (resp. $\bP_e$) the corresponding component in $\bar{\mX}$ (resp. the intersection of $\bP_{v_1}$ and $\bP_{v_2}$).
Note that $\bP_v \cong \tilde{\bP}^2$, the blow up of the projective plane at the $p^2+p+1$ $\bF_p$-rational points, and $\bP_e \cong \bP^1$, the projective line.
Moreover, for any edge $e = \{v_1,v_2\} \in X$, the line $\bP_e$, appears in one of the two components, $\bP_{v_i}$, $i=1,2$, as an 'old' line (from $\bP^2$) and in the other as a 'new' line (i.e. a blow up of a point from $\bP^2$).

Next we define the local codes, i.e. the codes that live on the links of of the complex.

\begin{dfn} \label{dfn:HDAC-local}
(i) Let $\Omega^1 = \Omega^1(\bP^1)$ be the $\bF_p$-vector space of $1$-forms on $\bP^1$ which have only simple poles and only on the points in $\bP^1(\bF_p)$, the set of $p+1$ $\bF_p$-rational points in $\bP^1$.
For any  $t \in \bP^1(\bF_p)$, define the linear map $\mbox{res}_t \,:\, \Omega^1 \rightarrow \bF_p$, $\mbox{res}_t(\omega)$ the residue of $\omega$ at $t$, for any $\omega \in \Omega^1$, and define $\mbox{res}_1 = \oplus_{t \in \bP^1(\bF_p)} \mbox{res}_t \,:\, \Omega^1 \rightarrow  \bF_p^{\bP^1(\bF_p)}$.

(ii) Let $\Omega^2 = \Omega^2(\tilde{\bP}^2)$ be the $\bF_p$-vector space of $2$-forms on $\tilde{\bP}^2$, which have only simple poles and only along the set of $2(p^2+p+1)$ $\bF_p$-rational lines $\tilde{\bP}^2(\bF_p)$.
Note that $\tilde{\bP}^2(\bF_p) = \tilde{\bP}^2_{old}(\bF_p) \sqcup \tilde{\bP}^2_{new}(\bF_p)$, where $\tilde{\bP}^2_{old}(\bF_p)$ are the 'old' lines and $\tilde{\bP}^2_{new}(\bF_p)$ are the 'new' lines.
For any line $e \in \tilde{\bP}^2(\bF_p)$, define the linear map $\mbox{res}_e \,:\, \Omega^2 \rightarrow \Omega^1$, $\mbox{res}_e(\omega)$ is the residue of $\omega$ along $e$.
Define $\mbox{res}_2 =  \oplus_{e \in \tilde{\bP}^2_{old}(\bF_p)} \mbox{res}_e \,:\, \Omega^2 \rightarrow (\Omega^1)^{\tilde{\bP}^2_{old}(\bF_p)}$, and $\mbox{res} = \mbox{res}_1 \circ \mbox{res}_2  \,:\, \Omega^2 \rightarrow \bF_p^{\bP^1(\bF_p) \times \tilde{\bP}^2_{old}(\bF_p)}$.
\end{dfn}

We remark that $\dim_{\bF_p} \Omega^1= p$, $\dim_{\bF_p} \Omega^2= p^3$ and $\mbox{res}  \,:\, \Omega^2 \rightarrow \bF_p^{(p+1)(p^2+p+1)}$ is injective.

Finally, let us define the global code.

\begin{dfn} \label{dfn:HDAC-global}
Let $\Omega^2(\bar{\mX})$ be the $\bF_p$-vector space of $2$-forms on the algebraic surface $\bar{\mX}/\bF_p$ whose restrictions to each component $\bP_v$, $v$ a vertex in $X$, belong to $\Omega^2(\tilde{\bP}^2)$.
If $N$ is the number of vertices of $X$, then we shall consider $\Omega^2(\bar{\mX})$ as the following $\bF_p$-linear subspace
\begin{equation} 
\Omega^2(\bar{\mX}) \leq \oplus_v \Omega^2(\bP_v) \cong  \Omega^2(\tilde{\bP}^2)^{\oplus N} \leq \bF_p^{(p+1)(p^2+p+1)N}.
\end{equation}
\end{dfn}

We remark that for any $\omega \in \Omega^(\bar{\mX})$ and any edge $e = \{v_1,v_2\} \in X$, the following holds
\begin{equation} 
\mbox{res}_e \left( \omega|_{\bP_{v_1}} \right) = - \mbox{res}_e \left( \omega|_{\bP_{v_2}} \right).
\end{equation}
Therefore the number of constraints of the code $\Omega^2(\bar{\mX})$ is $(p+1)(p^2+p+1)N$, where $p+1$ is the number of constraints on a single edge and $(p^2+p+1)N$ is the number of edges in $X$.
This is equal to the number of degrees of freedom of $\Omega^2(\bar{\mX}) \leq \bF_p^{(p+1)(p^2+p+1)N}$.
Hence the constraints counting argument of \cite{SS} is of no use here, since it does not give any useful lower bound on the dimension of $\Omega^2(\bar{\mX})$.

Instead, by interpreting $\Omega^2(\bar{\mX})$ in terms of the cohomology of $\mX$ and arguing similarly to \cite{Mum}, but in the opposite direction (and relying on the result of Kazhdan \cite{Kaz}), one can prove that the dimension of $\Omega^2(\bar{\mX})$ grows linearly with $N$.

\begin{thm} \label{thm:HDAC-dim}
$\dim_{\bF_p} \Omega^2(\bar{\mX}) = (p-1)^2(p+1)N/3 - 1$.
\end{thm}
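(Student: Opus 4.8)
The plan is to identify $\Omega^2(\bar{\mX})$ with the global sections of the dualizing sheaf of $\bar{\mX}$, pass through the $p$-adic uniformization $\mX$ to the complex Shimura surface $X'$, invoke Kazhdan's vanishing of its irregularity, and close with a combinatorial Euler-characteristic count. First I would check that $\Omega^2(\bar{\mX}) = H^0(\bar{\mX},\omega_{\bar{\mX}})$: since $\mX$ is regular and $\bar{\mX}$ is a reduced divisor with normal crossings on it, $\bar{\mX}$ is Gorenstein of pure dimension $2$ with $\omega_{\bar{\mX}} = \omega_{\mX}(\bar{\mX})|_{\bar{\mX}}$, and by adjunction its restriction to a component $\bP_v\cong\tilde{\bP}^2$ is $\omega_{\tilde{\bP}^2}$ twisted by the sum of the double curves $\bP_e$ through $v$ --- exactly the sheaf of $2$-forms on $\tilde{\bP}^2$ with at worst simple poles along the $2(p^2+p+1)$ rational lines --- while gluing a global section amounts precisely to the residue matching $\mbox{res}_e(\omega|_{\bP_{v_1}}) = -\mbox{res}_e(\omega|_{\bP_{v_2}})$ along each $\bP_e = \bP_{v_1}\cap\bP_{v_2}$, i.e. Definition~\ref{dfn:HDAC-global}. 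Hence $\dim_{\bF_p}\Omega^2(\bar{\mX}) = h^0(\bar{\mX},\omega_{\bar{\mX}})$, which by Serre duality on the projective Gorenstein surface $\bar{\mX}$ equals $h^2(\bar{\mX},\mO_{\bar{\mX}})$.

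Next, since $\pi\colon\mX\to\operatorname{Spec}\bZ_p$ is proper and flat with smooth generic fibre the Shimura surface $X' = \Gamma'\backslash\mB'$, the Euler characteristic $\chi(\mO_{\mX_t})$ is constant, so $\chi(\mO_{\bar{\mX}}) = \chi(\mO_{X'})$; and Mumford's $p$-adic uniformization comparison \cite{Mum}, run in the reverse direction, identifies the coherent cohomology of $\bar{\mX}$ with that of $X'$, giving $h^2(\mO_{\bar{\mX}}) = h^2(\mO_{X'}) = p_g(X')$ and $h^1(\mO_{\bar{\mX}}) = q(X')$. The step I expect to be the crux is the arithmetic input of Kazhdan \cite{Kaz}: because $X'$ is a congruence quotient attached to the \emph{division} algebra $D$ of prime degree $3$ over $E$, no automorphic representation contributes to $H^1$, so $q(X') = 0$ (equivalently, one may use property (T) of the cocompact lattice $\Gamma\le PGL_3(\bQ_p)$). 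Together with $h^0(\mO_{\bar{\mX}}) = 1$ (as $\bar{\mX}$ is connected, $E$ having class number one) this yields $h^2(\bar{\mX},\mO_{\bar{\mX}}) = \chi(\mO_{\bar{\mX}}) - 1$; the remaining delicate point is to verify, following Mumford's analysis, that passing between $\bar{\mX}$ and $X'$ introduces no spurious $p$-torsion.

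It then remains to evaluate $\chi(\mO_{\bar{\mX}})$, which I would do combinatorially. The closed cover of $\bar{\mX}$ by its components gives the Mayer--Vietoris resolution
\begin{equation}
0 \longrightarrow \mO_{\bar{\mX}} \longrightarrow \bigoplus_{v}\mO_{\bP_v} \longrightarrow \bigoplus_{e}\mO_{\bP_e} \longrightarrow \bigoplus_{t}\mO_{\bP_t} \longrightarrow 0,
\end{equation}
where $v,e,t$ range over the vertices, edges and triangles of $X = \Delta(\bar{\mX})$; since $\chi(\mO_{\tilde{\bP}^2}) = \chi(\mO_{\bP^1}) = \chi(\mO_{\mathrm{pt}}) = 1$ this gives $\chi(\mO_{\bar{\mX}}) = \#\{v\} - \#\{e\} + \#\{t\}$. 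As $X = \Gamma\backslash\mB$ for the Bruhat--Tits building $\mB$ of $PGL_3(\bQ_p)$, whose vertex links are incidence graphs of $\bP^2(\bF_p)$, each vertex lies on $2(p^2+p+1)$ edges and $(p+1)(p^2+p+1)$ triangles, so with $N = \#\{v\}$ one has $\#\{e\} = N(p^2+p+1)$ and $\#\{t\} = \tfrac13 N(p+1)(p^2+p+1)$, whence
\begin{equation}
\chi(\mO_{\bar{\mX}}) = N\Bigl(1 - (p^2+p+1) + \tfrac13(p+1)(p^2+p+1)\Bigr) = \frac{(p-1)^2(p+1)N}{3}.
\end{equation}
Combining with the previous step, $\dim_{\bF_p}\Omega^2(\bar{\mX}) = h^2(\bar{\mX},\mO_{\bar{\mX}}) = \dfrac{(p-1)^2(p+1)N}{3}-1$.
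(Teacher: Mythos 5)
Your plan follows exactly the strategy the paper indicates for Theorem \ref{thm:HDAC-dim}. The paper in fact states this theorem \emph{without proof}, saying only that one should interpret $\Omega^2(\bar{\mX})$ via the cohomology of $\mX$, run Mumford's argument in the reverse direction, and invoke Kazhdan; your proposal is a faithful fleshing-out of that sentence. Your combinatorial count is correct: in a quotient of the $\tilde{A}_2$-building each vertex lies on $2(p^2+p+1)$ edges and $(p+1)(p^2+p+1)$ triangles, so $\chi(\mO_{\bar{\mX}}) = N\bigl(1-(p^2+p+1)+\tfrac13(p+1)(p^2+p+1)\bigr) = (p-1)^2(p+1)N/3$, which matches the stated formula once $h^0(\mO_{\bar{\mX}})=1$ and $h^1(\mO_{\bar{\mX}})=0$.

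The one genuine gap is the step you yourself flag as delicate, and it should be named more precisely. Your Mayer--Vietoris resolution, together with the vanishing of higher coherent cohomology of $\tilde{\bP}^2$, $\bP^1$ and points, gives $H^i(\bar{\mX},\mO_{\bar{\mX}}) \cong H^i(X,\bF_p)$, the simplicial cohomology of the dual complex with $\bF_p$-coefficients. Hence the theorem is \emph{equivalent} to $H^1(X,\bF_p)=0$, i.e.\ to the absence of $p$-torsion in $\Gamma^{ab}$; Kazhdan's property (T) (or $q(X')=0$ on the complex side) only gives the characteristic-zero statement that $\Gamma^{ab}$ is finite. Your route via flatness does not close this: semicontinuity gives $h^1(\mO_{\bar{\mX}})\geq h^1(\mO_{X'})$ in the wrong direction, and constancy of $\chi$ allows $h^1$ and $h^2$ to jump together, which would break the formula. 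So a separate arithmetic input controlling the $p$-part of $\Gamma^{ab}$ (possibly after passing to a finite cover, as the paper does elsewhere in this section) is still needed. Since the paper offers no proof, I cannot compare how the authors intend to supply this; as written, your plan does not yet establish it.
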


Unfortunately, one can show that, after possibly passing to a finite cover of $X$, the Hamming distance of the code $\Omega^2(\bar{\mX})$ is $O(N^{1/4})$.
This is due to the existence of an element in $\Omega^2(\bar{\mX})$ which is supported on a single apartment and the fact that, for principal congruence subgroups $\Gamma$, the size of the image of an apartment in $X$ is bounded by $|X|^{1/4}$. 
In particular, this implies that the distance of the code is not good.

To overcome this, one might consider supplementing the above construction with a variant of the Sipser-Spielman method \cite{SS}.
Namely, in Definition \ref{dfn:HDAC-local}, consider taking certain subspaces of $\Omega^1$, with good distance and rate.
This naturally leads us to our second code construction, which is much simpler to describe.

\subsection{High dimensional expander codes} 

Let $X = (V,E,T)$ be a finite $2$-dimensional simplicial complex.
For any $v \in V$ and $e \in E$, define the sets $E(v) =\{ e \in E \,:\, v \in e\}$, $T(v) =\{ t \in T \,:\, v \in t\}$ and $T(e) =\{ t \in T \,:\, e \subset t\}$.
Define the link of $X$ at $e$, to be the set $X_e = T(e)$.
Define the link of $X$ at $v$ to be the graph $X_v$, whose set of vertices is $E(v)$ and two vertices $e_1 , e_2 \in E(v)$ forms an edge if $e_1 \cup e_2 \in T(v)$.

Say that $X$ is $r$-regular (from the edges to triangles) if $|X_e| = r$ for any $e \in E$.
Denote $[r] = \{1,2,\ldots,r\}$. 
A labelling of $X$ is a collection of bijections $\Phi = \{\phi_e \,:\, X_e \rightarrow [r] \}_{ e \in E }$.
For any $e \in E$ and $v \in V$, denote by $\Phi_e = \{\phi_e\}$ and $\Phi_v = \{\phi_e \}_{v \in  e \in E }$, respectively.

\begin{dfn} \label{dfn:HDEC}
Let $X = (V,E,T)$ be a finite $2$-dimensional $r$-regular simplicial complex, $\Phi = \{\phi_e \,:\, X_e \rightarrow [r] \}_{ e \in E }$  a labelling of $X$, and $C_1 \leq \Sigma^{[r]}$ a code of length $r$ over the alphabet set $\Sigma$.

For any $e \in E$, define the (small) code on the link of $X$ at $e$, to be
\begin{equation}
C_e = C[X_e, \Phi_e, C_1] = \left\lbrace f \in \Sigma^{T(e)} \;:\;   f \circ \phi_e^{-1} \in C_1 \right\rbrace .
\end{equation}
For any $v \in V$, define the (intermediate) code on the link of $X$ at $v$, to be
\begin{equation}
C_v = C[X_v, \Phi_v, C_1] = \left\lbrace f \in \Sigma^{T(v)} \;:\;  \forall v \in e \in E ,\quad  f|_{T(e)} \circ \phi_e^{-1} \in C_1 \right\rbrace .
\end{equation}
Finally, define the (global) code on $X$, to be
\begin{equation}
C = C[X,\Phi,C_1] =  \left\lbrace f \in \Sigma^T \;:\;  \forall e \in E ,\quad  f|_{T(e)} \circ \phi_e^{-1} \in C_1 \right\rbrace .
\end{equation}
\end{dfn}

Let us now show how the rate and distance arguments of \cite{SS} can be extended to the code $C = C[X,\Phi,C_1]$, under some assumptions on $X$ and $C_1$.

If $\Sigma$ is a field and $C_1$ is a linear code, then the codes $C_e$ for any $e \in E$, $C_v$ for any $v \in V$ and $C$ are linear codes.
Assume $C_1$ is a linear code of rate $\rho(C_1) > \frac{2}{3}$.
By arguing as in \cite{SS} we get that the rate of the code $C = C[X,\Phi,C_1]$ (for any choice of $\Phi$), is lower bounded by
\begin{equation}
\rho(C) \geq 3 \rho(C_1) - 2.
\end{equation}

Let $X_0$ be a fixed graph and assume that $X_v \cong X_0$ for any vertex $v \in V$.
Then $C_v \cong C_0$ for any $v \in V$, where $C_0 = C[X_0, C_1]$ is the expander code of \cite{SS} w.r.t. the graph $X_0$ and the base code $C_1$.
Then arguing as in \cite{SS}, we get that the normalized distance of the code $C = C[X,\Phi,C_1]$, is lower bounded by
 \begin{equation}
 \delta(C) \geq \delta(C_0) (\delta(C_0) - \lambda(X)) \qquad \mbox{and} \qquad \delta(C_0) \geq \delta(C_1) (\delta(C_1) - \lambda(X_0)).
 \end{equation}

Hence, if $C_1$ is linear code of rate $> \frac{2}{3}$, the rate of argument of \cite{SS} can be applied to our codes, and if $X$ is a good enough high dimensional expander, then the distance argument of \cite{SS} can be applied to our codes.

\begin{que} \label{que:HDEC-1}
Are there $X$, $\Phi$ and $C_1$ satisfying the  above properties, such that $C[X,\Phi,C_1]$ is a locally testable code?
\end{que}

Let us be more concrete with our candidates of $X$.
Fix a large prime $p$ and take an infinite family of Ramanujan complexes $X$, quotients of the Bruhat-Tits building of $G=PGL_3(\bQ_p)$.  
The complex $X = (V,E,T)$ is a finite $2$-dimensional simplicial complex, such that $X_e \cong \bP^1$ for any $e \in E$, where $\bP^1$ is the projective line over $\bF_p$ (i.e. $X$ is $(p+1)$-regular), and $X_v \cong \bP^2$ for any $v \in V$, where $\bP^2$ is the graph of lines versus points of the projective plane over $\bF_p$.  
Then $\lambda(X_v) = \lambda(\bP^2) = p^{-1/2}$ for any $v\in V$, and since $X$ is Ramanujan $\lambda(X) \leq 6 p^{-1}$ (see \cite{GP}).

Given a small code $C_1$ of length $p+1$ and a labelling $\Phi = \{\phi_e \,:\, X_e \cong \bP^1 \}_{e \in E}$, the intermediate code $C_0 = C[\bP^2, \Phi, C_1]$ will be the expander code w.r.t. $\bP^2$, $\Phi$ and $C_1$.
In \cite{DDHR} the authors proved a method to propagating local testability from the intermediate code to the global code, assuming $X$ is a high dimensional expander.
A concrete case of  Question \ref{que:HDEC-1} is the following.

\begin{que} \label{que:HDEC-2}
Is there a linear code $C_1 \leq \bF_2^{\bP^1}$ of rate $> \frac{2}{3}$, normalized distance $> 2 \sqrt{3} p^{-1/2}$, and an appropriate labelling $\Phi$, such that the expander code $ C[\bP^2, \Phi, C_1]$ is a LTC?
\end{que}

One can ask a more general question.

\begin{que} \label{que:HDEC-3}
Can one define an LTC over the graph of lines points of the projective plane?
\end{que}

In  Questions \ref{que:HDEC-1}, \ref{que:HDEC-2} and \ref{que:HDEC-3}, the query complexity of the LTC is supposed to be little o of the number of bits of the code (= the number edges in the graph $\bP^2$).

We note that Definition \ref{dfn:HDEC} and Questions \ref{que:HDEC-1} and \ref{que:HDEC-2} can easily be generalized to higher dimensions.
However the above arguments of \cite{SS} and \cite{DDHR} shows that the rate, distance and local testability of the global code, whose bits are on the maximal faces of the complex, will follow from appropriate rate, distance and local testability conditions on the local codes, whose bits are on the links of the complex, by arguing inductively on the dimension of the complex.
Namely, the difficulty is to initialize this paradigm, which is precisely Question \ref{que:HDEC-2}.



\begin{thebibliography}{widest-label}

\bibitem[Alo]{Alo} Noga Alon.
	"Explicit expanders of every degree and size."
	Combinatorica 41.4 (2021): 447-463.
	
\bibitem[AC]{AC} Noga Alon and Fan RK Chung. 
	"Explicit construction of linear sized tolerant networks".
	Discrete Mathematics, 72.1: 15-19, 1988.
	
\bibitem[AEL]{AEL} Noga Alon, Jeff Edmonds and Michael Luby. 
	"Linear time erasure codes with nearly optimal recovery". 
	Proceedings of IEEE 36th Annual Foundations of Computer Science, 512–519, 1995.

\bibitem[Aro]{Aro} Sanjeev Arora. 
	"Probabilistic checking of proofs and the hardness of approximation problems". 
	PhD thesis, U.C. Berkeley, 1994. 

\bibitem[BFLS]{BFLS} László Babai, Lance Fortnow, Leonid Levin and Mario Szegedy. 
	"Checking computations in polylogarithmic time". 
	Proceedings of 23rd ACM Symposium on Theory of Computing, 21–31, 1991.

\bibitem[BGKSV]{BGKSV} Eli Ben-Sasson, Venkatesan Guruswami, Tali Kaufman, Madhu Sudan and Michael Viderman. 
	"Locally testable codes require redundant testers".
	SIAM Journal on Computing, 39.7: 3230-3247, 2010.	
	
\bibitem[BHR]{BHR} Eli Ben-Sasson, Prahladh Harsha and Sofya Raskhodnikova. 
	"Some 3CNF properties are hard to test".
	SIAM Journal on Computing, 35.1: 1–21, 2005.
	
\bibitem[BLR]{BLR} Manuel Blum, Michael Luby and Ronitt Rubinfeld. 
	"Self-testing/correcting with applications to numerical problems". 
	Proceedings of 23rd ACM Symposium on Theory of Computing, 73–83, 1990.

\bibitem[BE]{BE} Nikolas P. Breuckmann and Jens N. Eberhardt.
	“Balanced product quantum codes”. 
	IEEE Transactions on Information Theory, 67.10: 6653–6674, 2021.
	
\bibitem[BS]{BSS} Eli Ben-Sasson and Madhu Sudan.
	"Robust locally testable codes and products of codes".
	Random Structures \& Algorithms, 28(4):387--402, 2006.

\bibitem[BV]{BV} Eli Ben-Sasson and Michael Viderman. 
	"Tensor products of weakly smooth codes are robust." 
	Theory of Computing 5.1: 239-255, 2009.
	
\bibitem[DDHR]{DDHR} Yotam Dikstein, Irit Dinur, Prahladh Harsha and Boga Ron-Zewi,
	"Locally testable codes via high-dimensional expanders". 
	Electronic colloquium on computational complexity, 27.72, 2020.
	
\bibitem[DELLM]{DELLM} Irit Dinur, Shai Evra, Ron Livne, Alexander Lubotzky and Shahar Mozes. 
	"Locally Testable Codes with constant rate, distance, and locality". 
	Proceedings of the 54th Annual ACM SIGACT Symposium on Theory of Computing, 357-374, 2022.

\bibitem[DSW]{DSW} Irit Dinur, Madhu Sudan and Avi Wigderson. 
	"Robust local testability of tensor products of LDPC codes". 
	Proceedings of 10th International Workshop on Randomization and Computation (RANDOM), 2006.

\bibitem[Gal]{Gal} Robert G. Gallager. 
	"Low Density Parity Check Codes".
	MIT Press, Cambridge, Massachusetts, 1963.

\bibitem[Gar]{Gar} Howard Garland. 
	"p-adic curvature and the cohomology of discrete subgroups of p-adic groups". 
	Annals of Mathematics, 97: 375, 1973.
	
\bibitem[G]{G} Edgar N Gilbert. 
	"A comparison of signalling alphabets". 
	The Bell system technical journal, 31.3: 504–522, 1952.

\bibitem[Gol1]{Gol1} Oded Goldreich. 
	"Short locally testable codes and proofs (survey)". 
	ECCC Technical Report, TR05-014, 2005.
	
\bibitem[Gol2]{Gol2} Oded Goldreich. 
	"Short Locally Testable Codes and Proofs: A Survey in Two Parts". 
	Property Testing: Current Research and Surveys, Springer, Berlin, 65–104, 2010.

\bibitem[GS]{GS} Oded Goldreich and Madhu Sudan. 
	"Locally testable codes and PCPs of almost-linear length". 
	Journal of the ACM, 53.4: 558–655, 2006.	
	
\bibitem[GP]{GP} Konstantin Golubev and Ori Parzanchevski, 
	"Spectrum and combinatorics of two-dimensional Ramanujan complexes". 
	Israel Journal of Mathematics, 230.2: 583-612, 2019.
	
\bibitem[GKORS]{GKORS} Sivakanth Gopi, Swastik Kopparty, Rafael Mendes de Oliveira, Noga Ron-Zewi, and Shubhangi Saraf. 
	"Locally testable and locally correctable codes approaching the Gilbert-Varshamov bound". 
	IEEE Transactions on Information Theory, 64.8: 5813–5831, 2018.

\bibitem[Ham]{Ham} Richard W. Hamming. 
	"Error Detecting and Error Correcting Codes".
	Bell System Technical Journal, 29: 147–160, 1950.
	
\bibitem[JL]{JL} Bruce W. Jordan and Ron Livne. 
	"The Ramanujan property for regular cubical complexes". 
	Duke Mathematical Journal, 105.1: 85-103, 2000.
	
\bibitem[Kaz]{Kaz} David A. Kazhdan. 
	"Connection of the dual space of a group with the structure of its close subgroups".
	Functional analysis and its applications, 1.1: 63-65, 1967.
	
\bibitem[KMRS]{KMRS} Swastik Kopparty, Or Meir, Noga Ron-Zewi and Shubhangi Saraf. 
	"High-rate locally correctable and locally testable codes with sub-polynomial query complexity".
	Journal of the ACM, 64.2: 1-42, 2017.
	
\bibitem[Lub1]{Lub1} Alexander Lubotzky. 
	"Ramanujan complexes and high dimensional expanders".
	Japanese Journal of Mathematics, 9.2: 137-169, 2014.
	
\bibitem[LSV1]{LSV1} Alexander Lubotzky, Beth Samuels and Uzi Vishne. 
	"Ramanujan complexes of type $\tilde{A}_d$". 
	Israel Journal of Mathematics, 149.1: 267–299, 2005.
	
\bibitem[LSV2]{LSV2} Alexander Lubotzky, Beth Samuels, and Uzi Vishne. 
	"Explicit constructions of Ramanujan complexes of type $\tilde{A}_d$".
	European Journal of Combinatorics, 26.6: 965–993, 2005.
	
\bibitem[LPS]{LPS} Alexander Lubotzky, Ralph Phillips and Peter Sarnak. 
	"Ramanujan graphs".
	Combinatorica, 8: 261–277, 1988.
	
\bibitem[LZ]{LZ} Anthony Leverrier and Gilles Zémor. 
	"Quantum Tanner codes."
	arXiv preprint arXiv:2202.13641 (2022).
	
\bibitem[MS]{MS} Florence Jessie MacWilliams and Neil James Alexander Sloane. 
	"The theory of error correcting codes." 
	Vol. 16. Elsevier, 1977.

\bibitem[Moz]{Moz} Shahar Mozes. 
	"A zero entropy, mixing of all orders tiling system, symbolic dynamics and its applications". 
	Contemporary Mathematics, 135: 319–325, 1991.
	
\bibitem[Mum]{Mum} David Mumford.
	"An algebraic surface with $K$ ample, $(K^2)=9$, $p_g=q=0$". 
	American Journal of Mathematics, 101.1: 233-244, 1979.
	
\bibitem[Opp]{Opp} Izhar Oppenheim. 
	"Local spectral expansion approach to high dimensional expanders part I: Descent of spectral gaps". 
	Discrete \& Computational Geometry, 59.2: 293-330, 2018.
	
\bibitem[PK]{PK} Pavel Panteleev and Gleb Kalachev. 
	"Asymptotically good quantum and locally testable classical LDPC codes".
	Proceedings of the 54th Annual ACM SIGACT Symposium on Theory of Computing, 375-388, 2022.

\bibitem[RS]{RS} Ronitt Rubinfeld and Madhu Sudan.
	"Robust characterizations of polynomials with applications to program testing". 
	SIAM Journal of Computation, 25.2: 252–271, 1996.
	
\bibitem[RZ]{RZ} Michael Rapoport and Thomas Zink. 
	"Period Spaces for p-divisible Groups". 
	Princeton University Press (AM-141), Vol. 141, 2016.
	 
\bibitem[SS]{SS} Michael Sipser and Daniel Spielman.
	"Expander codes".
	IEEE Transactions on Information Theory, 42.6: 1710–1722, 1996.
	 
\bibitem[Tan]{Tan} R. Michael Tanner.
	"A recursive approach to low complexity codes". 
	IEEE Transactions on Information Theory, 27: 533–547, 1981.	
	
\bibitem[Tho]{Tho} Christian Thommesen.
	"The existence of binary linear concatenated codes with reed-solomon outer codes which asymptotically meet the Gilbert- Varshamov bound".
	IEEE transactions on information theory, 29.6: 850– 853, 1983.
	
\bibitem[V]{V} Rom Rubenovich Varshamov.
	"Estimate of the number of signals in error correcting codes".
	Docklady Akad. Nauk, SSSR, 117: 739–741, 1957.
	
\bibitem[Var]{Var} Yakov Varshavsky.
	"p-adic uniformization of unitary Shimura varieties".
	Publications Mathématiques de l'Institut des Hautes Études Scientifiques, 87.1: 57-119, 1998.
	
\bibitem[Yau]{Yau} Shing-Tung Yau.
	"Calabi's conjecture and some new results in algebraic geometry".
	Proceedings of the National Academy of Sciences, 74.5: 1798-1799, 1977.

\end{thebibliography}
\end{document}